\newcommand{\todom}[1]{\todo[color=green]{#1}}
\newcommand{\todoa}[1]{\todo[color=magenta]{#1}}
\newcommand{\prob}{\mathbb{P}}
\title{Entropy bounds for grammar compression} 
\author{Micha\l{} Ga\'nczorz}
{Institute of Computer Science, University of Wroc\l{}aw, Poland }{mga@cs.uni.wroc.pl}{https://orcid.org/0000-0003-4047-2486}{}
\authorrunning{M. Ga\'nczorz}
\keywords{grammar compression, lossless compression, empirical entropy, parsing, RePair} 
\newcommand{\dBref}[1]{(\hyperref[dB#1]{dB#1})\xspace}
\newcommand{\IGref}[1]{(\hyperref[IG#1]{IG#1})\xspace}
\newcommand{\IGreftwo}[2]{(\hyperref[IG#1]{IG#1--IG#2})\xspace}
\newcommand{\IGrefall}{(\hyperref[IG1]{IG1--IG3})\xspace}
\newcommand{\usecomment}[1]{#1}
\definecolor{darkgreen}{rgb}{0.0,0.7,0.0}
\newenvironment{aj}{\noindent\color{magenta} Artur:} {}
\newcommand{\artur}[1]{\usecomment{\begin{aj} #1\end{aj}\xspace}}
\newenvironment{mg}{\noindent\color{blue} Michał:} {}
\newcommand{\michal}[1]{\usecomment{\begin{mg} #1\end{mg}\xspace}}
\newcommand{\nont}{{\mathcal N}}
\newcommand{\nonterminals}[1]{{|\nont(#1)|}}
\DeclareMathOperator{\grhs}{{rhs}}
\newcommand{\rhs}[1]{|\grhs(#1)|}
\newcommand{\seq}[3]{\ensuremath{{#1}_{#2},\ldots,{#1}_{#3}}}
\newcommand{\algofont}[1]{\textnormal{\selectfont\sffamily#1}}
\providecommand{\twodots}{\mathinner{\ldotp\ldotp}}
\providecommand{\Ocomp}{{\mathcal O}}
\DeclareMathOperator{\gc}{gc}
\newcommand{\RePair}{\algofont{Re-Pair}\xspace}
\newcommand{\Greedy}{\algofont{Greedy}\xspace}
\newcommand{\Sequitur}{\algofont{Sequitur}\xspace}
\newcommand{\Sequential}{\algofont{Sequential}\xspace}
\newcommand{\LongestMatch}{\algofont{LongestMatch}\xspace}
\newcommand{\lzss}{\algofont{LZ77}\xspace}
\newcommand{\lzso}{\algofont{LZ78}\xspace}
\newcommand{\cnfenc}{{incremental}\xspace}
\DeclareMathOperator{\Lengths}{Lengths}
\newlength{\figurewidth}
\newlength{\smallfigurewidth}
\begin{document}


\maketitle

\begin{abstract}
Grammar compression represents a string as a context-free grammar.
Achieving compression requires encoding such grammar
as a binary string; there are a few commonly used encodings.
We bound the size of practically used encodings for several heuristical compression methods,
including \RePair and \Greedy algorithms:
the standard encoding of \RePair,
which combines entropy coding and special encoding of a grammar,
achieves $1.5|S|H_k(S)$, where $H_k(S)$ is $k$-th order entropy of $S$.
We also show that by stopping after some iteration we can achieve $|S|H_k(S)$.
This is particularly interesting, as it explains
a phenomenon observed in practice:
introducing too many nonterminals causes the bit-size to grow.
We generalize our approach to
other compression methods
like \Greedy and a wide class of irreducible grammars
as well as to other practically used bit encodings
(including naive, which uses fixed-length codes).
Our approach not only proves the bounds
but also partially explains why \Greedy and \RePair
are much better in practice than other grammar based methods.
In some cases, we argue that our estimations are optimal.
The tools used in our analysis are of independent interest:
we prove the new, optimal bounds
on the zeroth-order entropy of the parsing of a string.
\end{abstract}


\section{Introduction}
Grammar compression is a type of dictionary compression
in which we represent the input as a context-free grammar
producing exactly the input string.
%
Computing the smallest grammar
(according to the sum of productions lengths)
is known to be NP-complete~\cite{SmallestGrammar,SLPapproxNPhard,SLPNPnew}
and best approximation algorithms~\cite{SLPaprox,SmallestGrammar,simplegrammar,SLPaproxSakamoto}
have logarithmic approximation ratios,
which renders them impractical.
On the other hand, we have heuristical algorithms~\cite{RePair,Sequitur,apostolico1998some,kieffer2000grammar,lz78,RePairforEntropy},
which achieve compression ratios comparative to other dictionary methods~\cite{RePair,apostolico1998some},
despite having poor theoretical guarantees~\cite{SmallestGrammar,SLPaproxrevisited}.
Heuristics use various bit encodings,
e.g.\ 
Huffman coding
~\cite{RePair,apostolico1998some},
which is standard for other compression methods as well.

The apparent success of heuristics fuelled theoretical
research that tried to explain or evaluate their performance:
one branch of such an approach tried to estimate 
their approximation ratios
~\cite{SmallestGrammar,SLPaproxrevisited}
(no heuristic was shown to achieve logarithmic ratio though),
the other tried to estimate the bit-size of their output,
which makes it possible to compare them with other compressors as well.

In general, estimating the bit-size of any compressor is hard.
However, this is not the case for compressors based on the (higher-order) entropy,
which includes Huffman coding, arithmetical coding, \algofont{PPM}, and others.
In their case, the bit-size of the output is very close to the ($k$-order) empirical entropy
of the input string ($H_k$),
thus instead of comparing to those compressors, we can compare the size of
the output with the $k$-order entropy of the input string. 
Moreover, in practice, it seems that $k$-order entropy is a good estimation
of possible compression rate of data
and entropy-based compression is widely used.
Furthermore, there are multiple results that tie the output of
a compressor to $k$-order entropy.
Such analysis was carried out for
\algofont{BWT}~\cite{ManziniBurrowsWheeler}, \lzso~\cite{KosarajuManzini99}, \lzss~\cite{szpankowski1993asymptotic} compression methods.
In some sense,
the last two
generalised classical results
from information theory on \algofont{LZ}-algorithms coding for ergodic sources~\cite{wyner1994sliding,lz78},
a similar work was also performed for a large class of grammar compressors~\cite{kieffer2000grammar} with a certain encoding.


When discussing the properties of grammar compressors,
it is crucial to distinguish the grammar construction and grammar bit-encoding.
There are much fewer results that link the performance of grammar compressors
to $k$-th order entropy than to optimal grammar size,
despite the wide popularity of grammar-based methods
and the necessity of using proper bit encodings in any practical compressor.
The possible reasons are that heuristical compressors often employ their own,
practically tested and efficient, yet difficult to analyze, encodings.

\paragraph{Our contribution}
The main result of this work is showing that
a large spectrum of grammar compressors combined
with different practically-used encodings achieve bounds related to $|S|H_k(S)$.
The considered encodings are:
fully naive, in which each letter and nonterminal 
is encoded with code of length $\lceil \log |G| \rceil$;
naive, in which the starting production string is encoded with an entropy coder
and remaining grammar rules are encoded as in the fully naive variant;
incremental, in which the starting string is encoded with an entropy coder
and the grammar is encoded separately with a specially designed encoding
(this is the encoding used originally for \RePair~\cite{RePair});
and entropy coding, which entropy-encodes the concatenation
of productions' right-hand sides.

First, we show that 
\RePair stopped at the right moment achieves
$|S|H_k(S) + o(|S| \log \sigma)$ bits for $k = o(\log_\sigma |S|)$
for naive, incremental, and entropy encoding.
Moreover, at this moment, the size of the dictionary is $\Ocomp(|S|^c)$, $c < 1$,
where $c$ affects linearly the actual value
hidden in $o(|S| \log \sigma)$.
This implies that grammars produced by \RePair
and similar methods (usually) have a small alphabet,
while many other compression algorithms,
like~\lzso, do not have this property~\cite{lz78};
this is of practical importance, as encoding a~dictionary can be costly.
Then we prove that in general \RePair's output size can be bounded by $1.5|S|H_k(S) +
o(|S| \log \sigma)$ bits for both incremental and entropy encoding.

For \Greedy we give similar bounds:
it achieves $1.5|S|H_k(S) + o(|S|\log \sigma)$
bits with entropy coding,
$2|S|H_k(S) + o(|S|\log \sigma)$ using fully naive encoding
and if stopped after $\Ocomp(|S|^c)$ iterations
it achieves $|S|H_k(S) + o(|S|\log \sigma)$ bits with entropy encoding.
The last result is of practical importance:
each iteration of \Greedy requires $\Ocomp(|S|)$ time,
and so we can reduce the running time from $\Ocomp(|S|^2)$ to $\Ocomp(|S|^{1+c})$
while obtaining comparable, if not better, compression rate.

Stopping the compressor during its runtime seems counter-intuitive 
but is consistent with empirical observations~\cite{DBLP:conf/cpm/GonzalezN07,ImprovementsRePairGrammar}
and our results shed some light on the reasons behind this phenomenon.
Furthermore, there are approaches suggesting partial decompression
of grammar-compressors in order to achieve better (bit) compression rate~\cite{ruleReductionSequential}.

Yet, as we show, this does not generalize to all grammar compressors:
in general, stopping a grammar compressor after a certain iteration
or partially decompressing the grammar such that the new grammar has
at most  $\Ocomp(|S|^c)$ nonterminals, $c < 1$,
could yield a grammar of size $\Theta(|S|)$.
In particular, we show that there are \emph{irreducible grammars}~\cite{kieffer2000grammar}
which have this property.
Comparing to \RePair and \Greedy, both while stopped after $\Ocomp(|S|^c)$ iterations
produce a \emph{small} grammar of size at most $\Ocomp(|S|/\log_\sigma |S|)$
which is an information theoretic bound for a grammar size for any $S$.
Moreover, the fact that the grammar is small is
crucial in proving the aforementioned bounds.
This hints why \RePair and \Greedy in practice produce grammars of smaller size
compared to other grammar compressors:
they significantly decrease the size of the grammar in a small number of steps.

Finally, we apply our methods to the general class of irreducible grammars~\cite{kieffer2000grammar}
and show that entropy coding of an irreducible grammar 
uses at most $2 |S|H_k(S) + o(|S|\log \sigma )$ bits
and fully naive encoding uses 
at most $6 |S|H_k(S) + o(|S|\log \sigma)$ bits.
No such general bounds were known before.

To prove our result we show a generalization of
the result by Kosaraju and Manzini~\cite[Lem.~2.3]{KosarajuManzini99},
which in turn generalizes Ziv's Lemma~\cite[Section~13.5.5]{ElementfofInformationTheory2006}.
\begin{theorem}[cf.~{\cite[Lem.~2.3]{KosarajuManzini99}}]
	\label{thm:main_estimation}
	Let $S$ be a string over $\sigma$-sized alphabet,
	$Y_S = y_1 y_2 \ldots y_{|Y_S|}$ its parsing,
	and $L = \Lengths(Y_S)$ is a word $|y_1| |y_2| \ldots  |y_{|Y_S|}|$
	over the alphabet $\{1, \twodots, |S|\}$.~Then: 
	\begin{equation*}
	|Y_S| H_0(Y_S)
	\leq 
	|S|H_k(S) + |Y_S|k\log \sigma + |L| H_0(L) \enspace .
	\end{equation*}
	Moreover, if $k = o(\log_\sigma |S|)$
	and $|Y_S| = \Ocomp\left({|S|}/{\log_\sigma |S|}\right)$ then:
	\begin{equation*}
	|Y_S| H_0(Y_S)
	\leq 
	|S|H_k(S) + \Ocomp\left(
	\frac{|S| k \log \sigma}{\log_\sigma |S|}
	+ \frac{|S|\log \log_\sigma |S|}{\log_\sigma |S|}
	\right)
	\leq
	|S|H_k(S) + o(|S| \log \sigma) \enspace ;
	\end{equation*}
	the same bound applies to the Huffman coding of string $S'$ obtained by replacing
	different phrases of $Y_S$ with different letters. 
\end{theorem}

Comparing to~\cite[Lem.~2.3]{KosarajuManzini99},
our proof is more intuitive, simpler
and removes the greatest pitfall of the previous result:
the dependency on the highest frequency of a phrase in the parsing.

The idea of a parsing is strictly related to dictionary compression methods,
as most of these methods pick some substring and replace it with a new symbol,
creating a phrase in a parsing (e.g.\ Lempel-Ziv algorithms).
Thus, we use Theorem~\ref{thm:main_estimation} to estimate the size of the entropy-coded output
of grammar compressors.
We also use Theorem~\ref{thm:main_estimation} in a less trivial way:
with its help, we can \emph{lower bound} the zeroth-order entropy of the string,
we use such lower bounds when proving upper-bounds for both \Greedy and \RePair.
This is one of the crucial steps in showing $1.5|S|H_k(S) + o(|S|\log\sigma)$ bound
as it allows us to connect the entropy of the string with bit encoding of the grammar.
The general idea is that when estimating the lower bound of string $S$,
it is sometimes easier to lower bound the entropy of some parsing of $S$,
thus we set $k=0$ in Theorem~\ref{thm:main_estimation}
and show that there exists a parsing with large entropy.
By Theorem~\ref{thm:main_estimation} it follows that zeroth-order entropy
of a parsing should lower bound 
the entropy of the string (modulo small factors).

An interesting corollary of Theorem~\ref{thm:main_estimation}
is that, for $k = 0$, for \emph{any} parsing $Y_S$ the entropy $|Y_S|H_0(Y_S)$
may be larger than $|S|H_0(S)$ by at most $|L| H_0(L)$,
which is $o(S)$ for $|L|=o(|S|)$,
i.e.\ applying parsing to $S$ cannot increase encoding cost significantly.

Theorem~\ref{thm:main_estimation} 
is an important tool to connect parsing entropy to $k$-order entropy of the input,
as it holds for \emph{every} parsing.
On the other hand, it is not difficult to give examples of strings $S$ and their parsings $Y_S$
for which the bound from Theorem~\ref{thm:main_estimation} is not tight,
i.e.\ there are strings where the~inequality holds even without the $|Y_S|k\log \sigma$ summand.
One such example is de Bruijn sequence with any parsing where phrases
are no longer than $\log_\sigma |S|$.
Thus it is natural to ask whether the bound can be improved,
e.g.\ if we are allowed to \emph{choose} a parsing.
Theorem~\ref{thm:mean_entropy} gives a (partial) positive answer.

\begin{theorem}
	\label{thm:mean_entropy}
	Let $S$ be a string over $\sigma$-sized alphabet.
	Then for any integer $l$
	we can construct a~parsing $Y_S = y_1 y_2 \ldots y_{|Y_S|}$ of size $|Y_S| \leq
	\left\lceil |S|/l \right\rceil + 1$ satisfying:
	\begin{equation*}
	|Y_S| H_0(Y_S)
	\leq
	|S| \frac{\sum_{i=0}^{l-1} H_i(S)}{l} + \Ocomp(\log |S|)
	\enspace . \end{equation*}
	Moreover, $|y_1|, |y_{|Y_S|}|\leq l$ and all the other phrases have length $l$.
\end{theorem}
Theorem~\ref{thm:mean_entropy}
can be readily applied to text encodings with random access based on parsing
the string into equal-length phrases~\cite{FerraginaV07SimpStat,GonzalezNStatistical,GrossiDynamicIndexes}
and improve their performance guarantee
from
$|S|H_k(S) + \Ocomp\left(|S|/\log_\sigma |S| \cdot  (k \log \sigma + \log \log |S|)  \right)$
to $\frac{|S|}{l} \sum_{i=0}^{l-1} H_i(S) + \Ocomp(\log |S| + |S|\log \log |S|/\log_\sigma |S|)$
(such structures consists of entropy coded string and additional structures
which use $\Ocomp\left(|S| \log \log |S|/\log_\sigma |S| \right)$ bits).

As implied by the recent work~\cite{ganczorzEntropyLowerBounds},
the bounds neither in Theorem~\ref{thm:main_estimation} nor in Theorem~\ref{thm:mean_entropy}
can be improved,
moreover some of our estimations
on grammars' size are optimal as well
(e.g.\ those achieving $|S|H_k(S) + o(|S|\log\sigma)$).

\paragraph{Related results}
With the use of previously known tools~\cite{KosarajuManzini99}
it was shown that
\RePair with fully naive encoding achieves
$2|S|H_k(S) + o(|S| \log \sigma)$ bits for $k = o(\log_\sigma n)$~\cite{RePairentropy}.
Recently it was shown~\cite{ochoa2018irreducible}
that irreducible grammars with the encoding of Kieffer and Yang~\cite{kieffer2000grammar}
achieve $|S|H_k(S) + o(|S| \log \sigma)$
this was obtained by adapting
the aforementioned results of Kieffer and Yang~\cite{kieffer2000grammar}.
Note that this result is a simple corollary of Theorem~\ref{thm:main_estimation},
which predates that result:
for a string $S$ Kieffer and Yang showed~\cite{kieffer2000grammar} that their encoding
consumes at most $|Y_S|H_0(Y_S) + \Ocomp(|Y_S|)$ bits
for a parsing $Y_S$ induced by an irreducible grammar;
they also showed that $|Y_S| = \Ocomp (|S|/\log_\sigma |S|)$.
Combining those facts with Theorem~\ref{thm:main_estimation} yields the bound
of $|S|H_k(S) + o(|S|\log \sigma)$.

While the bound from~\cite{ochoa2018irreducible} seems stronger at a first sight,
it is of a different flavour:
it applies to a sophisticated encoding that is not used in practice
and it requires a grammar to be in the irreducible form,
so, for instance, it applies to \RePair only after some postprocessing of the grammar
(which can completely change grammar structure)
and for \Greedy it is assumed that the algorithm is run to the end,
while the original formulation suggested stopping the compressors after 
certain iteration~\cite{apostolico1998some,apostolicoGreedy2}.
Furthermore, it seems hard to apply Kieffer and Yang encoding to compressed data structures.
Lastly, the bound seems to be weakly connected to real-life performance,
as it qualifies a wide range of algorithms into the same class 
despite the fact that many of them 
perform differently in practice.

The last disadvantage seems to be especially important:
by exploiting the properties of \RePair and \Greedy
we were able to prove that they 
achieve $|S|H_k(S)$ with the factor $1.5$ compared
to $2$ for the irreducible grammars with the entropy encoding
(and for $k=0$ it can be shown that these bounds are tight);
moreover, we show that \RePair and \Greedy stopped after certain iteration have a small dictionary, again, contrary to irreducible grammars.

\section{Strings and their parsing}
\label{sec:definitions}
%
%
A string is a sequence of elements, called \emph{letters},
from a finite set, called \emph{alphabet},
and it is denoted as $w = w_1w_2\cdots w_k$,
where each $w_i$ is a letter, its length $|w|$ is $k$;
alphabet's size is denoted by $\sigma$,
the alphabet is usually clear from the context, 
$\Gamma$ is used when some explicit name is needed.
We often analyse words over different alphabets.
For two words $w,w'$ the $ww'$ denotes their concatenation.
By $w[i \twodots j]$ we denote $w_iw_{i+1}\cdots w_{j}$,
this is a \emph{subword} of $w$;
$\epsilon$ denotes the empty word.
For a pair of words  $v, w$ $|w|_v$ denotes
the number of different (possibly overlapping) subwords of $w$ equal to $v$;
if $v = \epsilon$ then for uniformity of presentation
we set $|w|_\epsilon = |w|$. 
Usually, $S$ denotes the input string.

A grammar compression represents an input string $S$
as a context-free grammar generating a unique string $S$.
The right-hand side of the start nonterminal is called
a \emph{starting string} (often denoted as $S'$)
and by the grammar $G$ we mean the collection of other rules,
together they are called the \emph{full grammar} and denoted as $(S', G)$.
For a nonterminal $X$ we denote its rule right-hand side by
$\grhs(X)$.
For a grammar $G$ its right-hand sides, denoted by $\grhs(G)$,
is the set of $G$'s productions' right-hand sides,
for a full grammar $(S', G)$ we also add $S'$ to the set
and denote it by $\grhs(S',G)$.
We denote by $\rhs{G}$ ($\rhs{S',G}$) the sum of strings' lengths in $\grhs(G)$ ($\grhs(S',G)$, respectively).
By $\nonterminals{G}$ we denote the number of nonterminals.
A grammar is in CNF if all right-hand sides of a~grammar consist of two symbols. 
The string generated by a nonterminal $X$ in a grammar $G$
is the \emph{expansion} $\exp(X)$ of $X$ in $G$.
All reasonable grammar compressions guarantee that
no two nonterminals have the same expansion, $\rhs{X} > 1$ for each nonterminal $X$ 
and that each nonterminal occurs in the derivation tree;
we implicitly assume this in the following.
A grammar (full grammar) $G$ ($(S', G)$, respectively) is said to be \emph{small},
if $\rhs{G} = \mathcal{O}({|S|}/{\log_\sigma |S|})$
($\rhs{S', G} = \mathcal{O}({|S|}/{\log_\sigma |S|})$, respectively).
This matches the folklore information-theoretic lower bound on the size of a grammar for a string of length $|S|$.

In practice, the starting string and the grammar may be encoded in different ways,
especially when $G$ is in CNF, hence we make a distinction between these two.
Note that for many (though not all) grammar compressors
both theoretical considerations and proofs as well as practical evaluation
show that the size of the grammar is considerably smaller than the
size of the starting string.

A \emph{parsing} of a string $S$ is any representation
$S = y_1y_2\cdots y_c$, where each $y_i \neq \epsilon$.
We call $y_i$ a phrase.
We denote a parsing as $Y_S = \seq y 1 c$
and treat it as a word of length $c$ over the alphabet $\{\seq y 1 c\}$;
in particular $|Y_S| = c$ is its size.
Then $\Lengths(Y_s) = |y_1|,|y_2|,\ldots, |y_c| \in \mathbb N^*$
and we treat it as a word over the alphabet $\{1,2,\ldots, |S|\}$.
In grammar compression,
the starting string $S'$ of a full grammar $(S', G)$ induces a parsing
of input string, also a concatenation of strings in $\grhs{(S',G)}$ induces
a parsing plus some additional nonterminals.

\begin{definition}
\label{def:k_order_entropy}
For a word $S=s_1 s_2 \ldots s_{|S|}$ its \emph{$k$-order empirical entropy} is:
\begin{equation*}
H_k(S)
= - \frac{1}{|S|} \sum_{i=k+1}^{|S|} \log \prob(s_i | S[i-k\ldots i-1])
\enspace,
\end{equation*}
where $\prob(s_i | w)$ is the empirical probability of letter $s_i$ occurring
in context $w$,
i.e.\ $\prob(s_i | S[i-k\ldots i]) = \frac{|S|_{S[i-k\ldots i]}}{|S|_{S[i-k\ldots i-1]}}$
if $S[i-k\ldots i-1]$ is not a suffix of $S$ or
$S[i-k\ldots i-1] = \epsilon$;
and $\prob(s_i | S[i-k\ldots i]) = \frac{|S|_{S[i-k\ldots i]}}{|S|_{S[i-k\ldots i-1]}-1}$
otherwise.
\end{definition}

We are mostly interested in the $H_k$ of the input string $S$
and in the $H_0$ for parsing $Y_S$ of $S$.
The first is a natural measure of the input,
to which we compare the size of the output of the grammar compressor,
and the second corresponds to the size of the entropy coding of the starting string returned by a grammar compressor.

\section{Entropy upper bounds on grammar compression}\label{sec:upper-bounds-on-grammar-compression}
In this Section, we use Theorem~\ref{thm:main_estimation}
to bound the size of grammars returned by popular compressors:
\RePair~\cite{RePair}, \Greedy~\cite{apostolico1998some}
and a general class of methods that produce irreducible grammars.
We consider a couple of natural and simple practically used bit-encodings of grammars.
Interestingly, we obtain bounds of~form $\alpha |S| H_k(S) + o(|S| \log \sigma)$
for a constant $\alpha$
for all of them.
\subsection{Encoding of grammars}
\label{subsec:encoding_of_grammars}
We first discuss the possible encodings of the (full) grammar
and give some estimations on their sizes.
All considered encodings assume a linear order on nonterminals:
if $\grhs(X)$ contains $Y$ then $X \geq Y$.
In this way, we can encode the rule
as a sequence of nonterminals on its right-hand side,
in particular instead of storing the nonterminal names
we store the positions in the above ordering.

The considered encodings are simple, natural
and correspond closely or exactly to encodings used
in grammars compressors
like \RePair~\cite{RePair} or \Sequitur~\cite{nevill1997compression}.
Some other algorithms, e.g.~\Greedy~\cite{apostolico1998some},
use specialized encodings,
but at some point, they still encode grammar 
using entropy coder with some additional
information or assign codes to each nonterminal in the grammar.
Thus, most of these custom encodings are roughly
equivalent to (or not better than) entropy~coding. 

Encoding of CNF grammars deserves special attention
and is a problem investigated on its own.
It is known that a grammar $G$ in CNF can be encoded using
$\nonterminals{G} \log (\nonterminals{G}) + 2\nonterminals{G} + o(\nonterminals{G})$ bits%
~\cite{tabei2013succinctGrammar,takabatake2012variable},
which is close to the information theoretic lower bound of 
$\log (\nonterminals{G}!) + 2\nonterminals{G}$~\cite{tabei2013succinctGrammar}.
On the other hand, heuristics use simpler encodings,
e.g.~\RePair was implemented and tested with several encodings,
which were based on a division of nonterminals of $G$ into $z$ groups
$g_1, \ldots, g_{z}$ where $X \in g_i$, $X\rightarrow AB$, if and only if
$A \in g_{i-1}$ and $B \in g_{j}$ or $B \in g_{i-1}$ and $A \in g_j$,
for some $j \leq i-1$, where $g_0$ is the input alphabet.
Then each group is encoded separately.
Though no theoretic bounds were given,
we show (in the Appendix) that some of
these encodings achieve previously mentioned lower bound of
$\log (\nonterminals{G}!) + 2\nonterminals{G}$ bits (plus smaller order terms).
The above encodings are difficult to analyse due to heuristical optimisations;
instead, we analyse an incremental encoding,
which is a simplified version of one of the original methods used to encode~\RePair output~\cite{RePair}.
It matches the theoretical lower bound except for a larger constant hidden in $\Ocomp(\nonterminals{G})$.
To be precise, we consider the following encodings:
\begin{itemize}
	\item \textit{Fully naive} 
	We concatenate the right-hand sides of the full
	grammar.
	Then each nonterminal and letter are assigned bitcodes of the same length. 
	We store $\rhs{X}$ for each $X$,
	as it is often small, it is sufficient to store it in~unary.
	\item \textit{Naive} The starting string is entropy-coded,
	the rules are coded as in the fully-naive variant.
	\item \textit{Entropy-coded} The rules' right-hand sides are concatenated with the starting string and they are coded using an entropy coder.
	\item \textit{Incremental} We use this encoding only for CNF grammars,
	though it can be extended to general grammars
	as it is possible to transform any grammar to such a form.
	Given a full grammar $(S', G)$ 
	we use entropy coder for $S'$ and encoding defined below for $G$.
	It has additional requirements on the order on letters and nonterminals:
	if 
	$X \leq Y$ and $X', Y'$ are the first symbols in productions
	for $X, Y$ then $X' \leq Y'$.
	Given any grammar, this property can be achieved by permuting
	the nonterminals,
	but we must drop the assumption that the right-hand side of a given nonterminal $X$
	occurs before $X$ in the sequence.
	Then the grammar $G$ can be viewed as a sequence of nonterminals:
	$X_1 \rightarrow Z_1 U_1, 
	\ldots , X_{\nonterminals{G}} \rightarrow Z_{\nonterminals{G}} U_{\nonterminals{G}}$.
	We encode differences $\Delta_{X_i} = Z_i - Z_{i-1}$ with Elias $\delta$-codes,
	and $U_i$'s naively using $\lceil \log (\nonterminals{G} + \sigma) \rceil$ bits.
\end{itemize}

We upper-bound the sizes of grammars under various encodings;
the first two estimations are straightforward,
the third one requires some calculations.
The estimation in Lemma~\ref{lem:entropy_coding_concatenation} requires a nontrivial argument.

\begin{lemma}
	\label{lem:grammar_encodings}
	Let $S$ be a string and $(S', G)$ a full grammar that generates it.
	Then:\\
	\emph{fully naive} uses at most $ \left( \rhs{S',G}\right) \left( \log(\sigma + \nonterminals{G}) + \Ocomp(\rhs{S',G}) \right)$ bits;\\
	\emph{naive} uses at most $|S'|H_0(S') + \rhs{G}\log(\sigma + \nonterminals{G}) + \Ocomp(\rhs{S',G})$ bits;\\
	\emph{\cnfenc} uses at most $|S'|H_0(S') + \nonterminals{G}  \log (\sigma + \nonterminals{G}) +
	\Ocomp(\rhs{S',G})$~bits.
\end{lemma}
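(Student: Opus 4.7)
The plan is to treat each encoding by direct accounting, with the only nontrivial estimate being a concavity bound on Elias-coded differences in the \cnfenc case. Throughout I would exploit that every letter of $\Sigma$ appearing in $S$ must also appear in some right-hand side of $(S',G)$ and that every nonterminal owns one rule, so both $\sigma$ and $\nonterminals{G}$ are at most $\rhs{S',G}$; any additive overhead of the form $\Ocomp(\sigma+\nonterminals{G})$ can therefore be absorbed into $\Ocomp(\rhs{S',G})$ without further comment.

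For the fully naive case I would simply count: concatenating all right-hand sides gives $\rhs{S',G}$ symbols from $\Sigma \cup \nont(G)$, each stored in $\lceil \log(\sigma+\nonterminals{G})\rceil$ bits (the ceiling costing at most one extra bit per symbol), and the unary list of rule lengths contributes $\sum_{X}(\rhs{X}+1) = \rhs{G}+\nonterminals{G}$ further bits; both overheads lie in $\Ocomp(\rhs{S',G})$. The naive case reuses this for the rules and invokes Huffman coding on $S'$, which produces the payload in at most $|S'|H_0(S')+|S'|$ bits; storing the canonical code as a list of $\sigma+\nonterminals{G}$ code lengths contributes a further $\Ocomp(\sigma+\nonterminals{G})$ bits, again absorbed.

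The work for the \cnfenc case concentrates on the Elias $\delta$-coded differences $\Delta_{X_i}=Z_i-Z_{i-1}$. The crucial constraint is that the $Z_i$ are nondecreasing with $Z_i \leq \sigma+\nonterminals{G}$, so $\sum_i \Delta_{X_i} \leq \sigma+\nonterminals{G}$. Since each $\delta$-codeword for $x\geq 1$ has length $\log x + \Ocomp(\log\log(x+2))$, I would apply concavity of $\log$ (Jensen on the $\nonterminals{G}$ values $\Delta_{X_i}+1$) followed by $\log(1+x)\leq x$ to obtain
\[
\sum_i \log(\Delta_{X_i}+1) \;\leq\; \nonterminals{G}\log\!\left(1+\tfrac{\sigma+\nonterminals{G}}{\nonterminals{G}}\right) \;\leq\; \sigma+\nonterminals{G},
\]
and the same concavity trick controls the $\log\log$ correction. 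The $U_i$'s naively contribute $\nonterminals{G}\log(\sigma+\nonterminals{G}) + \Ocomp(\nonterminals{G})$ bits, and $S'$ is Huffman-coded exactly as in the naive case, completing the bound.

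The main obstacle is the \cnfenc estimate above: one must bound not only the dominant $\log\Delta_{X_i}$ terms but also the $\log\log$ Elias correction uniformly across all $i$, and verify that no individual large gap can inflate the sum. Once this concavity lemma is in hand, everything else in the three bullets reduces to bookkeeping together with the uniform inequalities $\sigma,\nonterminals{G}\leq\rhs{S',G}$.
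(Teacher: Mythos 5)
Your proposal is correct and follows essentially the same route as the paper: count symbols for the naive encodings, absorb unary rule-length overheads and the Huffman dictionary into $\Ocomp(\rhs{S',G})$ using $\sigma,\nonterminals{G}\leq\rhs{S',G}$, and bound the Elias $\delta$-coded differences by exploiting that they sum to at most $\sigma+\nonterminals{G}$. The only superficial difference is that the paper cites an external result for the sum of $\delta$-codeword lengths, whereas you re-derive it via Jensen's inequality on $\log$ and $\log\log$ --- a fine and self-contained substitution.
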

The proof idea of Lemma~\ref{lem:entropy_coding_concatenation}
is to show that $\grhs{(S', G)}$ induces a parsing of $S$,
with at most $\nonterminals{G}$ additional symbols,
and then apply Theorem~\ref{thm:main_estimation}.
The latter requires that different nonterminals have different expansions,
all practical grammar compressors have this property.
\begin{lemma}
	\label{lem:entropy_coding_concatenation}
Let $S$ be a string over an alphabet of size $\sigma$,
$k = o(\log_\sigma |S|)$
and $(S', G)$ a~full grammar generating it,
where no two nonterminals have the same expansion.
Let $S_G$ be a~concatenation of strings from $\grhs(S',G)$.
If $|S_G| = \Ocomp\left( |S|/ \log_\sigma |S| \right)$
then
$
	|S_G| H_0(S_G) \leq |S|H_k(S) + \nonterminals{G}\log |S| +  o(|S| \log \sigma) 
$
, same bound applies to the entropy coding of~$(S',G)$.
\end{lemma}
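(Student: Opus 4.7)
The plan is to construct a parsing $Y_S^\star$ of $S$ from the grammar whose phrase multiset is closely related to the symbol multiset of $S_G$, apply Theorem~\ref{thm:main_estimation} to $Y_S^\star$, and then absorb the small mismatch into the $\nonterminals{G}\log|S|$ term.

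For the construction I start with the parsing of $S$ induced by $S'$, whose phrases are the expansions $\exp(X_1),\ldots,\exp(X_{|S'|})$ of the symbols of $S'$. I then iteratively expand, in the current parsing, exactly one occurrence of each non-start nonterminal $X$: replace the phrase $\exp(X)$ by the sub-phrases $\exp(Z_1),\ldots,\exp(Z_r)$ corresponding to $\grhs(X)=Z_1\cdots Z_r$. A simple induction along the nonterminal DAG shows that every non-start nonterminal eventually becomes a phrase of the current parsing and can be expanded: $S'$ puts its nonterminals in the cut, expanding those exposes their children, and so on. After all $\nonterminals{G}$ expansions, $|Y_S^\star| = |S'| + \sum_X(|\grhs(X)|-1) = |S_G| - \nonterminals{G}$. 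Under the uniqueness-of-expansions assumption, phrases are in bijection with symbols of $S_G$, and the phrase multiset of $Y_S^\star$ agrees with the symbol multiset of $S_G$ except that each non-start nonterminal $X$ has one fewer occurrence in $Y_S^\star$ (the one consumed by its expansion).

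Applying Theorem~\ref{thm:main_estimation} to $Y_S^\star$ is then immediate: since $|Y_S^\star|\leq|S_G|=\Ocomp(n/\log_\sigma n)$ and $k=o(\log_\sigma n)$, the theorem yields $|Y_S^\star|H_0(Y_S^\star)\leq|S|H_k(S)+o(n\log\sigma)$. To close the gap between $|Y_S^\star|H_0(Y_S^\star)$ and $|S_G|H_0(S_G)$, view $S_G$ as obtained from $Y_S^\star$ (as a sequence of labels) by appending one extra copy of each non-start nonterminal. Using that $g(x)=x\log x$ satisfies $g(m+1)-g(m)\leq \log(m+1)+\Ocomp(1)$, appending a single symbol to a string $W$ of length $m$ to form $W'$ satisfies $(m+1)H_0(W')\leq mH_0(W)+\log(m+1)+\Ocomp(1)$. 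Iterating this $\nonterminals{G}$ times gives $|S_G|H_0(S_G)\leq|Y_S^\star|H_0(Y_S^\star)+\nonterminals{G}\log|S_G|+\Ocomp(\nonterminals{G})$. Since $|S_G|\leq|S|$ and $\nonterminals{G}=\Ocomp(n/\log_\sigma n)=o(n\log\sigma)$, combining with the previous bound yields the claim.

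The main obstacle I anticipate is the bookkeeping in the parsing construction: verifying that every non-start nonterminal can actually be brought into the cut and expanded, and that the resulting phrase multiset of $Y_S^\star$ matches the symbol multiset of $S_G$ up to exactly the $\nonterminals{G}$ missing occurrences (here the uniqueness assumption is essential). The appeal to Theorem~\ref{thm:main_estimation} and the entropy-increase calculation are comparatively routine.
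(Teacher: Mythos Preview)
Your approach is essentially identical to the paper's: both construct the parsing by starting from $S'$ and expanding exactly one occurrence of each nonterminal, apply Theorem~\ref{thm:main_estimation} to that parsing, and then bridge the gap to $|S_G|H_0(S_G)$ by the ``adding one symbol costs at most $\log m + \Ocomp(1)$ entropy'' argument, iterated $\nonterminals{G}$ times. The paper phrases the construction at the level of the working string $S''$ over the extended alphabet rather than at the level of phrases, but under the uniqueness-of-expansions hypothesis these two views are in bijection, so there is no substantive difference.
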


\subsection{\RePair}\label{subsec:repair}
\RePair is one of the most known grammar compression heuristics.
It starts with the input string $S$
and in each step replaces a most frequent pair $AB$ in $S$ with a new symbol $X$
and adds a rule $X\rightarrow AB$.
For the special case when $A=B$
the replacements are realized by left-to-right scan,
i.e.\ maximal substring $A^n$ is replaced
by $X^{n/2}$ for even $n$ and $X^{(n-1)/2}A$ for odd $n$;
moreover, it never replaces the pair with only two overlapping occurrences.
The choice of a pair to replace is dependent only on the overall number of occurrences,
both overlapping and non-overlapping,
thus the algorithm may replace the pair $A'A'$ which has less
non-overlapping occurrences than some other pair $AB$.

\RePair is simple, fast, and provides
compression ratio better than some of the standard
dictionary methods like \algofont{gzip}~\cite{RePair}.
It found usage in various applications~\cite{DBLP:conf/cpm/GonzalezN07,Collagesystem,DBLP:conf/spire/ClaudeN07,DBLP:journals/tcs/FischerMN09,RaymondWanPhD,Gonzalez:2015:LCS:2627368.2594408}.

We refer to the current state of $S$,
i.e.\ $S$ with some pairs replaced, as the \emph{working string}.
We prove that \RePair stopped at the right moment
achieves $H_k$ (plus some smaller terms),
using any of the:
naive, \cnfenc
or entropy encoding.
To this end, we show that 
\RePair reduces the input string to length $\frac{\alpha |S|}{\log_\sigma |S|}$
for an appropriate $\alpha$
and stop the algorithm when the string gets below this size.
Then use the fact that different nonterminals produced by \RePair
have different expansions (\cite[Lem.~4]{RePairentropy}):
At this iteration, by estimations on the number of possible different substrings, 
we show that for an adequate
$\alpha, \beta$
there exist a pair of nonterminals $AB$, with $|\exp(A)|, |\exp(B)| < (\log_\sigma |S|)/\beta$,
such that pair $AB$ occurs frequently, i.e.\ $\approx|S|^{1-2/\beta}$.
Then, we use the fact that the frequency of the most frequent pair never increases
during the algorithm runtime ({\cite[Lem.~3]{RePairentropy}}),
and so grammar size at this point must be $\approx |S|^{2/\beta}$.
Then, on the one hand, we can show 
that the encoding of grammar constructed so far
takes at most $\approx |S|^{c} \log |S|  = o(|S|), \ c < 1$,
and on the other hand Theorem~\ref{thm:main_estimation}
yields that entropy coding of the working string
is $|S|H_k(S)$ plus some smaller terms.


\begin{theorem}
	\label{thm:RePairStopped}
	Let $S$ be a string over $\sigma$-size alphabet and $k = o(\log_\sigma |S|)$.
	Let $\xi > 0, \beta > 2$ be constants.
	When the size of the working string of \RePair is first
	below $\frac{(\xi + 3\beta) |S|}{\log_\sigma |S|} + 2|S|^\frac{2}{\beta} + 1$ then 
	the number of nonterminals in the grammar is at most
	$1 + \frac{2}{\xi}(|S|^\frac{2}{\beta}\cdot \log_\sigma |S|)$;
	for every $\xi>0, \beta>2$ and $S$ such a point always exists.
	Moreover, at this moment, the size of the entropy coding of the working string
	is at most $|S|H_k(S) + o(|S| \log \sigma)$
	and the bit-size of \RePair stopped at this point
	is at most $|S|H_k(S) + o(|S| \log \sigma)$
	for naive, entropy, and incremental encoding.
\end{theorem}

Theorem~\ref{thm:RePairStopped} says that \RePair
achieves $H_k$ when stopped at the appropriate time.
Surprisingly, continuing to the end can lead to a worse compression rate.
In fact, limiting the size of the dictionary for \RePair
(as well as for similar methods) in practice results in a better compression rate for larger
files~\cite{DBLP:conf/cpm/GonzalezN07,ImprovementsRePairGrammar}.
This is not obvious, in particular,
Larsson and Moffat~\cite{RePair} believed that it is the other way around;
this belief was supported by the results on smaller-size data.
This is partially explained by Theorem~\ref{thm:RePairToTheEnd},
in which we give a $1.5|S|H_k(S) + o(|S| \log \sigma)$ bound
on \RePair run to the end with \cnfenc or entropy encoding
(a $2|S|H_k(S) + o(|S| \log \sigma)$ bound for fully naive encoding was known earlier~\cite{RePairentropy}).

Before proving Theorem~\ref{thm:RePairToTheEnd}
we first show an example, see Lemma~\ref{lem:15entropyRePairincrementallowerbound},
that demonstrates that the $1.5$ factor from Theorem~\ref{thm:RePairToTheEnd} is tight,
assuming certain encodings, for $k=0$.
The construction employs large alphabets, i.e.~$\sigma = \Theta(|S|)$,
and as Theorem~\ref{thm:RePairToTheEnd} assumes $k = o(\log_\sigma n)$,
this means that we consider $k=0$.
While such large alphabets do not reflect practical cases,
in which $\sigma$ is much smaller than $|S|$,
still, in the case of grammar compression this example gives some valuable intuition:
replacing the substring $w$ decreases the size counted
in symbols but may not always decrease encoding size,
as we have to store additional information
regarding replaced string, which is costly for some encodings.

\begin{lemma}
	\label{lem:15entropyRePairincrementallowerbound}
	There exists a family of strings $S$ such that \RePair with
	both \cnfenc and entropy encoding
	uses at least $ \frac{3}{2}|S|H_0(S) - o(|S|\log \sigma )$ bits,
	assuming that we encode the grammar of size $g$ (i.e.\ with $g$ nonterminals)
	using at least $g \log g - \Ocomp(g)$ bits.
	Moreover, $|S|H_0(S) = \Omega(|S|\log \sigma )$,
	which implies that the cost of the encoding, denoted by $\RePair(S)$, satisfies
	$\limsup_{|S|\rightarrow \infty }
	\frac{\RePair(S)}{|S|H_0(S)} \geq \frac{3}{2}$.
\end{lemma}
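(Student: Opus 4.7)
The plan is to exhibit a family of inputs that forces \RePair to pay roughly half of its bits on encoding the rules, while the direct entropy of the input is already small. I take
$$
S_m \;=\; y\,z_1\,y\,z_1 \cdot y\,z_2\,y\,z_2 \cdots y\,z_m\,y\,z_m,
$$
where $y,z_1,\dots,z_m$ are distinct letters, so $|S_m|=4m$ and $\sigma=m+1$. A direct computation of the $0$-th order entropy uses that $y$ has frequency $1/2$ and each $z_i$ has frequency $1/(2m)$, giving $H_0(S_m) = 1 + \tfrac{1}{2}\log(2m)$, hence $|S_m|H_0(S_m) = 2m\log m + O(m)$. Since $|S_m|\log\sigma = \Theta(m\log m)$, this also certifies $|S_m|H_0(S_m) = \Omega(|S_m|\log\sigma)$, as required by the second half of the statement.

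Next I analyse \RePair on $S_m$. The only pairs of multiplicity $\ge 2$ are $(y,z_i)$ for every $i$ (twice inside the $i$-th block) and $(z_i,y)$ for $i<m$ (once inside block~$i$ and once at the block boundary). With the tie-breaking that always replaces some $(y,z_i)$ first, \RePair performs exactly $m$ replacements, producing rules $X_i \to y\,z_i$ and the working string $S' = X_1 X_1 X_2 X_2 \cdots X_m X_m$, in which each $X_i$ occurs twice and no pair repeats. Any other legitimate tie-breaking reaches a symmetrically structured fixed point: $g = \Theta(m)$ rules, $|S'| = \Theta(m)$, and each symbol in $S'$ has frequency $O(1/m)$, so the estimates below carry over up to lower-order terms.

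The two bit bounds now follow by direct plug-in. For the \cnfenc encoding the cost is at least $|S'|H_0(S') + g \log g - O(g)$; with $H_0(S') = \log m$ this gives $2m\log m + m\log m - O(m) = 3m\log m - O(m)$. For the entropy encoding I form $S_G = S' \cdot y z_1 \cdot y z_2 \cdots y z_m$, of length $4m$, in which each $X_i$ occurs twice, $y$ occurs $m$ times, and each $z_i$ occurs once; the analogue of Step~1 yields $H_0(S_G) = \tfrac{3}{2} + \tfrac{3}{4}\log m$, so $|S_G|H_0(S_G) = 3m\log m + O(m)$. Since $\tfrac{3}{2}|S_m|H_0(S_m) = 3m\log m + O(m)$ and the gap is $O(m) = o(m\log m) = o(|S_m|\log\sigma)$, both cases give $\RePair(S_m) \ge \tfrac{3}{2}|S_m|H_0(S_m) - o(|S_m|\log\sigma)$, and taking $m\to\infty$ yields $\limsup \RePair(S_m)/|S_m|H_0(S_m) \ge \tfrac32$.

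The main obstacle is arguing that this bound survives \emph{any} valid tie-breaking in \RePair rather than only the canonical one I traced: I need both $g = \Theta(m)$ and that each symbol of $S'$ has frequency $O(1)$, so that $|S'|H_0(S')$ and $g \log g$ each contribute $\Theta(m\log m)$. The pattern $y z_i y z_i$ is chosen so that each block is ``self-contained'' with a count-$2$ internal pair, which \RePair must replace before that block can stop generating rules; a short case analysis on whether \RePair first consumes $(y,z_i)$ or $(z_i,y)$ shows that both structural invariants hold and the asymptotic cost is unaffected by the order of replacements.
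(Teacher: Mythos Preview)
Your proof is correct and follows essentially the same approach as the paper. Both constructions use a string of length $4m$ over an $(m{+}1)$-letter alphabet with one common letter of frequency $1/2$ and $m$ rare letters of frequency $1/(2m)$ each, arranged so that \RePair is forced to introduce $\Theta(m)$ nonterminals while the working string retains entropy $\approx 2m\log m$; the paper uses the palindrome-like pattern $a_1\#a_2\#\cdots a_n\#\,a_n\#\cdots a_1\#$ whereas you use repeated blocks $y z_1 y z_1 \cdots y z_m y z_m$, but the frequency profiles and the ensuing arithmetic are identical, and both arguments fix a convenient tie-breaking rule rather than treating all of them. (Minor slip: your stated value $H_0(S_m)=1+\tfrac12\log(2m)$ should be $1+\tfrac12\log m$, but this only shifts an $O(m)$ term and does not affect the conclusion.)
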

The $n$-th string for Lemma~\ref{lem:15entropyRePairincrementallowerbound} 
is over an alphabet $\Gamma_n = \{a_1, a_2, \ldots, a_n, \# \}$
and is equal to $S_n = a_1 \#  a_2 \# a_3 \# \cdots a_{n-1} \# a_n \#$
	$a_n \# a_{n-1} \cdots a_2 \#  a_1 \#$.

The strings from~Lemma~\ref{lem:15entropyRePairincrementallowerbound}
show that at some iteration bit encoding of \RePair can increase.
Even though the presented example requires a large alphabet
and is somehow artificial,
we cannot ensure that a similar instance does not occur
at some iteration of \RePair, as the size of the alphabet of the working string increases.
In the above example the size of the grammar was significant.
This is the only possibility to increase bit size as by Theorem~\ref{thm:main_estimation}
adding new symbols does not increase entropy encoding of working string significantly.

The key observation in the proof of Theorem~\ref{thm:RePairToTheEnd},
which can be of independent interest,
is that for a large entropy strings
adding rules to a grammar does not increase
the total encoding cost, i.e.\ $|S|H_0(S) + \mathcal |N(G)| \log \mathcal |N(G)|$,  that much.
Intuitively, even though adding a rule to a grammar $G$ costs
about $|\mathcal N(G)| \log |\mathcal N(G)|$ bits by Lemma~\ref{lem:grammar_encodings},
if the string $S$ have entropy $H_0(S) \approx \log \mathcal |S|$,
then after replacing a pair with two occurrences 
$|S|H_0(S)$ will decrease by $\approx 2\log |S|$.
Thus the decrease in $|S|H_0(S)$ after replacing a pair
should offset the cost of the grammar in the encoding cost.
A similar, yet weaker, statement is true if $H_0(S) \geq (\log |S|)/\beta$:
some replacements may not decrease $|S|H_0(S)$
(as in the example given in Lemma~\ref{lem:15entropyRePairincrementallowerbound}),
but if we do enough replacements, the $|S|H_0(S)$ will eventually decrease
and in total it will offset the cost of encoding the grammar to some degree.
We show the special case of this observation for $\beta=2$,
i.e.\ if the entropy is $H_0(S) \geq \frac{|S|}{2}$,
then the encoding cost is bounded by $\frac 3 2 |S|H_0(S)$ for both
incremental and entropy coding.
The tools we use in the full proof show
that this is true not only for \RePair, but also for a wide range of grammar compressors
(see weakly non-redundant grammars,
paragraph~\ref{par:weakly_non_red_grammars}, in the Appendix).

In the proof of Theorem~\ref{thm:RePairToTheEnd}
we look at the working string $S'$
after $|S|/\log^{1+\epsilon}|S|$ \RePair's iterations.
Till this point the grammar $G_0$ must have at most $|S|/\log^{1+\epsilon}|S|$ rules,
thus encoding of those rules takes $o(|S|)$ bits so far.
Denote $S''$ as the final working string and $G'$ as the final grammar.
Then we estimate the encoding of grammar $G = G'\setminus G_0$ and
the encoding of $S''$, this can be viewed as if we would estimate the
size of grammar $(S'', G)$ obtained by running \RePair on $S'$.
If $|S'| \leq |S|/\log^{1+\epsilon}|S|$
then even the trivial encoding takes $o(|S|)$ bits.
Otherwise, we consider incremental and entropy coding separately,
first consider the case for incremental encoding.
Then the encoding size of $(S'', G)$
is upper bounded by roughly $|S''|H_0(S'') + \nonterminals{G} \log |S|$,
we show that this is bounded by $\frac{3}{2}|S'|H_0(S')$.
For this, we write $|S''|H_0(S'') + \nonterminals{G} \log |S|$
as $\frac 1 2 |S''|H_0(S'') + (\frac 1 2 |S''|H_0(S'') + \nonterminals{G} \log |S|)$.
As the first term is around $\frac 1 2 |S'|H_0(S')$ by Theorem~\ref{thm:main_estimation}
applied with $k=0$ ($S''$ induces a parsing of $S'$),
we need to estimate the $\frac 1 2 |S''|H_0(S'') + \nonterminals{G}\log |S|$ summand.
Note that $\nonterminals{G} \leq (|S'| - |S''|)/2$ as each nonterminal corresponds
to at least two replacements of a pair,
thus it is enough to estimate $|S''| \log |S|$ in terms of $|S'|H_0(S')$.
To this we use Theorem~\ref{thm:main_estimation} to lower bound the $|S'|H_0(S')$:
Consider a trivial parsing $Y_P$ of $S'$ into consecutive pairs of letters,
the calculation shows that we can lower bound the entropy of parsing
$|Y_P|H_0(Y_P)$ by roughly $|S'|/2 \log |S|$
(as at this point each pair occurs at most $\Ocomp(\log^{1+\epsilon} |S|)$ times).
Thus $|S'|H_0(S') \geq |Y_P|H_0(Y_P) \geq (|S'|/2) \log |S|$,
and together with $\nonterminals{G} \leq (|S'| - |S''|)/2$ this allows to
estimate the $(\frac 1 2 |S''|H_0(S'') + \nonterminals{G} \log |S|)
\leq \log |S|\cdot \left(\frac 1 2 |S''| + |\mathcal{N}(G)|\right)$ summand by $|S'|H_0(S')$.
At the end, we again use the Theorem~\ref{thm:main_estimation}
to show that $\frac 3 2 |S'|H_0(S') \leq |S|H_k(S) + o(|S|\log \sigma)$.
As we consider different grammars (for $S$ and $S'$),
some additional technical care is needed,
i.e.\ we must show that encodings of $G'$ and $G'\setminus G_0$ are asymptotically similar.
Arguing about the full entropy encoding is more involved and
requires some additional considerations,
but the main idea is similar.

\begin{theorem}
\label{thm:RePairToTheEnd}
	Let $S$ be a string over a $\sigma$-size alphabet,
	$k = o(\log_\sigma |S|)$.
	Then the size of \RePair{} output
	is at most $\frac{3}{2}|S| H_k(S) + o(|S| \log \sigma)$
	for \cnfenc and entropy encoding.
\end{theorem}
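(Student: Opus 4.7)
The plan is to decompose the total bit-cost of \RePair's output into the cost of encoding the final working string and the cost of encoding the grammar, and bound each separately. Using Lemma~\ref{lem:grammar_encodings} for the \cnfenc encoding or Lemma~\ref{lem:entropy_coding_concatenation} for the entropy encoding, the total cost is essentially $|S_{\text{end}}|H_0(S_{\text{end}}) + g\log(\sigma + g)$ in the \cnfenc case, or $|S_G|H_0(S_G) + g\log|S|$ in the entropy case, plus $o(n\log\sigma)$ lower-order terms; here $g$ denotes the final number of nonterminals and $S_{\text{end}}$ is the final working string.

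For the working-string part, I would first observe that \RePair's working string length is monotonically non-increasing, so by Theorem~\ref{thm:RePairStopped} it satisfies $|S_{\text{end}}| \le 16n/\log_\sigma n$. Applying the second bound of Theorem~\ref{thm:main_estimation} to the parsing of $S$ induced by the expansions of symbols in $S_{\text{end}}$ then yields $|S_{\text{end}}|H_0(S_{\text{end}}) \le |S|H_k(S) + o(n\log\sigma)$. The same reasoning, combined with Lemma~\ref{lem:entropy_coding_concatenation}, bounds the entropy-encoding term $|S_G|H_0(S_G)$ by $|S|H_k(S) + g\log|S| + o(n\log\sigma)$.

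The core of the argument is to show that the remaining grammar term, namely $g\log(\sigma+g)$ or $g\log|S|$, is at most $\tfrac{1}{2}|S|H_k(S) + o(n\log\sigma)$. For this I would use Theorem~\ref{thm:main_estimation} as a \emph{lower} bound on $|S|H_k(S)$: identify an intermediate iteration $t^*$ during \RePair's run such that the working string $S_{t^*}$ satisfies $|S_{t^*}|H_0(S_{t^*}) \ge 2g\log(\sigma + g) - o(n\log\sigma)$. Applying Theorem~\ref{thm:main_estimation} to the parsing of $S$ induced by $S_{t^*}$ then gives $|S|H_k(S) \ge |S_{t^*}|H_0(S_{t^*}) - o(n\log\sigma) \ge 2g\log(\sigma + g) - o(n\log\sigma)$, yielding the desired bound. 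Combining with the working-string estimate produces the claimed total of $\tfrac{3}{2}|S|H_k(S) + o(n\log\sigma)$.

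The main obstacle is proving the existence of such an intermediate $t^*$. In the small-grammar regime where $g\log(\sigma+g) = o(n\log\sigma)$, the bound is immediate, so the work is in the regime where $g$ is comparable to $n/\log_\sigma n$ and the grammar term is $\Omega(n\log\sigma)$. Here I would exploit two structural features of \RePair: the frequency of the pair replaced at each step is non-increasing over time, and each newly introduced nonterminal initially has at least two occurrences in the working string. Together these should ensure that at a carefully chosen midway point, for instance when roughly half of the nonterminals have been introduced, the working string still contains many distinct active nonterminals with sufficiently balanced frequencies to deliver the required entropy. The tight construction in Lemma~\ref{lem:15entropyRePairincrementallowerbound}, where $|S_{\text{end}}|H_0(S_{\text{end}}) \approx g\log g$ and the grammar contributes another $g\log g$ on top, confirms that the $\tfrac{3}{2}$ factor cannot be improved.
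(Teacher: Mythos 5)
Your high-level architecture is close to the paper's: you decompose the cost into a string part and a grammar part, you correctly identify that Theorem~\ref{thm:main_estimation} is needed both as an \emph{upper} bound on the entropy of a working string and, crucially, in reverse as a \emph{lower} bound on $|S|H_k(S)$ through the entropy of some intermediate working string. That is indeed the paper's central trick. However, there are genuine gaps in how you execute it.

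First, your decomposition is different from the paper's and does not obviously close. You want to prove two \emph{separate} bounds, $|S_{\text{end}}|H_0(S_{\text{end}}) \le |S|H_k(S) + o(n\log\sigma)$ and $g\log(\sigma+g) \le \tfrac12|S|H_k(S) + o(n\log\sigma)$. The paper instead proves a \emph{joint} bound on the grammar contribution plus the remaining string via Lemmas~\ref{lemma15ireplacement} and~\ref{lemma15replacemententropy} (the ``weakly non-redundant grammar'' lemmas): if $|S'|H_0(S')$ is close to $\tfrac{|S'|\log n}{2}$ then $|S''|H_0(S'')+\nonterminals{G}\log n \le \tfrac32|S'|H_0(S') + \Ocomp(|S'|+\gamma)$. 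This joint bound captures the trade-off between a large grammar and a short residual string, which your separate bound discards; the separate bound $g\log(\sigma+g)\le\tfrac12|S|H_k(S)$ is not established in the paper and would require a nontrivial argument of its own.

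Second, and more importantly, your mechanism for producing the entropy lower bound at $t^*$ is a heuristic, not a proof. You assert that at a ``carefully chosen midway point'' the working string has entropy at least $2g\log(\sigma+g) - o(n\log\sigma)$ because of non-increasing pair frequency and multiplicity of new nonterminals, but no concrete argument is given for why the frequencies are ``sufficiently balanced.'' The paper chooses $t^*$ very differently: the iteration when $\nonterminals{G_0}=\lceil n/\log^{1+\epsilon} n\rceil$, which by Lemma~\ref{lem:most_frequent_and_number_of_nonterminals} forces every pair in $S'$ to occur at most $\log^{1+\epsilon} n$ times. The lower bound $|S'|H_0(S') \ge \tfrac{|S'|}{2}\log n - \gamma$ is then obtained by parsing $S'$ into consecutive \emph{pairs} $Y^P_{S'}$, observing that the low pair-multiplicity forces $|Y^P_{S'}|H_0(Y^P_{S'})$ to be high, and invoking Theorem~\ref{thm:main_estimation} with $k=0$ on that parsing. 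This specific chain of reasoning is what your sketch is missing. You also need to verify that at your $t^*$ the working string is short enough ($\Ocomp(n/\log_\sigma n)$) for the $o(n\log\sigma)$ error terms in Theorem~\ref{thm:main_estimation} to hold; the paper's threshold automatically ensures this via Theorem~\ref{thm:RePairStopped}, but a generic ``midway'' iteration need not satisfy it.

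In short, you have the right guiding idea (lower-bound $|S|H_k(S)$ from an intermediate working string), but the concrete choice of $t^*$, the pair-parsing argument that delivers the entropy lower bound, and the weakly non-redundant grammar lemmas that convert that lower bound into the $\tfrac32$ factor are all absent. These are the technical heart of the proof; without them the argument does not go through.
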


\subsection{Irreducible grammars and their properties}\label{subsec:irreducible}
Kieffer et al.\ introduced the concept of \emph{irreducible grammars}~\cite{kieffer2000grammar},
which formalise the idea that there is no immediate way
to make the grammar smaller.
Many heuristics fall into this category:
\Sequential, \Sequitur, \LongestMatch, and \Greedy,
even though some were invented before the notion was introduced~\cite{SmallestGrammar}.
They also developed an encoding of such grammars
which was used as a universal code for finite-state sources~\cite{kieffer2000grammar}.

\begin{definition}
	A full grammar $(S, G)$ is \emph{irreducible} if:
	\begin{enumerate}[({IG}1)]
		\item 
		no two distinct nonterminals have the same expansion; \label{IG1}\label{expuniq}
		\item 
		every nonterminal, except the starting symbol,
		occurs at least twice in $\grhs(S,G)$;
		\label{IG2}\label{nonred}
		\item 
		no pair occurs twice
		(without overlaps) in $S$ and $G$'s right-hand sides.
		\label{IG3}\label{pairuniq}
	\end{enumerate}
\end{definition}

Unfortunately, most irreducible grammars have the same drawback as \RePair:
they can introduce new symbols
without decreasing entropy of the starting string but increasing the bit size of the grammar.
In particular, the example from Lemma~\ref{lem:15entropyRePairincrementallowerbound} 
applies to irreducible grammars
(i.e.\ the grammar produced by \RePair in Lemma~\ref{lem:15entropyRePairincrementallowerbound}
is irreducible).

Ideally, for an irreducible grammar we would like to repeat the argument
used for \RePair:
we can stop at some iteration
(or decompress some nonterminals as in~\cite{ruleReductionSequential})
such that the grammar is small and has 
$\Ocomp(|S|^c)$ nonterminals, for some constant $c<1$. 
It turns out that there are irreducible grammars
which do not have this property,
see Lemma~\ref{lem:irreducibleDecompression}.
Moreover, grammar compressors that work in a top-down manner,
like \LongestMatch, tend to produce such grammars.
Furthermore, the grammar in Lemma~\ref{lem:irreducibleDecompression}
has size $\Theta\left(\frac{|S|}{\log_\sigma |S|}\right)$,
which is both an upper and lower bound
for the size of an irreducible grammar.

\begin{lemma}
\label{lem:irreducibleDecompression}
For a fixed $\sigma$ 
there exists a family of irreducible grammars $\{G_i\}_{i \geq 0}$, where $G_i$ generates a string $S_i$,
such that $|S_1| < |S_2| < \cdots$, size of
$G_i$ is $\Theta\left(|S_i| / \log_\sigma |S_i| \right)$
and for every constant $0 < c < 1$ decompressing any set of nonterminals so that 
only $\Ocomp(|S_i|^c)$ nonterminals remain yields a~grammar
of size $ \Theta(|S|)  =  \omega(|S_i|/\log_\sigma |S_i|)$,
i.e.\ the grammar is no longer small.
\end{lemma}
Moreover,
some encodings of such grammars can give larger output than~\RePair's,
i.e.\ the $1.5$ bound does not hold for irreducible grammars.
\begin{lemma}
\label{lem:20irreduciblelowerbound}
There exists a family of strings $\{S_i\}_{i=1}^\infty$ such that for each string $S_i$
there exists an irreducible grammar $G_i$ whose entropy coding takes
at least $2|S_i|H_0(S_i) - o(|S_i|\log\sigma)$ bits.
Moreover $|S_i|H_0(S_i) = \Omega(|S_i|\log \sigma )$,
i.e.\ the cost of the encoding, denoted by $\algofont{Irreducible}(S)$, satisfies
$\limsup_{i \rightarrow \infty }
\frac{\algofont{Irreducible}(S_i)}{|S_i|H_0(S_i)} \geq 2$.
\end{lemma}

Still, we give upper bounds for encodings of irreducible grammars, 
with worse constants though.
The rough idea of the proof of Theorem~\ref{thm:irreducible_entropy_concatenation}
is as follows:
For an irreducible grammar $(S', G)$
we consider the concatenation $S_G$ of the strings in $\grhs{(S', G)}$.
Then we show that $S_G$
can be obtained by permuting letters of two strings $S''$ and $S_\mathcal{N}$
where: $S''$ is a parsing of $S$
and $S_\mathcal{N}$ can be obtained by permuting and removing letters
from $S''$.
Thus $|S_G|H_0(S_G) = |S''S_\mathcal{N}|H_0(S''S_\mathcal{N}) \leq 2|S''|H_0(S'')$,
by applying Theorem~\ref{thm:main_estimation}
on $S''$ we get that $2|S''|H_0(S'')$ is bounded by $2|S|H_k(S)  + o(|S|\log \sigma)$.

The proof of Theorem~\ref{thm:irreducible_naive},
similarly to the proof of Theorem~\ref{thm:RePairToTheEnd},
uses Theorem~\ref{thm:main_estimation} to lower bound
the entropy of $S_G$:
In irreducible grammar, each pair occurs only once 
in strings of  $\grhs{(S', G)}$, thus we can find a parsing $Y_{S_G}$ of $S_G$
into pairs and letters where $|S_G|/3$ phrases are unique
(note that all rules can be of length $3$).
Thus the zeroth-order entropy of the parsing is at least
$|Y_{S_G}|H_0(Y_{S_G}) \geq (|S_G|/3) \log (|S_G|/3)$.
Then by Theorem~\ref{thm:main_estimation} we get that 
$(|S_G|/3) \log (|S_G|/3)\leq |Y_{S_G}|H_0(Y_{S_G})  - \Ocomp(|Y_{S_G}|)\leq |S_G|H_0(S_G)$,
plugging the estimation on $|S_G|H_0(S_G)$
from the Theorem~\ref{thm:irreducible_entropy_concatenation}
yields the claim.
%
%
\begin{theorem}
	\label{thm:irreducible_entropy_concatenation}
	Let $S$ be a string over $\sigma$-sized alphabet, $k = o(\log_\sigma |S|)$.
	Then the size of the entropy coding of any irreducible full grammar
	generating $S$ is at most $2 |S| H_k(S) + o(|S| \log \sigma)$.
\end{theorem}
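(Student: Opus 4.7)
The plan is to invoke Lemma~\ref{lem:entropy_coding_concatenation} and then spend the factor of two on absorbing its residual $\nonterminals{G}\log|S|$ term. Both hypotheses of that lemma are met by any irreducible full grammar: \IGref{1} directly gives the assumption that no two nonterminals share an expansion, and the folklore bound $|\grhs(S',G)| = \Ocomp(n/\log_\sigma n)$ for irreducible grammars provides the required size assumption. This size bound follows from \IGref{3}---which makes all adjacent pairs of symbols in $\grhs(S',G)$ distinct---combined with a counting argument on distinct substrings of length $\Theta(\log_\sigma n)$ in $S$. Writing $S_G = \grhs(S',G)$, Lemma~\ref{lem:entropy_coding_concatenation} yields $|S_G|H_0(S_G) \leq |S|H_k(S) + \nonterminals{G}\log|S| + o(|S|\log\sigma)$, and storing the rule right-hand side lengths costs an additional $\Ocomp(|S_G|) = o(|S|\log\sigma)$ bits.

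The remaining task is to bound $\nonterminals{G}\log|S|$ by $|S|H_k(S) + o(|S|\log\sigma)$. Mirroring the approach used for Theorem~\ref{thm:RePairToTheEnd}, the idea is to apply Theorem~\ref{thm:main_estimation} contrapositively, obtaining a \emph{lower} bound on $|S|H_k(S)$ in terms of the entropy of a carefully chosen parsing of $S$. Concretely, I would construct a parsing $Y_S$ with $|Y_S| = \Ocomp(n/\log_\sigma n)$ and $|Y_S|H_0(Y_S) \geq \nonterminals{G}\log|S| - o(|S|\log\sigma)$; then Theorem~\ref{thm:main_estimation} gives $|S|H_k(S) \geq |Y_S|H_0(Y_S) - o(|S|\log\sigma) \geq \nonterminals{G}\log|S| - o(|S|\log\sigma)$, which combined with the previous inequality yields the claimed $2|S|H_k(S) + o(|S|\log\sigma)$ bound. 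The natural candidate for $Y_S$ is the parsing induced by $S'$ itself (each symbol of $S'$ becomes the phrase equal to its expansion): \IGref{1} turns distinct symbols into distinct phrases, and \IGref{3} prevents any one symbol from recurring too often in $S'$ (a recurring symbol would force some adjacent pair to repeat), spreading the frequency distribution enough for $|Y_S|H_0(Y_S)$ to stay large.

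The main obstacle is verifying the entropy lower bound for $Y_S$, because a nonterminal may appear only nested inside another rule's right-hand side and never directly in $S'$, so the $S'$-induced parsing might expose strictly fewer than $\nonterminals{G}$ distinct phrases. To recover the needed lower bound, I would refine $Y_S$ by selectively unfolding rules: for each nonterminal $X$ missing from the current-level parsing, pick one ancestor of $X$ on the derivation path from the root and replace that occurrence by its right-hand side, making $X$ visible one level deeper. Property \IGref{2} ensures there are enough occurrences in $S_G$ to carry out these refinements while keeping $|Y_S|$ below $\Ocomp(n/\log_\sigma n)$, so that the error terms in Theorem~\ref{thm:main_estimation} remain $o(|S|\log\sigma)$. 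Getting both requirements to hold simultaneously---exposing $\Omega(\nonterminals{G})$ distinct phrases while keeping $|Y_S|$ short enough for the additive terms in Theorem~\ref{thm:main_estimation} to stay negligible---is the delicate point of the argument, and is where the irreducibility properties \IGrefall must all be used together.
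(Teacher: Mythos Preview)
Your plan can be made to work, but it takes a longer route than the paper, which never invokes Lemma~\ref{lem:entropy_coding_concatenation} and never isolates $\nonterminals{G}\log|S|$ as a separate term to be bounded. Instead the paper builds the intermediate string $S''$ (start from $S'$, expand one occurrence of each nonterminal) and uses a \emph{doubling trick}: the multiset of symbols in $S_G=\grhs(S',G)$ equals that of $S''$ together with one extra copy of every nonterminal; by \IGref{2} each nonterminal already appears in $S''$, so $S''S''$ dominates that multiset, and since deleting symbols cannot raise $0$-order entropy, $|S_G|H_0(S_G)\leq |S''S''|H_0(S''S'')=2|S''|H_0(S'')$. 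A single application of Theorem~\ref{thm:main_estimation} to the parsing of $S$ induced by $S''$ then gives $2|S|H_k(S)+o(|S|\log\sigma)$.

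Two points in your sketch are off. First, the ``delicate'' selective unfolding you worry about is unnecessary: the $S''$ construction (the very one sitting inside the proof of Lemma~\ref{lem:entropy_coding_concatenation}) already exposes every nonterminal, because \IGref{2} gives each nonterminal at least two occurrences in $S_G$ and the expansion step removes only one; moreover $|S''|=|S_G|-\nonterminals{G}=\Ocomp(n/\log_\sigma n)$ comes for free from Corollary~\ref{cor:irreducible_is_small}. Second, the entropy lower bound you need does not come from \IGref{3} bounding symbol repetitions; it follows purely from having $\nonterminals{G}$ distinct phrase types (guaranteed by \IGref{1}), which already forces $|S''|H_0(S'')\geq(\nonterminals{G}-1)\log|S''|$, and the residual $\nonterminals{G}(\log n-\log|S''|)=\Ocomp(\nonterminals{G}\log\log_\sigma n)$ is $o(n\log\sigma)$. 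With these corrections your route does close, but the paper's doubling argument reaches the same conclusion in two lines and avoids the lower-bound detour entirely.
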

\begin{theorem}
	\label{thm:irreducible_naive}
	Let $S$ be a string over $\sigma$-sized alphabet,
	$k = o(\log_\sigma |S|)$.
	The~size of the fully naive coding of any irreducible full grammar
	generating $S$ is at most~$6 |S| H_k(S) + o(|S| \log \sigma)$.
\end{theorem}

\subsection{\Greedy}\label{sec:greedy}
\Greedy~\cite{apostolico1998some} can be viewed as non-binary \RePair:
in each round, it replaces a substring occurring in some strings of $\grhs(S,G)$,
obtaining $(S',G')$, such that $\rhs{S',G'}$ is the smallest possible.
For this, at each round, for substrings of strings in $\grhs(S,G)$
it calculates the maximum number of non-overlapping occurrences.
This can be realised with creating a suffix tree at each iteration,
and so its asymptotic construction time has been bounded
only by $\Ocomp(|S|^2)$ so far.
Note that, contrary to \RePair, it can correctly estimate
the decrease in the number of symbols when replacing pair $AA$ in a maximal substring $A^n$.
It is known to produce small grammars in practice, both in terms of nonterminal
size and bit size.
Moreover, it is notorious for being hard to analyse in terms of approximation ratio~\cite{SmallestGrammar,SLPaproxrevisited}.

\Greedy has similar properties as \RePair:
after each iteration
the frequency of the most frequent pair on the right-hand side of the grammar
does not increase 
and the size of the grammar can be estimated in terms
of this frequency. 
In particular, there always is a point
of its execution in which the number of nonterminals is $\Ocomp(|S|^c), c < 1$ and the full grammar is of size $\Ocomp\left(|S|/\log_\sigma |S| \right)$.
The entropy encoding at this time yields $|S|H_k(S) + o(|S|\log \sigma)$ bits,
while \Greedy run till the end achieves $\frac3 2 |S|H_k(S)$ using entropy coding and $2|S|H_k(S)$ using fully naive encoding,
so the same as in the case of \RePair.


In practice stopping \Greedy early is beneficial:
similarly as in \RePair there can exist a point where
we add new symbols that do not decrease the bitsize of the output.
Indeed, it was suggested~\cite{apostolico1998some}
to stop \Greedy as soon as the decrease in the grammar size is small enough,
yet this was not experimentally evaluated.
Moreover, as the time needed for one iteration is linear 
we can stop after $\Ocomp(|S|^c)$ iterations
ending with $\Ocomp(|S|^{1+c})$ running time,
again our analysis suggests that $c$ factor depends on the constant
in additional summand.


The proofs for \Greedy are similar in spirit to those for \RePair,
yet technical details differ.
In particular, we use the (non-trivial) fact that \Greedy produces irreducible grammar
and at every iteration conditions \IGreftwo{1}{2} are satisfied, i.e.
that distinct nonterminals have distinct expansions
and that every nonterminal is used at least twice.
All of those properties are a consequence of \Greedy being a \emph{global} algorithm,
see~\cite{SmallestGrammar}.

\begin{theorem}
	\label{GreedyStopped}
	Let $S$ be a string over $\sigma$-sized alphabet and $k = o(\log_\sigma |S|)$.
	Let $\xi >0, \beta > 2$ be constants.
	Let $(S', G)$ be full grammar generating $S$ after some iteration of \Greedy.
	When the $\rhs{S', G}$ is first below 
	$\rhs{S', G} < \frac{ (2\xi + 10\beta) |S|}{\log_\sigma |S|} + 8|S|^\frac{2}{\beta} + 2$ 
	the number of nonterminals $\nonterminals{G}$ of $G$ is at most
	$1 + \frac{2}{\xi} |S|^\frac{2}{\beta} \log_\sigma |S| $;
	for every $\xi > 0, \beta > 2$ and $S$ such a point always exists.
	Moreover, at this moment the bit-size of entropy coding
	of $(S', G)$ is at most $|S|H_k(S) + o(|S| \log \sigma)$.
\end{theorem}
\begin{theorem}
	\label{thm:greedy_naive}
	Let $S$ be a string over $\sigma$-sized alphabet and $k = o(\log_\sigma |S|)$.
	Then the size of the fully naive encoding of 
	full grammar produced by \Greedy
	is at most $2|S|H_k(S) + o(|S| \log \sigma)$; 
	moreover the size of the entropy encoding of
	full grammar produced by \Greedy is at most $\frac{3}{2} |S|H_k(S) + o(|S| \log \sigma)$.
\end{theorem}

\section{Entropy bound for string parsing}
\label{sec:entropy_bound_for}
We now prove Theorem~\ref{thm:main_estimation} and Theorem~\ref{thm:mean_entropy}.
We make a connection between
the entropy of the parsing of a string $S$, i.e.\ $|Y_S|H_0(Y_S)$,
and the $k$-order empirical entropy $|S|H_k(S)$ of $S$;
this can be seen as a refinement and strengthening of results that relate
$|Y_S|\log|Y_S|$ to $H_k(S)$~\cite[Lem.~2.3]{KosarajuManzini99}.
To compare the two results,
our result establishes upper bounds when phrases
of $Y_S$ are encoded using an entropy coder
while previous ones apply to the trivial encoding of $Y_S$,
which assigns to each letter $\log |Y_S|$ bits.

The proofs follow in a couple of simple steps.
First, 
we recall a strengthening of the fact that entropy
lower-bounds the size of the encoding of the string using any prefix codes:
instead of assigning natural lengths (of codes) to letters of $Y_S$
we can assign them some probabilities
(and think that $-\log(p)$ is the ``length'' of the prefix code).
Then we define the probabilities of phrases in the parsing $Y_S$
in such a way that they relate to higher-order entropies.
Depending on the result we either look at fixed, $k$-letter
context or $(i-1)$-letter context for the $i$-th letter of a phrase.
%
%
This already yields the first claim of Theorem~\ref{thm:main_estimation},
to obtain the second we substitute the estimation on the
parsing size and make some basic estimation on the entropy of the lengths,
which is a textbook knowledge~\cite[Lemma 13.5.4]{ElementfofInformationTheory2006}.



\begin{lemma}[\cite{aczel1973shannon}]\label{theoremP}
	Let $w$ be a string over alphabet $\Gamma$
	and $p:\Gamma \rightarrow \mathbb{R}^+$ be a function such that $\sum_{s \in \Gamma} p(s)  \leq 1$.
	Then
	$
	|w|H_0(w) \leq -\sum_{s \in \Gamma} |w|_{s} \log p(s)
	$.
\end{lemma}

\label{sec:parsings-costs}
To use Lemma~\ref{theoremP}
we need an appropriate valuation of $p$ for phrases of the parsing.
We assign $p$-value to each individual letter in each phrase
and then for a phrase $y$, $p(y)$ is a product of values of consecutive letters of $y$.

In the case of Theorem~\ref{thm:mean_entropy} 
for the $j$-th letter of a phrase $y$
we assign the probability of this letter occurring in $j-1$
letter context in $S$,
i.e.\ 
by Definition~\ref{def:k_order_entropy} in most cases
$\prob(y[j] \ | \ y[1\ldots j-1]) = \frac{|S|_{y[1\ldots j]}}{|S|_{y[1\ldots j-1]}}$,
i.e.\ we think that we encode the letter using
$(j-1)$-th order entropy.
The difference in the case of Theorem~\ref{thm:main_estimation},
is that we assign $\frac{1}{\sigma}$
to the first $k$ letters of the phrase,
the remaining ones are assigned 
probability of this letter occurring in $k$-letter context.
The phrase costs are simply logarithmed values of phrase probabilities,
which intuitively can be viewed as the cost of bit-encoding of a phrase.

\begin{definition}[Phrase probability, parsing cost]
	\label{def:phrasecost}
	Given a string $S$ and its parsing $Y_S = y_1 y_2 \ldots y_c$
	the \emph{phrase probability} $\mathcal P(y_i)$
	and \emph{$k$-bounded phrase probability} are:
	\begin{equation*}
	\mathcal{P}(y_i) = \prod _{j=1}^{|y_i|} \prob(y_i[j] \ | \ y_i[1\ldots j-1])
	\quad \text{and} \quad	 
	\mathcal{P}_k(y_i) =  \frac{1}{\sigma^{\min(|y_i|,k)}} \prod _{j=k+1}^{|y_i|} 
	\prob(y_i[j] \ | \ y_i[j-k\ldots j-1]) \enspace ,
	\end{equation*}
	where $\prob(a | w)$ is the empirical probability of letter $a$ occurring
	in context $w$, cf.~Definition~\ref{def:k_order_entropy}.

	%
	
	The \emph{phrase cost} and \emph{parsing cost} are
	$C(y_i) = 
	-\log \mathcal{P}(y_i)$ 	
	and
	$C(Y_S) = \sum_{i = 1}^{|Y_S|} C(y_i)$.
	Similarly the \emph{$k$-bounded phrase cost} and \emph{$k$-bounded parsing cost} are:
	$C_k(y_i) = 
	- \log \mathcal{P}_k(y_i)$ and 
	$C_k(Y_S) = \sum_{i = 1}^{|Y_S|} C_k(y_i)$.
\end{definition}


When comparing the $C_k(Y_S)$ cost and $|S|H_k(S)$,
the latter always uses $-\prob(a | w)$  bits for each symbol $a$ occurring in context $w$,
while $C_k(Y_S)$ uses $\log \sigma$ bits on each of the first $k$ letters
of each phrase, thus intuitively it loses up to $\log \sigma$ bits
on each of those $|Y_S|k$ letters.

\begin{lemma}
	\label{lem:cost_and_kentropy}
	Let $S$ 
	be a string and let $Y_S$ be its parsing. Then
	for any $k$:
	$C_k( Y_S )   \leq |S|H_k(S) + |Y_S|k\log \sigma$.
	%
	%
	%
\end{lemma}

\begin{lemma}
	\label{lem:mean_entr_cost}
	Let $S$ be a string.
	Then for any $l$ there exists a parsing $Y_S$ of length $|Y_S| \leq \left\lceil |S|/l\right\rceil + 1$ such that
	each phrase except first and last has length exactly $l$ and 
	$C(Y_S) \leq  \frac{|S|}{l}\sum_{i=0}^{l-1} H_i(S)+\log |S|$.
\end{lemma}

Ideally, we would like to plug in the phrase probabilities for $Y_S$
into Lemma~\ref{theoremP}
and so obtain that the parsing cost $C(Y_S)$ upper-bounds
entropy coding of $Y_S$, i.e.\ $|Y_S| H_0(Y_S)$.
But the assumption of Lemma~\ref{theoremP} 
(that the values of function $p$ sum to at most $1$)
may not hold as we can have phrases of different lengths
and so their probabilities can somehow mix.
Thus we also take into account the lengths of the phrases:
we multiply the phrase probability by the probability of $|y|$,
i.e.\ the frequency of $|y|$ in $\Lengths(Y_S)$. 
After simple calculations, we conclude that $|Y_S|H_0(Y_S)$
is upper bounded by the parsing cost $C(Y_S)$ plus the zeroth-order entropy
of phrases' lengths: i.e.\ when $L = \Lengths(Y_S)$, the $|L| H_0(L)$.


\begin{theorem}\label{thmHC}
	Let $S$ be a string over $\sigma$-size alphabet, $Y_S$ its parsing,
	$c = |Y_S|$ and $L = \Lengths(Y_S)$. Then:
	\begin{equation*}
	|Y_S| H_0(Y_S)
	\leq
	C(Y_S) + |L| H_0(L)  \quad \text{and} \quad 
	|Y_S| H_0(Y_S)
	\leq
	C_k(Y_S) + |L| H_0(L) \enspace .
	\end{equation*}
\end{theorem}
When $S \in a^*$ (and so $C(S) = C_k(S) = H_0(S) = 0$)
the entropy $|Y_S| H_0(Y_S)$ \emph{is} the entropy of $|L| H_0(L)$,
thus summand $|L| H_0(L)$ on the right-hand is necessary.





By textbook arguments~\cite[Lem.~13.5.4]{ElementfofInformationTheory2006},
the entropy of lengths is always
$\Ocomp(|S|)$, and
for small enough $|Y_S|$ it is $\Ocomp\left(|S| \log \log_\sigma |S|/\log_\sigma |S|\right)$;
moreover in the case of Theorem~\ref{thm:mean_entropy}
it can be bounded by $\Ocomp(\log |S|)$.

Theorems~\ref{thm:main_estimation}
and~\ref{thm:mean_entropy} follow directly from Lemmas~\ref{lem:cost_and_kentropy}, \ref{lem:mean_entr_cost},
 Theorem~\ref{thmHC} and above estimations.



\bibliography{references}

\begin{thebibliography}{10}

\bibitem{aczel1973shannon}
Janos Acz{\'e}l.
\newblock On {S}hannon’s inequality, optimal coding, and characterizations of
  {S}hannon’s and {R}{\'e}nyi’s entropies.
\newblock In {\em Symposia Mathematica}, volume~15, pages 153--179, 1973.

\bibitem{apostolico1998some}
Alberto Apostolico and Stefano Lonardi.
\newblock Some theory and practice of greedy off-line textual substitution.
\newblock In {\em Data Compression Conference, 1998. DCC'98. Proceedings},
  pages 119--128. IEEE, 1998.

\bibitem{apostolicoGreedy2}
Alberto Apostolico and Stefano Lonardi.
\newblock Compression of biological sequences by greedy off-line textual
  substitution.
\newblock In {\em Proceedings of the Conference on Data Compression}, DCC '00,
  pages 143--152, Washington, DC, USA, 2000. IEEE Computer Society.
\newblock URL: \url{http://dl.acm.org/citation.cfm?id=789087.789789}.

\bibitem{ruleReductionSequential}
Martin Bokler and Eric Hildebrandt.
\newblock Rule reduction—a method to improve the compression performance of
  grammatical compression algorithms.
\newblock {\em AEU-International Journal of Electronics and Communications},
  65(3):239--243, 2011.

\bibitem{SLPNPnew}
Katrin Casel, Henning Fernau, Serge Gaspers, Benjamin Gras, and Markus~L.
  Schmid.
\newblock On the complexity of grammar-based compression over fixed alphabets.
\newblock In Ioannis Chatzigiannakis, Michael Mitzenmacher, Yuval Rabani, and
  Davide Sangiorgi, editors, {\em 43rd International Colloquium on Automata,
  Languages, and Programming, {ICALP} 2016, July 11-15, 2016, Rome, Italy},
  volume~55 of {\em LIPIcs}, pages 122:1--122:14. Schloss Dagstuhl -
  Leibniz-Zentrum fuer Informatik, 2016.
\newblock \href {https://doi.org/10.4230/LIPIcs.ICALP.2016.122}
  {\path{doi:10.4230/LIPIcs.ICALP.2016.122}}.

\bibitem{SmallestGrammar}
Moses Charikar, Eric Lehman, Ding Liu, Rina Panigrahy, Manoj Prabhakaran, Amit
  Sahai, and Abhi Shelat.
\newblock The smallest grammar problem.
\newblock {\em {IEEE} Trans. Information Theory}, 51(7):2554--2576, 2005.
\newblock URL: \url{http://dx.doi.org/10.1109/TIT.2005.850116}, \href
  {https://doi.org/10.1109/TIT.2005.850116}
  {\path{doi:10.1109/TIT.2005.850116}}.

\bibitem{DBLP:conf/spire/ClaudeN07}
Francisco Claude and Gonzalo Navarro.
\newblock A fast and compact web graph representation.
\newblock In Nivio Ziviani and Ricardo~A. Baeza{-}Yates, editors, {\em SPIRE},
  volume 4726 of {\em LNCS}, pages 118--129. Springer, 2007.
\newblock URL: \url{http://dx.doi.org/10.1007/978-3-540-75530-2\_11}, \href
  {https://doi.org/10.1007/978-3-540-75530-2\_11}
  {\path{doi:10.1007/978-3-540-75530-2\_11}}.

\bibitem{ElementfofInformationTheory2006}
Thomas~M. Cover and Joy~A. Thomas.
\newblock {\em Elements of Information Theory (Wiley Series in
  Telecommunications and Signal Processing)}.
\newblock Wiley-Interscience, 2006.

\bibitem{FerraginaV07SimpStat}
Paolo Ferragina and Rossano Venturini.
\newblock A simple storage scheme for strings achieving entropy bounds.
\newblock {\em Theor. Comput. Sci.}, 372(1):115--121, 2007.
\newblock \href {https://doi.org/10.1016/j.tcs.2006.12.012}
  {\path{doi:10.1016/j.tcs.2006.12.012}}.

\bibitem{DBLP:journals/tcs/FischerMN09}
Johannes Fischer, Veli M{\"{a}}kinen, and Gonzalo Navarro.
\newblock Faster entropy-bounded compressed suffix trees.
\newblock {\em Theor. Comput. Sci.}, 410(51):5354--5364, 2009.
\newblock URL: \url{http://dx.doi.org/10.1016/j.tcs.2009.09.012}, \href
  {https://doi.org/10.1016/j.tcs.2009.09.012}
  {\path{doi:10.1016/j.tcs.2009.09.012}}.

\bibitem{ganczorzEntropyLowerBounds}
Micha\l{} Ga\'nczorz.
\newblock Entropy lower bounds for dictionary compression.
\newblock In {\em 30th Annual Symposium on Combinatorial Pattern Matching,
  {CPM} 2019, June 18-20, 2019, Pisa, Italy.}, pages 11:1--11:18, 2019.
\newblock \href {https://doi.org/10.4230/LIPIcs.CPM.2019.11}
  {\path{doi:10.4230/LIPIcs.CPM.2019.11}}.

\bibitem{ImprovementsRePairGrammar}
Micha\l{} Ga\'nczorz and Artur Je\.z.
\newblock Improvements on {R}e-{P}air grammar compressor.
\newblock {\em 2017 Data Compression Conference (DCC)}, pages 181--190, 2017.

\bibitem{GonzalezNStatistical}
Rodrigo Gonz{\'{a}}lez and Gonzalo Navarro.
\newblock Statistical encoding of succinct data structures.
\newblock In Moshe Lewenstein and Gabriel Valiente, editors, {\em Combinatorial
  Pattern Matching, 17th Annual Symposium, {CPM} 2006, Barcelona, Spain, July
  5-7, 2006, Proceedings}, volume 4009 of {\em Lecture Notes in Computer
  Science}, pages 294--305. Springer, 2006.
\newblock \href {https://doi.org/10.1007/11780441\_27}
  {\path{doi:10.1007/11780441\_27}}.

\bibitem{DBLP:conf/cpm/GonzalezN07}
Rodrigo Gonz{\'{a}}lez and Gonzalo Navarro.
\newblock Compressed text indexes with fast locate.
\newblock In Bin Ma and Kaizhong Zhang, editors, {\em CPM}, volume 4580 of {\em
  LNCS}, pages 216--227. Springer, 2007.
\newblock URL: \url{http://dx.doi.org/10.1007/978-3-540-73437-6\_23}, \href
  {https://doi.org/10.1007/978-3-540-73437-6\_23}
  {\path{doi:10.1007/978-3-540-73437-6\_23}}.

\bibitem{Gonzalez:2015:LCS:2627368.2594408}
Rodrigo Gonz\'{a}lez, Gonzalo Navarro, and H{\'e}ctor Ferrada.
\newblock Locally compressed suffix arrays.
\newblock {\em J. Exp. Algorithmics}, 19:1.1:1.1--1.1:1.30, January 2015.
\newblock URL: \url{http://doi.acm.org/10.1145/2594408}, \href
  {https://doi.org/10.1145/2594408} {\path{doi:10.1145/2594408}}.

\bibitem{GrossiDynamicIndexes}
Roberto Grossi, Rajeev Raman, Srinivasa~Rao Satti, and Rossano Venturini.
\newblock Dynamic compressed strings with random access.
\newblock In {\em Automata, Languages, and Programming - 40th International
  Colloquium, {ICALP} 2013, Riga, Latvia, July 8-12, 2013, Proceedings, Part
  {I}}, pages 504--515. Springer, 2013.
\newblock \href {https://doi.org/10.1007/978-3-642-39206-1\_43}
  {\path{doi:10.1007/978-3-642-39206-1\_43}}.

\bibitem{SLPaproxrevisited}
Danny Hucke, Markus Lohrey, and Carl~Philipp Reh.
\newblock The smallest grammar problem revisited.
\newblock In Shunsuke Inenaga, Kunihiko Sadakane, and Tetsuya Sakai, editors,
  {\em {SPIRE}}, volume 9954 of {\em {LNCS}}, pages 35--49, 2016.
\newblock \href {https://doi.org/10.1007/978-3-319-46049-9\_4}
  {\path{doi:10.1007/978-3-319-46049-9\_4}}.

\bibitem{simplegrammar}
Artur Je\.z.
\newblock A \emph{really} simple approximation of smallest grammar.
\newblock {\em Theoretical Computer Science}, 616:141--150, 2016.
\newblock URL: \url{http://dx.doi.org/10.1016/j.tcs.2015.12.032}, \href
  {https://doi.org/10.1016/j.tcs.2015.12.032}
  {\path{doi:10.1016/j.tcs.2015.12.032}}.

\bibitem{Collagesystem}
Takuya Kida, Tetsuya Matsumoto, Yusuke Shibata, Masayuki Takeda, Ayumi
  Shinohara, and Setsuo Arikawa.
\newblock Collage system: a unifying framework for compressed pattern matching.
\newblock {\em Theor. Comput. Sci.}, 1(298):253--272, 2003.
\newblock URL: \url{http://dx.doi.org/10.1016/S0304-3975(02)00426-7}, \href
  {https://doi.org/10.1016/S0304-3975(02)00426-7}
  {\path{doi:10.1016/S0304-3975(02)00426-7}}.

\bibitem{kieffer2000grammar}
John~C. Kieffer and En{-}Hui Yang.
\newblock Grammar-based codes: {A} new class of universal lossless source
  codes.
\newblock {\em {IEEE} Transactions on Information Theory}, 46(3):737--754,
  2000.
\newblock \href {https://doi.org/10.1109/18.841160}
  {\path{doi:10.1109/18.841160}}.

\bibitem{KosarajuManzini99}
S.~Rao Kosaraju and Giovanni Manzini.
\newblock Compression of low entropy strings with {Lempel-Ziv} algorithms.
\newblock {\em {SIAM} J. Comput.}, 29(3):893--911, 1999.
\newblock \href {https://doi.org/10.1137/S0097539797331105}
  {\path{doi:10.1137/S0097539797331105}}.

\bibitem{RePair}
N.~Jesper Larsson and Alistair Moffat.
\newblock Offline dictionary-based compression.
\newblock In {\em Data Compression Conference}, pages 296--305. IEEE Computer
  Society, 1999.
\newblock \href {https://doi.org/10.1109/DCC.1999.755679}
  {\path{doi:10.1109/DCC.1999.755679}}.

\bibitem{ManziniBurrowsWheeler}
Giovanni Manzini.
\newblock An analysis of the {Burrows\&Mdash;Wheeler Transform}.
\newblock {\em J. ACM}, 48(3):407--430, May 2001.
\newblock URL: \url{http://doi.acm.org/10.1145/382780.382782}, \href
  {https://doi.org/10.1145/382780.382782} {\path{doi:10.1145/382780.382782}}.

\bibitem{RePairentropy}
Gonzalo Navarro and Lu{\'{\i}}s M.~S. Russo.
\newblock Re-{P}air achieves high-order entropy.
\newblock In {\em DCC}, page 537. {IEEE} Computer Society, 2008.
\newblock URL: \url{http://dx.doi.org/10.1109/DCC.2008.79}, \href
  {https://doi.org/10.1109/DCC.2008.79} {\path{doi:10.1109/DCC.2008.79}}.

\bibitem{nevill1997compression}
Craig~G Nevill-Manning and Ian~H Witten.
\newblock Compression and explanation using hierarchical grammars.
\newblock {\em The Computer Journal}, 40(2\_and\_3):103--116, 1997.

\bibitem{Sequitur}
Craig~G. Nevill-Manning and Ian~H. Witten.
\newblock Identifying hierarchical structure in sequences: A linear-time
  algorithm.
\newblock {\em J. Artif. Intell. Res. (JAIR)}, 7:67--82, 1997.
\newblock \href {https://doi.org/10.1613/jair.374}
  {\path{doi:10.1613/jair.374}}.

\bibitem{ochoa2018irreducible}
Carlos Ochoa and Gonzalo Navarro.
\newblock Re{P}air and all irreducible grammars are upper bounded by high-order
  empirical entropy.
\newblock {\em {IEEE} Transactions on Information Theory}, 65(5):3160--3164,
  2019.
\newblock \href {https://doi.org/10.1109/TIT.2018.2871452}
  {\path{doi:10.1109/TIT.2018.2871452}}.

\bibitem{SLPaprox}
Wojciech Rytter.
\newblock Application of {L}empel-{Z}iv factorization to the approximation of
  grammar-based compression.
\newblock {\em Theor. Comput. Sci.}, 302(1-3):211--222, 2003.
\newblock \href {https://doi.org/10.1016/S0304-3975(02)00777-6}
  {\path{doi:10.1016/S0304-3975(02)00777-6}}.

\bibitem{sadakane2006squeezing}
Kunihiko Sadakane and Roberto Grossi.
\newblock Squeezing succinct data structures into entropy bounds.
\newblock In {\em Proceedings of the seventeenth annual ACM-SIAM symposium on
  Discrete algorithm}, pages 1230--1239. Society for Industrial and Applied
  Mathematics, 2006.

\bibitem{SLPaproxSakamoto}
Hiroshi Sakamoto.
\newblock A fully linear-time approximation algorithm for grammar-based
  compression.
\newblock {\em J. Discrete Algorithms}, 3(2-4):416--430, 2005.
\newblock \href {https://doi.org/10.1016/j.jda.2004.08.016}
  {\path{doi:10.1016/j.jda.2004.08.016}}.

\bibitem{SLPapproxNPhard}
James~A. Storer and Thomas~G. Szymanski.
\newblock The macro model for data compression.
\newblock In Richard~J. Lipton, Walter~A. Burkhard, Walter~J. Savitch, Emily~P.
  Friedman, and Alfred~V. Aho, editors, {\em STOC}, pages 30--39. ACM, 1978.

\bibitem{szpankowski1993asymptotic}
Wojciech Szpankowski.
\newblock Asymptotic properties of data compression and suffix trees.
\newblock {\em IEEE transactions on Information Theory}, 39(5):1647--1659,
  1993.

\bibitem{tabei2013succinctGrammar}
Yasuo Tabei, Yoshimasa Takabatake, and Hiroshi Sakamoto.
\newblock A succinct grammar compression.
\newblock In {\em Annual Symposium on Combinatorial Pattern Matching}, pages
  235--246. Springer, 2013.

\bibitem{takabatake2012variable}
Yoshimasa Takabatake, Yasuo Tabei, and Hiroshi Sakamoto.
\newblock Variable-length codes for space-efficient grammar-based compression.
\newblock In {\em International Symposium on String Processing and Information
  Retrieval}, pages 398--410. Springer, 2012.

\bibitem{RePairforEntropy}
Michal Vasinek and Jan Platos.
\newblock Prediction and evaluation of zero order entropy changes in
  grammar-based codes.
\newblock {\em Entropy}, 19(5):223, 2017.
\newblock \href {https://doi.org/10.3390/e19050223}
  {\path{doi:10.3390/e19050223}}.

\bibitem{RaymondWanPhD}
Raymond Wan.
\newblock {\em Browsing and Searching Compressed Documents}.
\newblock Phd thesis, Department of Computer Science and Software Engineering,
  University of Melbourne, 2003.
\newblock URL: \url{http://hdl.handle.net/11343/38920}.

\bibitem{wyner1994sliding}
Aaron~D. Wyner and Jacob Ziv.
\newblock The sliding-window {Lempel-Ziv} algorithm is asymptotically optimal.
\newblock {\em Proceedings of the {IEEE}}, 82(6):872–877, Jun 1994.

\bibitem{lz78}
Jacob Ziv and Abraham Lempel.
\newblock Compression of individual sequences via variable-rate coding.
\newblock {\em {IEEE} Trans. Information Theory}, 24(5):530--536, 1978.
\newblock \href {https://doi.org/10.1109/TIT.1978.1055934}
  {\path{doi:10.1109/TIT.1978.1055934}}.

\end{thebibliography}

\clearpage
\appendix

\begin{center}
{\Huge \sf Appendix}
\end{center}

\section{Additional material for Section~\ref{sec:definitions}}
For simplicity, if $i>j$ then $w[i\ldots j] = \epsilon$.

We extend the notion $|w|_v$ to sets of words, i.e.\ $|w|_V = \sum_{v \in V} |w|_v$.

The \emph{size} of the nonterminal $\rhs{X}$ is the length
of its right-hand side,
for a grammar $G$ or full grammar $(S,G)$
the $\rhs{G}, \rhs{S,G}$ denote the sum of lengths
of strings in $\grhs(G), \grhs(S,G)$, respectively.

\section{Expanded commentary for Section~\ref{sec:upper-bounds-on-grammar-compression}}

We prove that \RePair and \Greedy stopped after certain iteration achieve $|S|H_k(S) + o(|S| \log \sigma)$ bits
and their number of different nonterminals is small, i.e\ $\Ocomp(|S|^c)$ for some $c < 1$.
Our analysis shows that we can decrease value $c$ at the
expense of the constant hidden in $o(|S| \log \sigma)$.
This is particularly important in practical applications
as it translates to a grammar which has small output alphabet and,
as in most compressors every symbol is encoded using prefix-free code,
this reduces the bits required to store this code.
In comparison, certain dictionary methods, like \lzso,
do not have this property, as they can produce string
over a~large, i.e\ $\Omega({|S|}/{\log_\sigma |S|})$-sized alphabet,
which implies a large dictionary size.

\subsection{Proofs for Section~\ref{subsec:encoding_of_grammars} (Encoding of grammars)}
We first complete the promised argument that at least some of
the original encodings of \RePair match previously presented bound on CNF encoding.
\begin{lemma}
\label{lem:originalRepairEncodingBound}
\RePair{} original grammar encoding together with an arithmetic coding
encodes a grammar $G$ using $\nonterminals{G} \log (\nonterminals{G} + \sigma)
+ \Ocomp(\nonterminals{G} + \sigma)$ bits.
\end{lemma}
\begin{proof}
As mentioned before,
one of the encodings originally used for \RePair~\cite{RePair}
is based on division of nonterminals $G$ into groups
$g_1, \ldots, g_{z}$, where $X\rightarrow AB \in g_i$ if and only if
$A \in g_{i-1}$ and $B\in g_j$ or $B \in g_{i-1}$ and $A \in g_j$,
for some $j \leq i-1$.
To complete the definition, $g_0$ contains all letters of the input alphabet.
In short, group $g_i$ contains non-terminals which are at height $i$
in the derivation tree of $G$.
It was shown~\cite{RePair} that each group $g_i, \ i > 0$, 
can be represented as a bitvector of size $p_{i-1}^2 -p_{i-2}^2$
in which $p_i - p_{i-1}$ entries are ones,
where $p_i = (\sum_{j\leq i} |g_j|)$.
Those arrays were then encoded using arithmetic coding~\cite{RePair}.
When probabilities
$P(1) = \frac{1}{\nonterminals{G} + \sigma}$
and $P(0) = \frac{\nonterminals{G} + \sigma -1}{\nonterminals{G} + \sigma}$
are used in such coding,
each group is encoded using at most 
$(p_i - p_{i-1}) \log (\nonterminals{G} + \sigma)
 +
(p_{i-1}^2 -p_{i-2}^2) \log \frac{\nonterminals{G}+\sigma}{\nonterminals{G}+\sigma-1}  + \Ocomp(1)$ bits.
Summing over all groups we get that the size is bounded by
\begin{equation*}
\sum_{i=1}^z \left( (p_i - p_{i-1}) \log(\nonterminals{G} + \sigma) +
(p_{i-1}^2 -p_{i-2}^2) \log \frac{\nonterminals{G} + \sigma }{\nonterminals{G}+\sigma-1}  + \Ocomp(1)  \right)
\end{equation*}
As the sum telescopes, this can be bounded by:
\begin{align*}
&
(p_z - p_0)\log (\nonterminals{G} + \sigma) +
(p_{z-1}^2 - p_0^2) \log \frac{\nonterminals{G} + \sigma}{\nonterminals{G} + \sigma-1} + \Ocomp(z)\\ 
 &\leq
\nonterminals{G}\log (\nonterminals{G} + \sigma) +
(\nonterminals{G} + \sigma) ^2 \log \frac{\nonterminals{G} + \sigma}{\nonterminals{G} + \sigma-1}  + \Ocomp(\nonterminals{G} )
\enspace ,
\end{align*}
when the inequality follows as $p_z = \nonterminals{G} + \sigma$ and $p_0 =\sigma$.
As $x \log \frac{x}{x-1} = \Ocomp(1)$ we have that the total cost of encoding is at most:
\begin{equation*}
\nonterminals{G} \log (\nonterminals{G} + \sigma)
+ \Ocomp(\nonterminals{G} + \sigma) \enspace . \qedhere
\end{equation*}
\end{proof}
It is worth mentioning that instead of using fixed arithmetic coder original
solution used an adaptive one, which can be beneficial in practice.

\begin{proof}[Proof of Lemma~\ref{lem:grammar_encodings}]
	In each of the encodings we need to store sizes of 
	the right-hand sides of the nonterminals,
	storing these values in unary gives sufficient bounds
	for each method, i.e.~it consumes $\Ocomp(\rhs{S',G})$ bits.
	In the case of entropy coding, it is sometimes more practical
	to add separator characters to the concatenation of rules' right-hand sides,
	observe that this adds extra $\Ocomp(\rhs{S',G} )$ bits.
	To see this observe that if we have string $S \in \Sigma$ encoded with
	some prefix code $p:\Sigma \to \{0,1\}^*$
	in a way that the encoding uses $B$ bits,
	then we can devise the new code $p':(\Sigma \cup \#) \to \{0,1\}^*$
	which encodes $w$ with $B + |w|$
	bits and supports new character $\#$ by simply setting $p'(\#) = 0$
	and $p'(\sigma) = 1 p(\sigma)$, for $\sigma \neq \#$.
	Thus adding $f$ occurrences of a new letter to the string $w$ 
	increases encoding size by at most $|w| + f$.
	As in our case $ f < |w| = \rhs{S',G}$ the claim holds.
	
	The bound for fully naive encoding is obvious,
	as symbols on the right-hand sides are either one of the original $\sigma$ symbols or one of $\nonterminals{G}$ nonterminals.
	
	
	In the naive encoding the starting string is encoded
	using entropy coder (e.g.\ Huffman),
	the grammar is encoded in the same way as in the fully naive encoding.

	For the case of incremental encoding, let us first argue
	that we can always reorder nonterminals appropriately.
	Consider the following procedure:
	start with sequence $R$ which contains letters of the original alphabet in an arbitrary order.
	Through the procedure the sequence $R$ will contain already processed letters and nonterminals.
	We append nonterminals to $R$ in the following way:
	take the first unprocessed nonterminal $X\rightarrow YZ$
	with $Y$ being the earliest possible nonterminal or letter occurring in $R$,
	and append $X$ to $R$.
	At the end of the above procedure it is necessary to 
	rename nonterminals so that they will correspond to numbers
	in created sequence.
	
	In incremental encoding we encode the second element of pairs naively.
	As the first elements are sorted, we store only consecutive differences.
	For this differences we use Elias $\delta$-codes.
	These codes, for a number $x$ consume at most $\log x + 2 \log \log(1 +  x) + 1$ bits.
	As all these differences sum up to $\nonterminals{G} + \sigma$,
	it can be shown that this encoding takes
	at most $\Ocomp(\nonterminals{G} + \sigma)$ bits~\cite[Theorem 2.2]{sadakane2006squeezing}).
	Observe that encoding these differences in unary is also within
	the required bounds.
\end{proof}

\begin{lemma}
	\label{lem:huffman_entropy_coder}
	Let $S$ be a string over alphabet $\Sigma \subseteq \{1, \ldots, |S|\}$.
	Then there is a Huffman-based encoder which uses at most $|S|H_0(S) + \Ocomp(S)$ bits
	to encode $S$ (together with a dictionary).
\end{lemma}
\begin{proof}
	It is a well-known fact, that Huffman encoding of a string $S$ takes at most
	$|S|H_0(S) + |S|$ bits.
	Observe that we do not need to store the dictionary explicitly:
	assuming that we use a deterministic procedure to build the Huffman tree we only need to
	store frequencies of each letter.
	It is sufficient to store the frequencies in unary, i.e.\ we store
	string $1^{f_1}01^{f_2}0\ldots01^{f_{|S|}}$,
	where $f_i$ is a frequency of $i$-th letter, note that some $f_i$s could be $0$.
	Storing the frequencies takes $2|S|-1 = \Ocomp(|S|)$ bits.
\end{proof}

\begin{lemma}[Full version of Lemma~\ref{lem:entropy_coding_concatenation}]
	Let $S$ be a string over an alphabet of size $\sigma$
	and $(S', G)$ a full grammar generating it,
	and $k = o(\log_\sigma |S|)$.
	Let $S_G$ be a concatenation of strings in $\grhs(S',G)$.
	Assume that no two nonterminals in $(S', G)$ have the same expansion.
	If $|S_G| = \Ocomp\left( \frac{|S|}{\log_\sigma |S|} \right)$ then
	\begin{equation*}
	|S_G| H_0(S_G) \leq |S|H_k(S) + \nonterminals{G}\log |S| +  o(|S| \log \sigma) \enspace .
	\end{equation*}
	In particular, if additionally $\nonterminals{G} = o\left(\frac{|S|}{\log_\sigma |S|}\right)$ then
	\begin{equation*}
	|S_G| H_0(S_G) \leq |S|H_k(S) + o(|S| \log \sigma) \enspace .
	\end{equation*}
	The same bounds apply to the entropy coding of $(S',G)$.
\end{lemma}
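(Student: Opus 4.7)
The plan is to follow the hint in the paper: construct a parsing $Y$ of $S$ whose symbol frequencies closely match those of $S_G = \grhs(S', G)$, apply Theorem~\ref{thm:main_estimation} to $Y$, and then transfer the entropy bound from $Y$ back to $S_G$ at the cost of a $\nonterminals{G}\log|S|$ additive term. I may assume without loss of generality that every nonterminal of $G$ is reachable from $S'$, since otherwise its rule can be discarded, which only shrinks the left-hand side.

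To build $Y$, I start from the parsing of $S$ induced by the starting string: the $j$-th phrase is $\exp(S'_j)$, which is well-defined because distinct nonterminals have distinct expansions. I then process the nonterminals in a top-down BFS order from $S'$: for each $X$ in turn, pick one phrase of the current parsing equal to $\exp(X)$ and refine it into $|\grhs(X)|$ consecutive sub-phrases along $\grhs(X) = \alpha_1 \cdots \alpha_r$, namely $\exp(\alpha_1), \ldots, \exp(\alpha_r)$. Availability of such a phrase is easy to see: either $X$ already appears in $S'$, or its BFS-parent $Y$ was processed earlier and introduced an occurrence of $X$ when $\grhs(Y)$ was substituted in. Each refinement removes one copy of $X$ from the multiset of phrases and adds one copy of each $\alpha_j$, so after all $\nonterminals{G}$ steps one has $|Y| = |S_G| - \nonterminals{G}$ and $|Y|_z = |S_G|_z - \mathbb{1}[z \in \nont]$ for every symbol $z$.

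Next I invoke Theorem~\ref{thm:main_estimation} on $Y$. Since $|Y| \leq |S_G| = \Ocomp(n/\log_\sigma n)$ and $k = o(\log_\sigma n)$, the theorem yields $|Y|H_0(Y) \leq |S|H_k(S) + o(n\log\sigma)$. To pass from $Y$ to $S_G$ I rewrite $|w|H_0(w) = |w|\log|w| - \sum_z |w|_z \log |w|_z$ and compare term by term. The ``counts'' contribution $\sum_{z\in\nont}[(|Y|_z+1)\log(|Y|_z+1) - |Y|_z\log|Y|_z]$ is nonnegative (with the convention $0\log 0 = 0$, the function $n \mapsto n\log n$ is nondecreasing on $\mathbb{N}$), so ignoring it only helps; the ``total size'' contribution $|S_G|\log|S_G| - |Y|\log|Y|$ is at most $\nonterminals{G}\log|S_G| + \Ocomp(\nonterminals{G})$ by $\log(1+x) \leq x\log \mathrm{e}$, and hence at most $\nonterminals{G}\log|S| + \Ocomp(\nonterminals{G})$ since $|S_G| \leq |S|$. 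Combining these and absorbing $\Ocomp(\nonterminals{G}) = o(n\log\sigma)$ (which holds because $\nonterminals{G} \leq |S_G| = \Ocomp(n/\log_\sigma n)$) yields the claimed inequality. The ``in particular'' strengthening is then immediate, because $\nonterminals{G} = o(|S|/\log_\sigma|S|)$ gives $\nonterminals{G}\log|S| = o(|S|\log\sigma)$.

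The main obstacle is the construction of $Y$: one must check that a phrase equal to $\exp(X)$ is present in the current parsing exactly when $X$ is about to be processed, which is what forces the BFS ordering and relies on reachability of every nonterminal from $S'$. After that, the remaining work is short entropy bookkeeping using only the tools developed in Section~\ref{sec:entropy_bound_for}.
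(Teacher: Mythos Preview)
Your proposal is correct and follows essentially the same approach as the paper: build a parsing of $S$ by expanding one occurrence of each nonterminal in $S'$, apply Theorem~\ref{thm:main_estimation} to that parsing, and then account for the $\nonterminals{G}$ missing nonterminal occurrences when passing to $S_G$. The only cosmetic differences are that the paper uses the simpler loop ``while some unprocessed nonterminal occurs in $S''$, expand one occurrence'' instead of your explicit BFS order, and it phrases the final step as ``adding one symbol to a string of length at most $m$ increases its zero-order entropy by at most $\log m + \Ocomp(1)$'' rather than your term-by-term comparison of $|w|\log|w| - \sum_z |w|_z\log|w|_z$; both arguments yield the same $\nonterminals{G}\log|S| + \Ocomp(\nonterminals{G})$ overhead.
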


\begin{proof}
	From the full grammar $(S', G)$
	we create a parsing $Y_{S}$ of string $S$ of size
	$|Y_S| \leq \rhs{G} + |S'|$
	by the following procedure:
	Take the starting string $S'$ as $S''$.
	While there is a nonterminal $X$ such that it occurs in $S''$
	and it was not processed before,
	replace one of its occurrences in $S''$
	with the $\grhs(X)$.
	Clearly, $S''$ is over the input alphabet and nonterminals and
	$|S''| = |S'| + \rhs{G} - \nonterminals{G} = |S_G| - \nonterminals{G} 
	= \Ocomp\left(\frac{|S|}{\log_\sigma |S|}\right)$,
	as each right-hand side is substituted and each nonterminal
	is replaced once.
	String $S''$ induces the parsing $Y_S$ of $S$
	and applying 
	Theorem~\ref{thm:main_estimation} to $Y_S$
	yields
\begin{equation}
	\label{eq:lem_entropy_concatenation_parsing_bounds}
	|Y_{S}|H_0(Y_{S}) \leq |S|H_k(S) + o(|S| \log \sigma) \enspace.
\end{equation}
	
	It is left to compare $|Y_{S}|H_0(Y_{S})$
	with the entropy of $S_G$, i.e.\ $|S_G|H_0(S_G)$.
	Note that up to permuting of letters, $S''$ is a concatenation
	of $S'$ and $\grhs{(G)}$ with one occurrence of each nonterminal removed.
	From the definition of entropy, adding one letter to string 
	increases encoding cost by at most logarithmic factor, i.e.\
	for any string $w$ and letter $a$ it holds that:
	\begin{align*}
	&|wa|H_0(wa) - |w|H_0(w) \\
	&=
	\sum_{b \in \Sigma_{wa}} |wa|_b \log \frac{|wa|}{|wa|_b} - 
	\sum_{b \in \Sigma_{w}} |w|_b \log \frac{|w|}{|w|_b}\\
	 &=
	\sum_{b \in \Sigma_{wa}\setminus\{a\}} |w|_b
	\left( \log \frac{|w|+1}{|w|_b} - \log \frac{|w|}{|w|_b}  \right)
	 + |wa|_a \log \frac{|wa|}{|wa|_a} - |w|_a\log \frac{|w|}{|w|_a} & \enspace
	\text{as $|wa|_a = |w|_b$}\\
	&= \sum_{b \in \Sigma_{wa}\setminus\{a\}} |w|_b \log \frac{|w|+1}{|w|}
	+ |w|_a \left(
	\log \frac{|w|+1}{|wa|_a} - \log \frac{|w|}{|w|_a} \right) + \log \frac{|w|+1}{|wa|_a}\\
	&\leq |w| \log \frac{|w|+1}{|w|} + \log (|w|+1)\\
	&\leq  \log (|w|+1) + \log \mathrm{e} = \log (|w|+1) + \Ocomp(1) \enspace ,
	\end{align*}
	where we assumed that the $|w|_a \log (\cdot)$ summand is $0$ if $|w|_a = 0$.
	Thus we obtain the bound on the entropy of $S_G$:
	\begin{align*}
	|S_G|H_0(S_G) &\leq
	|Y_S|H_0(Y_S) + \nonterminals{G}\log|S_G| + \Ocomp(\nonterminals{G})\\
	&\leq |S|H_k(S) + \nonterminals{G} \log |S| + \Ocomp(\nonterminals{G})  + o(|S| \log \sigma) \enspace . &\text{by~\eqref{eq:lem_entropy_concatenation_parsing_bounds} and as
	$|S_G| = \Ocomp(|S|)$
	 }
	\end{align*}
	The additional assumption on $\nonterminals{G}$ 
	in the second statement yields the second bound.
	The bounds on the entropy encoding follow from the fact that,
	by Lemma~\ref{lem:huffman_entropy_coder}, Huffman-based encoder
	uses at most $|S_G|H_0(S_G) + \Ocomp(|S_G|)$ bits for a string $S_G$.
\end{proof}
\subsection{Proofs for Section~\ref{subsec:repair} (\RePair)}

We 
state the
properties of \RePair,
some were used in the previous analysis~\cite{RePairentropy}:

\begin{lemma}[{\cite[Lem.~3]{RePairentropy}}]
	\label{lem:Repairfrequencedoesnotdecrease}
	The frequency of the most frequent pair in the working string does not increase
	during \RePair's execution.
\end{lemma}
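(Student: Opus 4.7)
The plan is to show that a single \RePair step cannot create a pair whose frequency (number of non-overlapping occurrences) exceeds the frequency $f$ of the pair $AB$ that is being replaced; iterating this across all steps yields the lemma. Let $S'$ denote the working string before the step and $S''$ the working string after replacing all (chosen non-overlapping) occurrences of $AB$ with the fresh nonterminal $X$, and let $CD$ be an arbitrary pair appearing in $S''$. The strategy is a case analysis on whether the symbols $C, D$ coincide with the new symbol $X$, and to relate each occurrence of $CD$ in $S''$ back to a witness in $S'$ whose multiplicity is controlled by $|S'|_{AB} = f$.

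First I would handle the easy cases. If neither $C$ nor $D$ equals $X$, then any occurrence of $CD$ in $S''$ is inherited from an occurrence of $CD$ in $S'$ (since $X$ is a brand new symbol), giving $|S''|_{CD} \le |S'|_{CD} \le f$ by maximality of $f$. If exactly one of $C, D$ equals $X$, say $C = X$, then each occurrence of $CD$ in $S''$ corresponds to an occurrence of $ABD$ in $S'$, which in particular is an occurrence of $AB$ in $S'$; distinct occurrences of $XD$ give distinct occurrences of $AB$, so $|S''|_{CD} \le |S'|_{AB} = f$. The case $D = X$ is symmetric.

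The delicate case is $C = D = X$, where a run of consecutively placed replacements $AB AB \cdots AB$ in $S'$ collapses to a run of $X$'s in $S''$. Here I would argue that if such a run has length $m$ (i.e.\ it arose from $m$ non-overlapping occurrences of $AB$ packed with no gap), then it contributes at most $\lfloor m/2 \rfloor$ non-overlapping occurrences of $XX$ to $S''$; summing over all runs, the total count of $XX$ is at most half the number of replaced $AB$'s, hence $\le f/2 \le f$. Finally, for the pair $AB$ itself: since \RePair replaces a maximal set of non-overlapping occurrences, any surviving occurrence of $AB$ in $S''$ must overlap some replaced one, which is only possible when $A = B$, and a straightforward bookkeeping shows $|S''|_{AB} \le f$ in this residual case as well.

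The only genuinely non-trivial point is the $XX$ case, since one must convince oneself that consecutive replacements do not accidentally create a fresh pair more frequent than the one being removed; the counting above (essentially a ``half of a packed run'' bound) handles this cleanly, and no other subtlety arises. Combining all cases gives $|S''|_{CD} \le f$ for every pair $CD$ in $S''$, so the maximum pair frequency is non-increasing after each \RePair step, as claimed.
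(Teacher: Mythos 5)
Your proof is correct, and since the paper only \emph{cites} this lemma (it is~[Lem.~3] of the RePair-entropy reference, with no proof reproduced here), the natural comparison is with the paper's proof of the analogous statement for \Greedy (Lemma~\ref{greedy_freq_does_not_increase}), which specializes to \RePair when $|w|=2$. The routes are genuinely different in how they charge a new pair. The paper's argument identifies, for every freshly created pair $XD$ with $D\neq X$, a \emph{pre-existing} pair — namely $B D$, where $B$ is the last symbol of the replaced digram $AB$ — and observes $|S''|_{XD}\le |S'|_{BD}$, then uses that $|S'|_{BD}\le f$ by maximality; your argument instead maps each occurrence of $XD$ to the position of the replaced $AB$ it contains and bounds directly by the number of performed replacements, hence by $f$, without naming any witness pair. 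Your route has the mild advantage of treating the $XD$, $DX$ and $XX$ cases by a single ``point to the replaced occurrence'' device, whereas the paper's witness-pair trick is terser but needs separate attention to $A=B$ and to $X X$ (``shown in the same way''). Two small things you should tighten: first, the paper's $|\cdot|_{\cdot}$ counts possibly-overlapping occurrences, so when $A=B$ you have $|S'|_{AB}\geq f$ rather than equality — your inequality chain should go through ``number of replaced occurrences $\leq f$'' rather than asserting $|S'|_{AB}=f$; second, the ``residual $AB$'' case is even simpler than you suggest: since \RePair replaces a \emph{maximal} non-overlapping set and two adjacent unreplaced symbols of $S'$ remain adjacent in $S''$, no occurrence of $AB$ can survive at all, so $|S''|_{AB}=0$, no bookkeeping required.
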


\begin{lemma}
	\label{lem:most_frequent_and_number_of_nonterminals}
	If the most frequent pair in the working string occurs at least $z$ times
	then $\nonterminals{G} < \frac{2|S|}{z}$.
\end{lemma}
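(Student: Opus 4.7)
The plan is to combine Lemma~\ref{lem:Repairfrequencedoesnotdecrease} (the frequency of the most frequent pair is non-increasing throughout \RePair's execution) with a simple accounting of how much the working string shrinks at each iteration. Suppose at the current moment the most frequent pair has frequency at least $z$, and let $g = \nonterminals{G}$ be the number of nonterminals introduced so far. By Lemma~\ref{lem:Repairfrequencedoesnotdecrease}, at each of the $g$ previous iterations the frequency of the pair being replaced was also at least $z$, since frequencies only went down over time.

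Next I would quantify the effect of a single replacement step. When a pair $AB$ with frequency $f$ (counted as maximal non-overlapping occurrences, which is what \RePair uses) is replaced by a fresh nonterminal $X$, each of the $f$ occurrences of length $2$ becomes a single symbol, so the working string length decreases by exactly $f$. Hence each of the $g$ replacements shrinks the working string by at least $z$, and since the working string started at length $n$ and must still have positive length (in fact length at least $2$, to even admit a pair), we get
\[
n - g z \;\geq\; 0 \enspace ,
\]
which rearranges to $g < n/z$ (strictness comes from the fact that a working string of length $0$ contains no pairs, so the last replacement could not have happened if equality held).

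There is no real obstacle here; the only thing to be careful about is the non-overlapping count for pairs of the form $AA$, which is precisely the convention under which Lemma~\ref{lem:Repairfrequencedoesnotdecrease} and the length-decrease-equals-frequency equality both hold simultaneously, so the two ingredients are compatible without further adjustment.
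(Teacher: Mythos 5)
Your proof is correct and follows essentially the same route as the paper's: invoke Lemma~\ref{lem:Repairfrequencedoesnotdecrease} to conclude that every prior replacement involved a pair of frequency at least~$z$, note that replacing a pair of frequency $f$ shortens the working string by exactly $f$ symbols, and conclude that after $g$ replacements the length is at most $n - gz$, which must remain positive since the current working string still contains a pair occurring $z \geq 1$ times. The paper states this in two terse sentences; you spell out the same accounting with a bit more care about the strict inequality, but there is no genuine difference in approach.
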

\begin{proof}
	By Lemma~\ref{lem:Repairfrequencedoesnotdecrease} all replaced pairs
	had frequency at least $z$ thus each such a replacement
	removed at least $z/2$ letters from the starting string
	(note that some occurrences could overlap).
	As it initially had $|S|$ letters, this cannot be done $|S|/(z/2)$ times.
\end{proof}

We also need the property that different nonterminals expand to different
substrings as we want to apply Theorem~\ref{thm:main_estimation}
to a working string, i.e.\ we need the fact that the working string
induces a parsing.
This can be proved by simple induction,
\cite[Lem.~4]{RePairentropy} gives a more general statement.
\begin{lemma}[{\cite[Lem.~4]{RePairentropy}}]
	\label{lem:repair_different_expansions}
	Let $G$ be a grammar generated by \RePair at any iteration.
	Then no two nonterminals $X, Y \in G$ have the same expansion.
\end{lemma}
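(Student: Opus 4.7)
The plan is induction on the iteration count $j$. The base case is vacuous since $G$ is initially empty. For the inductive step, assume that after iteration $j-1$ the nonterminals $Y_1, \ldots, Y_{j-1}$ have pairwise distinct expansions, and that at iteration $j$ the most frequent pair $(C_j, D_j)$ in the current working string is replaced by the fresh nonterminal $Y_j \to C_j D_j$. First I would record a structural invariant: once a pair $(A,B)$ is replaced at some iteration, it cannot reappear in any later working string, since at each subsequent iteration the only freshly created adjacencies involve the brand-new nonterminal introduced at that step, and that symbol differs from every previously existing one. In particular $(C_j, D_j) \neq (A_i, B_i)$ for any previously introduced $Y_i \to A_i B_i$.

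Now suppose for contradiction $\exp(Y_j) = \exp(Y_i) = w$ for some $i < j$. Then $w$ admits two factorizations $w = \exp(A_i)\exp(B_i) = \exp(C_j)\exp(D_j)$. If the split lengths agree, i.e.\ $|\exp(A_i)| = |\exp(C_j)|$, then the two prefixes and the two suffixes coincide, and the inductive hypothesis (which makes $\exp$ injective on the nonterminals present at iteration $j-1$) forces $A_i = C_j$ and $B_i = D_j$, contradicting the invariant above. The harder case is when the split lengths differ; WLOG take $|\exp(C_j)| > |\exp(A_i)|$. Here I would fix one occurrence of $(C_j, D_j)$ in the working string at iteration $j-1$, let $q$ be its starting position in $S$, and track the unique symbol $\lambda_k(q)$ of the iteration-$k$ working string that covers $S$-position $q$, for $k = 0, 1, \ldots, j-1$. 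Because \RePair only merges adjacent symbols and never splits them, $\lambda_k(q)$ always starts exactly at $q$, and $|\exp(\lambda_k(q))|$ is monotone non-decreasing in $k$ and bounded above by $|\exp(C_j)| < |w|$. At iteration $i-1$ the critical subcase is $|\exp(\lambda_{i-1}(q))| = |\exp(A_i)|$: then the inductive hypothesis identifies $\lambda_{i-1}(q)$ with $A_i$ and, by the same argument applied to the right neighbour, forces the pair at $q$ to be $(A_i, B_i)$; but that pair is then replaced by $Y_i$ at iteration $i$, making $|\exp(\lambda_i(q))| = |w|$ and violating the bound. The remaining subcases, where $|\exp(\lambda_{i-1}(q))|$ is strictly smaller or strictly larger than $|\exp(A_i)|$, are eliminated by invoking the inductive hypothesis at an appropriate earlier iteration to pin down the identity of the symbol on the growth path of $\lambda_k(q)$ via its expansion, and deriving either a duplication of expansions at that past iteration or an inconsistency with the monotone length bound.

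The main obstacle I anticipate is this differing-split case. The matching-split case collapses immediately once one has the injectivity of $\exp$ from the inductive hypothesis, but in the differing-split case the argument has to follow the merge history of the window covering a specific occurrence of $w$ through many iterations, invoking the inductive hypothesis at several past iterations — not only at iteration $j-1$ but at the birth iterations of $A_i$, $C_j$, and any intermediate nonterminal whose expansion is a prefix of $w$. Monotonicity of the symbol length at $q$ combined with uniqueness of nonterminals with a prescribed expansion is what drives each subcase to a contradiction.
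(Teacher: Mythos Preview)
The paper does not actually prove this lemma: it simply asserts that it ``can be proved by simple induction'' and cites \cite[Lem.~4]{RePairentropy} for a more general statement. So there is no in-paper argument to compare yours against beyond that one line.

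Your induction and your matching-split case are correct and clean: injectivity of $\exp$ at iteration $j-1$ together with the ``a replaced pair never reappears'' invariant immediately gives $(C_j,D_j)=(A_i,B_i)$, a contradiction.

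The differing-split case, however, has a concrete gap. In your ``critical subcase'' $|\exp(\lambda_{i-1}(q))| = |\exp(A_i)|$ you assert that the right neighbour of $\lambda_{i-1}(q)$ at iteration $i-1$ is $B_i$; this is false. The occurrence of $(C_j, D_j)$ at iteration $j-1$ has a phrase boundary at $S$-position $q + |\exp(C_j)|$, and since \RePair only merges adjacent phrases, phrase boundaries can disappear but never appear as iterations advance --- so that boundary is also present at iteration $i-1$. Hence the right neighbour $\mu$ starts at $q + |\exp(A_i)|$ and must end no later than $q + |\exp(C_j)| - 1$, giving
\[
|\exp(\mu)| \;\le\; |\exp(C_j)| - |\exp(A_i)| \;<\; |w| - |\exp(A_i)| \;=\; |\exp(B_i)| ,
\]
so $\mu \neq B_i$. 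The pair at $q$ is therefore \emph{not} $(A_i,B_i)$, no merge to $Y_i$ happens there at iteration $i$, and the length bound $|\exp(\lambda_i(q))| < |w|$ is not violated --- your contradiction does not fire. The same persistent boundary at $q+|\exp(C_j)|$ blocks the right neighbour from ever being $B_i$ at any iteration $\le j-1$, so the other subcases you wave off cannot be closed by the same mechanism either. A correct treatment of the unequal-split case needs an additional idea, e.g.\ a recursive descent into the parse tree of $C_j$ (comparing its left-spine nodes against $A_i$ level by level and producing a smaller counterexample), or a suitably strengthened inductive hypothesis.
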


Equipped with the above properties we are ready to prove the Theorem~\ref{thm:RePairStopped}.


\begin{proof}[proof of Theorem~\ref{thm:RePairStopped}]
We first prove the first part.
Assume that the working string $S'$ of \RePair has size
$c \geq \frac{(\xi + 3 \beta)|S|}{\log_\sigma |S|} + 2|S|^\frac{2}{\beta} + 1$.
Then $S'$ induces a parsing $Y_S = y_1 y_2 \ldots y_c$,
as by Lemma~\ref{lem:repair_different_expansions} different non-terminals expand
to different phrases.
A phrase $y$ in $Y_S$ is \emph{long} if $|y| \geq \frac{\log_\sigma |S|}{\beta} - 1$,
and \emph{short} otherwise.
Let $c_l$ be the number of long phrases.
Then
\begin{equation*}
c_l \cdot \left( \frac{\log_\sigma |S|}{\beta} - 1 \right)
\leq |S| \quad \implies \quad 
c_l
\leq
\frac{\beta \cdot |S|}{\log_\sigma |S| - \beta}  \enspace .
\end{equation*}

If $\beta > \frac{1}{3}\log_\sigma |S|$ then $c > |S|$,
thus the first part of the Theorem trivially holds.
So assume that $\beta \leq \frac{1}{3}\log_\sigma |S|$.
There are exactly $c-1$ digrams in $Y_S$.
Any long phrase can partake in $2$ pairs,
so the number of remaining pairs, which consist of two short phrases, is at least:
\begin{align}
\label{eq:short_phrases_lower_bound}
c - 1 - 2\cdot c_l 
&\notag\geq
\left(\frac{(\xi+3\beta)\cdot |S|}{\log_\sigma |S|} + 2|S|^\frac{2}{\beta} + 1 \right)- 1 - \frac{2\beta \cdot |S|}{\log_\sigma|S| - \beta}\\
&\notag\geq
\frac{(\xi+3\beta)\cdot |S|}{\log_\sigma |S|}
- \frac{2\beta \cdot |S|}{\log_\sigma|S| - \frac{1}{3}\log_\sigma |S|}  + 2|S|^\frac{2}{\beta}
\enspace & \text{as $\beta \leq \frac{1}{3} \log_\sigma |S|$}\\
&\notag>
\frac{(\xi+3\beta)\cdot |S|}{\log_\sigma |S|}
- \frac{3\beta \cdot |S|}{\log_\sigma |S|} + 2|S|^\frac{2}{\beta}\\
&\geq
\frac{\xi |S|}{\log_\sigma |S|} + 2|S|^\frac{2}{\beta}
\end{align}

On the other hand, the number of different short phrases is at most:
\begin{equation}
\label{eq:phrases_upper_bound}
\sum_{ i = 1}^{\left\lfloor \frac{\log_\sigma |S|}{\beta} - 1\right\rfloor }  \sigma^i
< \sigma^\frac{\log_\sigma |S|}{\beta} = |S|^\frac{1}{\beta} \enspace,
\end{equation}
as it is a geometric sequence.
As there are at most $|S|^\frac{1}{\beta} \cdot |S|^\frac{1}{\beta} = |S|^\frac{2}{\beta}$ different pairs consisting of short phrases,
there must exist a pair occurring at least:
\begin{equation}
\label{eq:eq_freq_repair}
\frac{\frac{\xi |S|}{\log_\sigma |S|} + 2|S|^\frac{2}{\beta} }{|S|^\frac{2}{\beta}}
\geq \frac{\xi \cdot |S|^{1-\frac{2}{\beta}}}{\log_\sigma |S|} + 2\enspace
\end{equation}
times, which is larger than $2$.
Note that this, in particular, implies that
if the working string is as long as $c$
then there is another iteration of \RePair.
Since there is a pair with frequency of value in~\eqref{eq:eq_freq_repair}
by Lemma~\ref{lem:most_frequent_and_number_of_nonterminals}
we can estimate the number of nonterminals at this point, $\nonterminals{G}$, in grammar $G$ as:
\begin{equation}
\label{eq:repair_stopped_grammar_size}
\nonterminals{G} < \frac{2|S|}{
	\frac{\frac{\xi |S|}{\log_\sigma |S|} + 2|S|^\frac{2}{\beta}}{|S|^\frac{2}{\beta}}}
\leq \frac{2}{\xi}|S|^\frac{2}{\beta} \cdot \log_\sigma |S| \enspace .
\end{equation}
Now consider the first iteration when $|S'| <
\frac{(\xi + 3 \beta)|S|}{\log_\sigma |S|} + 2|S|^\frac{2}{\beta} + 1$.
By the above reasoning, at this point the grammar must be of size at most
$\frac{2}{\xi}|S|^\frac{2}{\beta} \cdot \log_\sigma |S| + 1$,
this finishes the proof of the first part of the Theorem.

Applying Theorem~\ref{thm:main_estimation} we
can bound the entropy of the working string $S'$ by:
\begin{equation}
\label{eq:repair_stopped_string_size}
|S'|H_0(S') \leq |S|H_k(S) + o(|S|\log \sigma) \enspace ,
\end{equation}
as $|S'| = \Ocomp\left(\frac{|S|}{\log_\sigma |S|}\right)$.

Combining~\eqref{eq:repair_stopped_grammar_size}
with Lemma~\ref{lem:grammar_encodings} gives that 
the size of the naive and the incremental encoding
of full grammar $(S', G)$ is bounded by $|S'|H_0(S') + o(|S|\log \sigma)$,
note that we use the assumption that $\beta > 2$.
Together with~\eqref{eq:repair_stopped_string_size} this gives
that the encoding size is bounded by
$|S|H_k(S) + o(|S|\log \sigma)$ for both
the naive and the incremental encoding.
Finally, the bound for the entropy coding of full grammar $(S', G)$
follows directly from Lemma~\ref{lem:entropy_coding_concatenation},
as $\rhs{S', G}= |S'| + 2\nonterminals{G} = \Ocomp\left(\frac{|S|}{\log_\sigma S}\right)$.
\end{proof}

\begin{proof}[Proof of Lemma~\ref{lem:15entropyRePairincrementallowerbound}]
	Fix $n$ and an alphabet
	$\Gamma = \{a_1, a_2, \ldots, a_n, \# \}$.
	Consider the word $S$ which contains all letters $a_i$
	with $\#$ in between and ended with $\#$, first in order from $1$ to $n$,
	then in order $n$ to $1$:
	\begin{equation*}
	S = a_1 \#  a_2 \# a_3 \# \cdots a_{n-1} \# a_n \#
	a_n \# a_{n-1} \cdots a_2 \#  a_1 \# \enspace .
	\end{equation*}	
	Then
	\begin{align*}
	|S|H_0(S)
	&=
	2n\log \frac{4n}{2} + 2n \log \frac{4n}{2n}\\
	&=
	2n \log \left( 2n \cdot 2 \right)\\
	&=
	\frac{|S|}{2} \log |S| \enspace .
	\end{align*}
	
	Every pair occurs twice in $S$.
	We assume that \RePair takes pairs in left-to-right order in case of a tie
	(which is true for standard implementations),
	hence \RePair will produce a starting string
	\begin{equation*}
	S' = X_1 X_2 \ldots X_n X_n X_{n-1} \ldots X_1 \enspace .
	\end{equation*}
	Its entropy is
	\begin{align*}
	|S'|H_0(S')
	&=
	\frac{|S|}{2}\log \frac{|S|/2}{2}\\
	&=
	\frac{|S|}{2}\log |S| - |S| \enspace ,
	\end{align*}	
	i.e.\ the entropy of working string decreased
	only by a lower order term $|S|$ during the execution of \RePair.
	The produced grammar contains $\frac{|S|}{4}$ rules,
	thus by the assumption that grammar of size $g$ takes at least
	$g\log g - \Ocomp(g)$ bits
	the total cost of encoding is at least:
	\begin{equation*}
	\frac{|S|}{2}\log |S| - |S| + \frac{|S|}{4} \log \frac{|S|}{4}- \Ocomp \left(\frac{|S|}{4}\right) = \frac{3|S|}{4} \log |S| - \Ocomp(S) \enspace .
	\end{equation*}

	The same claim holds for entropy encoding of the grammar:
	Let $S_G$ be a string obtained by concatenating 
	the right-hand sides of $(S', G)$.
	We have
	\begin{equation*}
	|S_G| H_0(S_G) = \frac{3|S|}{4} \log |S| - \Ocomp(|S|) \enspace ,
	\end{equation*}
	as each of letters $a_i$ and nonterminals $X_i$ occurs constant
	number of times in $S_G$.
\end{proof}

\paragraph{Weakly non-redundant grammars}
\label{par:weakly_non_red_grammars}
We define a class of \emph{weakly non-redundant} grammars,
which include \RePair and \Greedy.
These grammars have some nice and natural properties,
which we later employ in our analysis of \RePair and \Greedy:
Firstly, the full grammar of a weakly non-redundant grammar
can be obtained by a series of compression steps,
in which we replace only substrings of the working string
(so the rules of the grammars are never altered),
and those substrings occur at least twice in the working string.
In particular, each such compression reduces the length of the starting string
and does not increase the sums of lengths of the right-hand sides
and starting string of the full grammar;
this is showed in Lemma~\ref{lem:weakly_non_redundant_op}.
Secondly, if the entropy of the input string is large,
we can estimate the cost of incremental and entropy codings
of a full grammar in terms of the zeroth-order entropy of the input string,
see Lemma~\ref{lemma15ireplacement} and Lemma~\ref{lemma15replacemententropy}.
These results are of their own interest:
for strings of entropy $H_0(S) \geq \frac{1}{2}\log |S|$
even not-so-reasonable grammar transform (with entropy encoding)
gives an output of size $\frac{3}{2}|S|H_0(S) + \Ocomp(|S|)$.
This is nontrivial, as considered transforms can introduce
up to $|S|/2$ nonterminals, and each nonterminal
in entropy coding seems to require additional $\log|S|$ bits.
Moreover, it seems that we can extend this result
so that for strings of entropy $H_0(S) \geq \frac{1}{\beta}\log |S|$
the output size will be  $((1+\beta)/2 )|S|H_0(S) + \Ocomp(|S|)$.
\begin{definition}(weakly non-redundant grammars)
	\label{def:weakly_non}
	A full grammar $(S',G)$ is \emph{weakly non-redundant}
	if every nonterminal $X$, except the starting symbol,
	occurs at least twice in the derivation tree of $(S',G)$,
	for every nonterminal $X$ we have $\rhs{X} \geq 2$
	and no two different nonterminals have the same expansion.
\end{definition}
For example grammar: $S\rightarrow AA, A\rightarrow Ba, B\rightarrow aa$
is weakly non-redundant, $S\rightarrow AB, A\rightarrow cc, B\rightarrow aa$ is not.

We can obtain any weakly non-redundant grammar
by simple (non-recursive) procedure that at each step 
replaces some substring which occurs at least twice.
\begin{lemma}
	\label{lem:weakly_non_redundant_op}
	Let $S$ be a string and $(S',G)$ be a weakly non-redundant grammar.
	Then $(S',G)$ can be obtained by a series of replacements,
	where each replacement operates only on current starting $S''$,
	replaces at least $2$ occurrences of some substring $w$, where $|w| > 1$
	and $w$ occurs at least 2 times,
	and adds a rule $X_w \rightarrow w$ to the grammar.
\end{lemma}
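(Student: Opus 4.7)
The plan is to construct the promised sequence of replacements by processing the nonterminals of $G$ in bottom-up order. First I would fix an ordering $X_1, X_2, \ldots, X_m$ of the nonterminals of $G$ such that whenever $X_j$ appears in $\grhs(X_i)$ we have $j < i$; such an ordering exists because the grammar is acyclic. Then I would define intermediate strings $S_0, S_1, \ldots, S_m$, where $S_i$ is obtained from the final starting string $S'$ by fully expanding (via the rules of $G$) every occurrence of every nonterminal $X_{i+1}, X_{i+2}, \ldots, X_m$, while leaving $X_1, \ldots, X_i$ in place. By construction $S_m = S'$, while $S_0$ contains only terminals and so equals $S$.

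The bulk of the argument is to show that the transition from $S_{i-1}$ to $S_i$ is a valid replacement step in the sense of the lemma, with $w := \grhs(X_i)$ and new nonterminal $X_i$. By the topological order, $w$ is a string over the alphabet of $S_{i-1}$, and $|w| \geq 2$ is immediate from the weakly non-redundant assumption $\rhs{X_i} \geq 2$. Each occurrence of $X_i$ in the derivation tree of $(S', G)$ corresponds, under the partial expansion defining $S_{i-1}$, to a contiguous substring of $S_{i-1}$ equal to $w$; distinct derivation-tree occurrences of $X_i$ yield non-overlapping substring occurrences of $w$ in $S_{i-1}$, because they come from disjoint subtrees of the derivation tree. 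Since $X_i$ has at least two occurrences in the derivation tree (weakly non-redundant), this yields at least two non-overlapping occurrences of $w$ in $S_{i-1}$. Replacing exactly these marked occurrences by the fresh symbol $X_i$ turns $S_{i-1}$ into $S_i$ and adds the rule $X_i \to w$ to the partially constructed grammar.

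Iterating this step for $i = 1, \ldots, m$ realises $(S', G)$ via a sequence of operations of the prescribed form. The only nontrivial point to nail down carefully is the claim that the selected positions of $w$ in $S_{i-1}$ are genuinely disjoint substring occurrences and that simultaneously replacing them produces exactly $S_i$; I view this as the main (albeit mild) obstacle. Both follow from the observation that $S_{i-1}$ is obtained from $S_i$ by the single substitution $X_i \mapsto \grhs(X_i)$ applied at every position where $X_i$ appears in $S_i$, and since those positions are distinct single-symbol positions in $S_i$, their expansions occupy pairwise disjoint intervals of $S_{i-1}$.
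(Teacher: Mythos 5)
Your proof is correct and takes essentially the same approach as the paper's: both process the nonterminals in a bottom-up topological order, at each step replacing the (at least two, pairwise disjoint) occurrences of $\grhs(X_i)$ that arise from the derivation-tree occurrences of $X_i$, and adding the rule $X_i \to \grhs(X_i)$. The paper phrases this as a short induction that peels off a bottom-most nonterminal and invokes the induction hypothesis on the residual weakly non-redundant grammar, whereas you unroll the induction into the explicit chain $S_0, \ldots, S_m$; these are the same argument.
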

\begin{proof}
	Take any weakly non-redundant grammar $(S',G)$ generating $S$.
	Consider the following inductive reasoning:
	Take some nonterminal $X \rightarrow ab$,
	where $ab$ are original symbols of $S$,
	i.e. $a,b$ are leaves of the derivation tree.
	Replace each occurrence of $ab$ in $S$ which
	corresponds to nonterminal $X$ in the derivation tree.
	This gives a new string $S''$.
	Now we can remove $X$ from $(S',G)$ and obtain
	a weakly non-redundant grammar for $S''$,
	where lemma holds by the induction hypothesis.
\end{proof}


In the proofs of the following lemmas, the crucial assumption is that $\rhs{X} \geq 2$,
for every nonterminal $X$,
as then each introduced nonterminal shortens the starting string by at least $2$.

By Lemma~\ref{lem:grammar_encodings} the encoding of a full grammar $(S', G)$
uses $|S'|H_0(S') + \nonterminals{G}\log \nonterminals{G}$ bits plus lower order
terms, thus the following Lemma says that we can obtain non-trivial
bound for the encoding of weakly non-redundant grammars when the entropy of string 
$S$ generated by $(S', G)$ is large.

\begin{lemma}
	\label{lemma15ireplacement}
	Let $S$ be a string and $n \in \mathbb{N}$ a number.
	Assume $|S| \leq n$
	and $|S| H_0(S) \geq \frac{|S| \log n}{2} - \gamma$,
	where $\gamma$ can depend on $S$. 
	Let $(S', G)$ be a weakly non-redundant grammar generating $S$.
	Then:
\begin{equation*}
	|S'| H_0(S') + \nonterminals{G} \log n \leq \frac{3}{2} |S| H_0(S) + \Ocomp(|S| + \gamma) \enspace .
\end{equation*}
\end{lemma}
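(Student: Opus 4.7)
The starting point is the structural observation from Lemma~\ref{lem:weakly_non_redundant_op}: $(S',G)$ can be built from $S$ by $\nonterminals{G}$ replacement steps, each replacing at least two occurrences of a substring of length at least two. Each such step shortens the working string by at least two symbols, so $|S'| + 2\nonterminals{G} \leq |S|$. This structural inequality is what lets me trade the grammar term $\nonterminals{G}\log n$ against the entropy term $|S'|H_0(S')$.

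Next I will derive two complementary upper bounds on $|S'|H_0(S')$. Because $\rhs{X}\geq 2$ every expansion has length at least two and is therefore distinct from any single letter; combined with the paper's standing assumption that different nonterminals have different expansions, the parsing $Y_S$ of $S$ induced by $S'$ has distinct phrases in bijection with the symbols of $S'$ (with matching multiplicities), giving $|S'|H_0(S') = |Y_S|H_0(Y_S)$. Applying Theorem~\ref{thm:main_estimation} with $k=0$ and then Lemma~\ref{lem:lengths_entropy} to $L = \Lengths(Y_S)$ yields the parsing bound $|S'|H_0(S') \leq |S|H_0(S) + \Ocomp(|S|)$. The crude bound is $|S'|H_0(S') \leq |S'|\log(\sigma+\nonterminals{G}) \leq |S'|(\log n + \Ocomp(1))$, valid since $\sigma+\nonterminals{G}\leq 2n$.

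I then finish by a case analysis on $\nonterminals{G}$ with threshold $|S|/4$. In the regime $\nonterminals{G} \leq |S|/4$ the parsing bound contributes $|S|H_0(S) + \Ocomp(|S|)$, while $\nonterminals{G}\log n \leq \tfrac{|S|\log n}{4}$ is at most $\tfrac{1}{2}|S|H_0(S) + \tfrac{\gamma}{2}$ by the hypothesis $|S|H_0(S) \geq \tfrac{|S|\log n}{2}-\gamma$; summing gives $\tfrac{3}{2}|S|H_0(S) + \Ocomp(|S|+\gamma)$. In the regime $\nonterminals{G} > |S|/4$, combining the crude bound with $|S'|\leq |S|-2\nonterminals{G}$ gives $|S'|H_0(S') + \nonterminals{G}\log n \leq (|S|-\nonterminals{G})\log n + \Ocomp(|S|) < \tfrac{3|S|\log n}{4}+\Ocomp(|S|)$, and the same conversion through the hypothesis turns the leading term into $\tfrac{3}{2}|S|H_0(S) + \Ocomp(\gamma)$.

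The only delicate step is the identity $|S'|H_0(S')=|Y_S|H_0(Y_S)$ needed to invoke Theorem~\ref{thm:main_estimation}. It uses both ingredients of weak non-redundancy, namely $\rhs{X}\geq 2$ for every nonterminal $X$ (to separate nonterminal-phrases from letter-phrases in $Y_S$) and the paper's blanket assumption that distinct nonterminals have distinct expansions (to keep nonterminal-phrases distinct from each other). Once this is set up, the remainder is a clean balancing of the two bounds through the hypothesis connecting $|S|H_0(S)$ to $|S|\log n$.
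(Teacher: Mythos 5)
Your proof is correct and uses exactly the same ingredients as the paper's: the structural inequality $|S'| + 2\nonterminals{G} \leq |S|$ coming from weak non-redundancy, the parsing bound $|S'|H_0(S') \leq |S|H_0(S) + \Ocomp(|S|)$ obtained via Theorem~\ref{thm:main_estimation} (relying on the observation that distinct symbols of $S'$ have distinct expansions, so $|S'|H_0(S') = |Y_S|H_0(Y_S)$), the trivial bound $H_0(S') \leq \log n + \Ocomp(1)$, and the hypothesis used to convert $\log n$ into $H_0(S)$. The only difference is organizational: you run a clean case split on whether $\nonterminals{G} \leq |S|/4$, whereas the paper derives an auxiliary inequality from the negation of the claim and then chains estimates; your presentation is slightly more streamlined and in fact mirrors the case-split structure the paper itself uses for the companion Lemma~\ref{lemma15replacemententropy}.
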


\begin{proof}
	By Lemma~\ref{lem:weakly_non_redundant_op} we can assume that
	grammar is produced by series of replacements on the starting string $S$,
	that is, we never modify right-hand sides of $G$ after adding a rule.
	Because each replacement removes at least two symbols from the working string
	we have
	\begin{equation}
	\label{lem0ord:S_and_S'}
	|S| \geq |S'| + 2\nonterminals{G} \enspace .
	\end{equation}
	Observe that $S'$ induces a parsing of $S$,
	as by grammar properties different symbol expands to a different substring.
	
	Let $L = \Lengths(S')$.
	Then estimating the entropy of parsing from Theorem~\ref{thm:main_estimation},
	and $|L|H_0(L)$ from Lemma~\ref{lem:lengths_entropy}
	(see the Appendix Section~\ref{sec:appendix_c}) yields:
	\begin{equation}
	\label{lem0ord:sparsingestimate}
	|S'| H_0(S') \leq |S| H_0(S) + |L|H_0(L) \leq |S| H_0(S) + \Ocomp(|S|) \enspace .
	\end{equation}
	As $|S'| \leq |S|$ and by assumption $|S| \leq n$ we have
	that $|S|H_0(S) \leq |S| \log n$, and thus:
	\begin{equation}
		\label{lem0ord:boundentropysp}
		|S'|H_0(S') \leq |S'|\log n \enspace .
	\end{equation}
	Now we have:
	\begin{align*}
	|S'|H_0(S')  + \nonterminals{G} \log n
	&= \frac{1}{2}|S'|H_0(S') + \frac{1}{2}|S'|H_0(S')  
	  +\nonterminals{G} \log n \\
	&\leq \frac{1}{2}|S|H_0(S) + \Ocomp(|S|) + \frac{1}{2}|S'|H_0(S')  
	+\nonterminals{G} \log n &\text{by~\eqref{lem0ord:sparsingestimate}} \\
	&\leq \frac{1}{2}|S|H_0(S) + \Ocomp(|S|) + \frac{1}{2}|S'|\log n
	+\nonterminals{G} \log n &\text{by~\eqref{lem0ord:boundentropysp}} \\
	&\leq \frac{1}{2}|S|H_0(S) + \Ocomp(|S|) + \frac{1}{2}|S| \log n &\text{by~\eqref{lem0ord:S_and_S'}}\\
	&\leq \frac{3}{2}|S|H_0(S) + \Ocomp(|S| + \gamma)
	& \text{by assumption \qedhere}
	\end{align*}
\end{proof}

We now show a similar property for the entropy encoding,
its proof is more involved than in the case of Lemma~\ref{lemma15ireplacement},
i.e.\ the incremental encoding.
It uses a similar idea as proof of Theorem~\ref{thm:main_estimation},
that is, we assign some $p$ values to each symbol on the right-hand side of
$(S', G)$ and apply Lemma~\ref{theoremP}.

\begin{lemma}
	\label{lemma15replacemententropy}
	Let $n \in \mathbb N$ be a number and  $S$ be a string, $|S| \leq n$.
	Assume $|S| H_0(S) \geq \frac{|S| \log n}{2} - \gamma$,
	where $\gamma$ can depend on $S$.
	Let $(S', G)$ be a full weakly non-redundant grammar generating $S$.
	Let $S_G$ be a string obtained by concatenation of the right-hand sides of $(S', G)$.
	Then:
	\begin{equation*}
	|S_G| H_0(S_G) \leq \frac{3}{2} |S| H_0(S) + \Ocomp(|S| + \gamma) \enspace .
	\end{equation*}
\end{lemma}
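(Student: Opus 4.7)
The plan is to adapt the contradiction argument of Lemma~\ref{lemma15ireplacement} from the incremental-encoding setting to the entropy-encoding setting; as the paragraph preceding the statement indicates, I would apply Lemma~\ref{theoremP} with a probability distribution $p$ on the alphabet of $S_G$ chosen so as to separate letter mass from nonterminal mass.

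First, invoke Lemma~\ref{lem:weakly_non_redundant_op} to view $(S',G)$ as produced by a sequence of $\nonterminals{G}$ replacements on the starting string, each replacing at least two non-overlapping occurrences of a substring of length at least two. Summing per-step decreases gives the key structural inequality $|S|\ge |S'|+2\nonterminals{G}$. A derivation-tree leaf count gives additionally $|S_G|_a\le |S|_a$ for every letter $a$: writing $|S|_a=|S'|_a+\sum_Y \#_{\mathrm{tree}}(Y)\,|\grhs(Y)|_a$ and $|S_G|_a=|S'|_a+\sum_Y|\grhs(Y)|_a$, the difference is $\sum_Y(\#_{\mathrm{tree}}(Y)-1)|\grhs(Y)|_a\ge 0$ since every nonterminal is reachable.

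Second, set $p(a)=|S|_a/(2|S|)$ for each letter $a$ and $p(X)=1/(2\nonterminals{G})$ for each nonterminal $X$. These values sum to~$1$, so Lemma~\ref{theoremP} yields
\begin{equation*}
|S_G|H_0(S_G)\;\le\;\sum_a|S_G|_a\log\tfrac{2|S|}{|S|_a}\;+\;|S_G|_{\mathrm{nt}}\log(2\nonterminals{G}).
\end{equation*}
Using $|S_G|_a\le|S|_a$ from the previous step, the letter sum is at most $|S|H_0(S)+\Ocomp(|S|)$, so one obtains the intermediate bound
\begin{equation*}
|S_G|H_0(S_G)\;\le\;|S|H_0(S)\;+\;|S_G|_{\mathrm{nt}}\log\nonterminals{G}\;+\;\Ocomp(|S|).
\end{equation*}
An equivalent route is to apply Theorem~\ref{thm:main_estimation} with $k=0$ to the parsing of $S$ induced by the auxiliary string $S''$ from the proof of Lemma~\ref{lem:entropy_coding_concatenation}, and then to pay $\nonterminals{G}\log|S|+\Ocomp(\nonterminals{G})$ for appending one extra copy of each nonterminal when passing from $S''$ to~$S_G$.

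Third, and here mirroring the contradiction scheme of Lemma~\ref{lemma15ireplacement}, I would assume $|S_G|H_0(S_G)>\tfrac32|S|H_0(S)+C(|S|+\gamma)$ for a sufficiently large $C$. Combined with the intermediate bound and the hypothesis $|S|H_0(S)\ge \tfrac{|S|\log n}{2}-\gamma$, this forces $|S_G|_{\mathrm{nt}}\log\nonterminals{G}$ near its trivial maximum, hence $\nonterminals{G}$ near $|S|/2$ and $|S'|$ near zero; plugging the derived inequality $|S'|\log n\ge 2\nonterminals{G}\log n-\Ocomp(\gamma)$ into
\begin{equation*}
\tfrac12|S|H_0(S)\;\ge\;\tfrac{|S|\log n}{4}-\tfrac{\gamma}{2}\;\ge\;\tfrac{|S'|\log n}{4}+\tfrac{\nonterminals{G}\log n}{2}-\tfrac{\gamma}{2}\;\ge\;\nonterminals{G}\log n-\Ocomp(\gamma)
\end{equation*}
contradicts the standing assumption and therefore closes the proof. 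The main obstacle is exactly this last step: the crude substitution $\nonterminals{G}\le|S|/2$ alone only yields the factor-$2$ bound $|S_G|H_0(S_G)\le 2|S|H_0(S)+\Ocomp(|S|+\gamma)$, so reaching the constant~$\tfrac32$ genuinely requires the same two-way use of the $|S|H_0(S)$ budget as in Lemma~\ref{lemma15ireplacement}: one quarter of $|S|\log n$ is spent absorbing the nonterminal cost via $|S|\ge|S'|+2\nonterminals{G}$, while the remaining half of $|S|H_0(S)$ is reserved for the letter part delivered by Theorem~\ref{thm:main_estimation}.
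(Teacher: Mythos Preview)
Your first two steps are sound: the intermediate bound $|S_G|H_0(S_G)\le |S|H_0(S)+|S_G|_{\mathrm{nt}}\log\nonterminals{G}+\Ocomp(|S|)$ follows cleanly from Lemma~\ref{theoremP} with your distribution together with $|S_G|_a\le|S|_a$. The gap is in step three, where you silently identify two different quantities. The term your bound produces involves $|S_G|_{\mathrm{nt}}$, the number of nonterminal \emph{occurrences} in $S_G$, while the structural inequality you proved, $|S|\ge|S'|+2\nonterminals{G}$, controls the number of \emph{distinct} nonterminals. The sentence ``this forces $\nonterminals{G}$ near $|S|/2$'' does not follow: $|S_G|_{\mathrm{nt}}\log\nonterminals{G}$ can exceed $\tfrac{|S|}{4}\log n$ with $|S_G|_{\mathrm{nt}}$ large and $\nonterminals{G}$ only moderately large (say $|S_G|_{\mathrm{nt}}\approx|S|/2$ and $\nonterminals{G}\approx\sqrt{n}$). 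Hence the inequality ``$|S'|\log n\ge 2\nonterminals{G}\log n-\Ocomp(\gamma)$'' that you lift from Lemma~\ref{lemma15ireplacement} is not derivable here, and the final chain does not close.

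The paper fixes this by controlling $|S_G|_\nont$ directly. It proves the stronger structural inequality $|S'|_\Gamma+2|S_G|_\nont\le|S|$ (your derivation-tree argument actually yields this too, since weak non-redundancy gives $\#_{\mathrm{tree}}(Y)\ge 2$, so summing $|S|_a\ge|S'|_a+2|G|_a$ over letters gives the needed bound). It then takes $p'(X)=1/|S|$ instead of $1/(2\nonterminals{G})$, so the nonterminal cost becomes $|S_G|_\nont\log|S|$, and finishes not by contradiction but by a case split on whether $|S_G|_\nont\le|S|/4$: the small case is immediate from $\tfrac{|S|}{4}\log n\le\tfrac12|S|H_0(S)+\tfrac\gamma2$; in the large case it writes $|S_G|_\nont=\tfrac{|S|}{4}+k$ and uses the per-letter inequality $|S'|_a+2|G|_a\le|S|_a$ (a further subcase) to trade the excess $k\log n$ against the letter contribution. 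The contradiction scheme of Lemma~\ref{lemma15ireplacement} does not transfer verbatim precisely because its ``$\nonterminals{G}\log n$'' term has become an ``$|S_G|_\nont\log n$'' term here, and the two are linked only through the new inequality, not through $|S|\ge|S'|+2\nonterminals{G}$.
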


\begin{proof}
	By Lemma~\ref{lem:weakly_non_redundant_op} we can assume that
	grammar is produced by series of replacements on the starting string $S$,
	that is we never modify the right-hand sides of $G$ after adding a rule.
	
	Let $\Gamma$ be the original alphabet of string $S$
	and $S_G$ be a concatenation of strings in $\grhs(S', G)$.
	We show that
	\begin{equation}
	\label{eq:letters_less_in_grammar}
	|S'|_\Gamma + 2 |S_G|_\nont \leq |S|\enspace .
	\end{equation}
	This clearly holds at the beginning.
	Suppose that we replace $k$ copies of $w$ in $S'$ by a new nonterminal $X$
	and add a rule $X \to w$.
	Let $|w|_\Gamma$ and $|w|_\nont$ denote the number of occurrences of letters
	and nonterminals in $w$, then $|w|_\Gamma + |w|_\nont = |w| \geq 2$.
	After the replacement the values
	of~\eqref{eq:letters_less_in_grammar} change as follows:
	\begin{align*}
	(|S'|_\Gamma - k|w|_\Gamma) +  2 (|S_G|_\nont - (k-1)|w|_\nont + k)
	&=
	|S'|_\Gamma + 2|S_G|_\nont - (k-1)(|w|_\Gamma + 2|w|_\nont - 2)  + 2 - |w|_\Gamma\\	
	&\leq
	|S'|_\Gamma + 2|S_G|_\nont - (|w|_\Gamma + 2|w|_\nont - 2)  + 2 - |w|_\Gamma\\
	&=
	|S'|_\Gamma + 2|S_G|_\nont - 2(|w|_\Gamma + |w|_\nont - 2) \\
	&\leq
	|S'|_\Gamma + 2|S_G|_\nont\\
	&\leq
	|S|	 \enspace .
	\end{align*}
	
	Now, observe that for each letter $a \in \Gamma$ it holds that:
	\begin{equation}
	\label{eq:counting_letter_G_and_S}
	|S'|_a + 2\rhs{G}_a \leq |S|_a \enspace .
	\end{equation}
	This is shown by an easy induction:
	clearly, it holds at the beginning.
	When we replace $k \geq 2$ occurrences of a word $w$ with a nonterminal $X$
	and add the rule $X \rightarrow w$
	then $|S'|_a$ drops by $k |w|_a$ while $2\rhs{G}_a$ increases
	by $2|w|_a \leq k|w|_a$.
	
	
	We use Lemma~\ref{theoremP} together with
	\eqref{eq:letters_less_in_grammar} and \eqref{eq:counting_letter_G_and_S}
	to bound the entropy of $S_G$.
	Define $p(a) = \frac{|S|_{a}}{|S|}$ as the empirical probability of letter $a$ in $S$
	and let $p'(a) = \frac{1}{2} p(a)$ for original letters of $S$ and 
	$p'(X) = \frac{1}{2|S|}$ for nonterminal symbols.
	Those values satisfy the condition of Lemma~\ref{theoremP}:
	\begin{align*}
	\sum_{a \in \Gamma \cup \nont(G)} p'(a)
	&=
	\sum_{a \in \Gamma} p'(a)	 + 	\sum_{X \in \nont(G)} p'(X)\\
	&=
	\sum_{a \in \Gamma} \frac{1}{2}p(a) + \sum_{X \in \nont(G)} \frac{1}{2|S|}\\
	&\leq
	\frac{1}{2}\sum_{a \in \Gamma} \frac{|S|_a}{|S|} +	|\nont(G)| \cdot  \frac{1}{2|S|}\\
	&\leq 
	\frac{1}{2} + |S| \cdot \frac{1}{2|S|} & \text{by~\eqref{eq:letters_less_in_grammar}, 
	as $|\nont(G)| \leq |S_G|_\nont$}\\
	&\leq
	1 \enspace .
	\end{align*}
	Thus we can bound the entropy of $S_G$ using Lemma~\ref{theoremP}:
	\allowdisplaybreaks
	\begin{align}
	\label{lem15entmest}
	&|S_G|H_0(S_G)\\
	&\leq
	\sum_a - |S_G|_a \log p'(a)\notag\\
	&=
	\sum_{a \in \Gamma} - |S_G|_a \log p'(a)
	+ \sum_{X \in \nont(G)} - |S_G|_X \log p'(X)\notag\\
	&=
	\sum_{a \in \Gamma} - |S_G|_a \log p(a) + |S_G|_{\Gamma}
	+ \sum_{X \in \nont(G)}|S_G|_X \log |S| + |S_G|_\nont \notag  & 
	\text{as $p'(X) = \frac{1}{2|S|}$}\\
	&=
	\sum_{a \in \Gamma} - \left(\frac{1}{2}|S'|_a + \frac{1}{2}|S'|_a + \rhs{G}_a \right) \log p(a)
	+ |S_G|_\nont \log |S| + |S_G| \notag \\
	&\leq
	\sum_{a \in \Gamma} - \left(\frac{1}{2}|S'|_a + \rhs{G}_a \right) \log p(a)
	+ \sum_{a \in \Gamma} \frac{1}{2} |S'|_a \log |S|
	+ |S_G|_\nont \log |S| + |S_G| \notag & \text{as $p(a) \geq \frac{1}{|S|}$} \\
	&\leq
	\sum_{a \in \Gamma} -\frac{1}{2}|S|_a  \log p(a)
	+ \frac{1}{2}|S'|_\Gamma \log |S|
	+ |S_G|_\nont \log |S| + |S_G| & \text{by~\eqref{eq:counting_letter_G_and_S}}\notag \\
	&=
	\frac{1}{2} |S|H_0(S)
	+ \left(\frac{1}{2}|S'|_\Gamma + |S_G|_\nont \right)\log |S|
	+ |S_G| \notag \\
	&\leq \notag
	\frac{1}{2} |S|H_0(S) + \frac{|S|}{2}\log |S| + |S_G|& \text{by~\eqref{eq:letters_less_in_grammar}}\\
	&\leq \notag
	\frac{1}{2} |S|H_0(S) + |S|H_0(S) + \gamma + |S_G| & \text{by assumption}\\
	&\leq \notag
	\frac{3}{2} |S|H_0(S) + \Ocomp(|S| + \gamma) \enspace \qedhere & \text{as $|S_G|\leq |S|$}
	\end{align}
\end{proof}

\paragraph{Proof of Theorem~\ref{thm:RePairToTheEnd}}
We prove the following fact which we will need in
the proof of Theorem~\ref{thm:RePairToTheEnd} and also later on:
\begin{lemma}
\label{lem:entropy_removing_letters}
Let $S, S'$ be strings such that $S'$ can be obtained from $S$
by reordering and/or removing letters from $S$.
Then $|S|H_0(S) \geq |S'|H_0(S')$.
\end{lemma}
\begin{proof}
Let $\Sigma$ be the alphabet of $S$.
We have:
\begin{equation*}
|S|H_0(S) = \sum_{a\in\Sigma} |S|_a \log \frac{|S|}{|S|_a} \enspace .
\end{equation*}
Clearly reordering letters does not change the above value.

Consider the string $S''$ obtained by deleting a single occurrence of letter $b$ in the string $S$.
Observe that the inequality holds in the trivial case of $S \in b^*$,
as then $|S''|H_0(S'') = |S|H_0(S)=0$,
so in the following we assume that $|S|_b < |S|$.
Then:
\begin{equation*}
|S''|H_0(S'') =  \sum_{a\in\Sigma\setminus\{b\} } |S|_a \log \frac{|S|-1}{|S|_a} 
+ (|S|_b-1)\log\frac{|S|-1}{|S|_b - 1} \enspace ,
\end{equation*}
where we assume that $(|S|_b-1)\log\frac{|S|-1}{|S|_b - 1} = 0$ if $|S|_b = 1$.
As $\log x$ is increasing in range $(0,+\infty)$,
for each $a \neq b$ it holds that
\begin{equation*}
|S|_a \log \frac{|S|-1}{|S|_a} \leq |S|_a \log \frac{|S|}{|S|_a} \enspace .
\end{equation*}
For $b$ consider that
\begin{align*}
(|S|_b-1)\log\frac{|S|-1}{|S|_b - 1}
	&=
\log\left(\left(1 + \frac{|S|-|S|_b}{|S|_b - 1}\right)^{|S|_b - 1}\right)\\
(|S|_b)\log\frac{|S|}{|S|_b}
	&=
\log\left(\left(1 + \frac{|S|-|S|_b}{|S|_b}\right)^{|S|_b}\right)
\end{align*}
and the sequence $f_n = (1 + c/n)^n$ is known to be (strictly)
increasing for every value $c>0$ (as it monotonically tends to $e^{c}$).
Thus
\begin{equation*}
(|S|_b-1)\log\frac{|S|-1}{|S|_b - 1}
	\leq
(|S|_b)\log\frac{|S|}{|S|_b} \enspace ,
\end{equation*}
as those are logarithms of the two consecutive elements of such sequence (for $c = |S|-|S|_b$).
\end{proof}

\begin{proof}[Proof of Theorem~\ref{thm:RePairToTheEnd}]
Fix some $\epsilon > 0$ and consider
the iteration in which the number
of nonterminals in the grammar is equal to $\left\lceil \frac{|S|}{\log^{1 + \epsilon} |S| } \right \rceil$,
call this grammar $G_0$, i.e.\ we have
$\nonterminals{G_0} = \left\lceil \frac{|S|}{\log^{1 + \epsilon} |S| } \right \rceil$.
If no such an iteration exists,
then consider the state after the last iteration, then
$\nonterminals{G_0} < \left\lceil \frac{|S|}{\log^{1 + \epsilon} |S| } \right \rceil$.
Let $S'$ be a working string at the current state,
thus $(S', G_0)$ is a full grammar generating $S$.

By Theorem~\ref{thm:RePairStopped} with $\xi = 2, \beta = 4$
we know that if the size of the grammar is at least $1+ \sqrt{|S|} \cdot \log_\sigma |S|$
then the size of the working string must be at most
$|S'| \leq \frac{14 |S|}{\log_\sigma |S|} + 2\sqrt{|S|} + 1$.
Thus, as
$1+ \sqrt{|S|} \cdot \log_\sigma |S|
 = o\left(\left\lceil \frac{|S|}{\log^{1 + \epsilon} |S| } \right \rceil \right) $
it follows that
$|S'| = \Ocomp\left(\frac{|S|}{\log_\sigma |S|}\right)$.


We make two assumptions: $|S'| > \left\lceil \frac{|S|}{\log^{1 + \epsilon} |S| } \right \rceil$
 and
$\nonterminals{G_0} = \left\lceil \frac{|S|}{\log^{1 + \epsilon} |S| } \right \rceil$
(the latter does not hold if \RePair finished the run with smaller grammar),
as otherwise it is easy to prove the claim:
If $|S'| \leq \left\lceil \frac{|S|}{\log^{1 + \epsilon} |S| } \right \rceil$ then
the size of the final grammar is at most $|S'|/2 + \nonterminals{G_0}$,
as one replacement shortens the working string by at least $2$
thus we can make at most $|S'|/2$ iterations from the current state.
Thus, if at least one of these assumptions is not satisfied
we finish the run of \RePair with 
a grammar of size at most
$\Ocomp\left(\left\lceil \frac{|S|}{\log^{1 + \epsilon} |S| } \right \rceil\right)$,
in such case
by Lemma~\ref{lem:grammar_encodings} we get the bound $|S|H_k(S) + o(|S|\log \sigma)$ for incremental encoding
ant the same bound for entropy encoding by Lemma~\ref{lem:entropy_coding_concatenation}.

We now show that the assumptions: $|S'| > \left\lceil \frac{|S|}{\log^{1 + \epsilon} |S| } \right \rceil$
and
$\nonterminals{G_0} = \left\lceil \frac{|S|}{\log^{1 + \epsilon} |S| } \right \rceil$
imply that the entropy, i.e.\ $|S'|H_0(S')$, is high.
From Lemma~\ref{lem:most_frequent_and_number_of_nonterminals}
we get that each pair occurs at most
$\frac{2|S|}{\nonterminals{G}} \leq 2\log^{1+\epsilon} |S|$ times in $S'$.

Consider the parsing of working string $S'$ into consecutive pairs
(with possibly last letter left-out), denote it by $Y_{S'}^P$.
Let $L_P = \Lengths\left(Y_{S'}^P\right)$, observe that those are all
$2$s except possibly the last, which can be $1$.
By applying Theorem~\ref{thm:main_estimation} with $k = 0$ we get:
\begin{equation}
\label{eq:repair_to_the_end_pairs_lower_bound}
|S'|H_0(S')  + |L_P|H_0(L_P)\geq |Y_{S'}^P|H_0(Y_{S'}^P) \enspace.
\end{equation}
If $L_P \in 2^*$ then clearly $H_0(L_P) = 0$,
otherwise $L_P = 2^{|S'|-1/2}1$
and so its entropy is 
\begin{align*}
|L_p|H_0(L_p) & \leq
\frac{|S'|-1}{2}\log\left(\frac{(|S'|+1)/2}{(|S'|-1)/2}\right) + \log\left(\frac{(|S'|+1)/2}{1}\right)\\
&\leq
\log\left(1 + \frac{1}{(|S'|-1)/2} \right)^{\frac{|S'|-1}{2}} + \log |S'|\\
&\leq
\log \mathrm e + \log |S'| \enspace .
\end{align*}
On the other hand, as each pair occurs at most $2\log^{1 + \epsilon} |S|$ times,
	we can lower bound the entropy by lower bounding
	each summand corresponding to pair in the definition of $H_0$
	by~$\log (|Y_{S'}^P|/\log^{1 + \epsilon})$,
	i.e.\

\begin{align*}
|Y_{S'}^P| H_0(Y_{S'}^P)
&\geq
|Y_{S'}^P| \log\left( \frac{|S'|-1}{2}/(2\log^{1+\epsilon} |S|) \right)\\
&\geq
\frac{|S'|-1}{2}\log\left( \frac{|S|}{4 \log^{2+2\epsilon} |S|}\right)\\
\notag
&=
\frac{|S'|-1}{2}\log |S| - \frac{|S'|-1}{2}\log (4\log^{2+2\epsilon} |S|)\\
\notag
&\geq
\frac{|S'|-1}{2}\log |S| - (|S'|-1)(1+ (1+\epsilon)\log \log |S|)\\
\label{eq:parsingtopairsentropy}
&\geq
\frac{|S'|}{2}\log |S| - \gamma' \enspace,
\end{align*}
where $\gamma' =  (|S'|-1)(1+ (1+\epsilon)\log \log |S|) + \frac{\log |S| }{2} = o(|S| \log \sigma)$.
Plugging those two estimations into~\eqref{eq:repair_to_the_end_pairs_lower_bound} yields:
\begin{equation}
\label{eq:RePairS'gamma}
|S'|H_0(S')
\geq
\frac{|S'|}{2}\log |S| - \gamma \enspace ,
\end{equation}
where
$\gamma = \gamma' +
\log \mathrm e + \log |S'| = o(|S| \log \sigma)$.

On the other hand, by Theorem~\ref{thm:main_estimation},
\begin{equation}
\label{eq:s'entropy}
|S'|H_0(S') \leq |S|H_k(S) + o(|S| \log \sigma) \enspace .
\end{equation}

Recall that $\nonterminals{G_0} = \left\lceil \frac{|S|}{\log^{1 + \epsilon} |S| } \right \rceil$,
$|S'| = \Ocomp\left(\frac{|S|}{\log_\sigma |S|}\right)$
and so the encoding at this point takes $|S|H_k(S) + o(|S|\log \sigma)$ bits
for both incremental and entropy encoding
by Lemma~\ref{lem:grammar_encodings} and
Lemma~\ref{lem:entropy_coding_concatenation} respectively.
We now use the properties of newly defined
weakly non-redundant grammars~(see~\ref{par:weakly_non_red_grammars}),
which essentially state that if the entropy is $H_0(S') \gtrapprox \frac{ \log |S|}{2}$ 
then encoding size can grow to at most $\frac{3}{2}|S'|H_0(S')$,
assuming that adding a non terminal costs $\log |S|$ bits
(Lemma~\ref{lemma15ireplacement} and Lemma~\ref{lemma15replacemententropy}).
We argue that this is the case for both incremental and entropy coding.

We first prove the bound for incremental encoding.
Let $S''$ be a string returned by \RePair at the end of its runtime
and $G'$ be a grammar at this point.
Observe that $S''$ induces a parsing of $S$,
as $S'$ induces a parsing of $S$ and $S''$ induces a parsing of $S'$.
Let $i$ be such
that the returned grammar $G'$ has $\nonterminals{G_0}+i$
rules,
i.e.\ there were $i$ compression steps between $G_0$ and the end.
Thus $(S'', G')$ is a full grammar which generates $S$,
thus by Lemma~\ref{lem:grammar_encodings} the size of \cnfenc encoding is at most 
\begin{align}
\label{eq:repair_to_the_end_estimation_final_grammar}
\notag &|S''|H_0(S'') + \nonterminals{G'}\log(\sigma + \nonterminals{G'})
+ \Ocomp(|S''| + \nonterminals{G'}) \enspace &\\ 
\leq &|S''|H_0(S'') + \nonterminals{G'}\log(2|S|) + \Ocomp(|S''| + \nonterminals{G'}) 
\notag & \text{as $\sigma,\nonterminals{G'} \leq |S|$}  \\
= &|S''|H_0(S'') + \nonterminals{G'}\log|S| + \Ocomp(|S''| + \nonterminals{G'}) \notag \\
= &|S''|H_0(S'') + (\nonterminals{G_0} + i)\log |S| + \Ocomp(|S''| + \nonterminals{G'})
\enspace .
\end{align}

We can look at the last $i$ iterations
as we start with string $S'$ and end up with $S''$ and some grammar $G' \setminus G_0$,
i.e.~grammar $G'$ without nonterminals from $G_0$.
As this grammar meets the conditions of a weakly non-redundant grammar,
and $S'$ satisfies~\eqref{eq:RePairS'gamma}, i.e.\
$|S'|H_0(S') \geq\frac{|S'|}{2}\log |S| - \gamma$,
we can apply Lemma~\ref{lemma15ireplacement} with $n=|S|$
(recall that this lemma requires $|S'|\leq n$):
\begin{equation}
\label{eq:repair_to_the_end_weakly_bound}
i \log |S| + |S''| H_0(S'') \leq \frac{3}{2} |S'|H_0(S') + \Ocomp(|S'| + \gamma )  \enspace.
\end{equation}
Note that
$\nonterminals{G'} = \nonterminals{G_0} + i = \Ocomp\left(\frac{|S|}{\log_\sigma|S|}\right)$,
as $\nonterminals{G_0} = \left\lceil \frac{|S|}{\log^{1 + \epsilon} |S| } \right \rceil$
and $i < |S'| = \Ocomp\left(\frac{|S|}{\log_\sigma |S|}\right)$.
Combining this fact with~\eqref{eq:repair_to_the_end_estimation_final_grammar}
and~\eqref{eq:repair_to_the_end_weakly_bound} we obtain that the
encoding of grammar $(S'', G')$ returned by \RePair is at most:
\begin{align*}
	&
|S''|H_0(S'') + (\nonterminals{G_0} + i)\log |S| + \Ocomp(|S''| + \nonterminals{G'}) \\
	\leq
& \frac{3}{2} |S'|H_0(S') + \Ocomp(|S'| + \gamma ) + 
\underbrace{\nonterminals{G_0}\log |S|}_{=\left\lceil \frac{|S|}{\log^{1 + \epsilon} |S| } \right \rceil \cdot \log |S|} + \Ocomp(|S''| + \nonterminals{G'}) \enspace &\\
	\leq
& \frac{3}{2} |S'|H_0(S') +
\Ocomp\left(\frac{|S|}{\log^{\epsilon} |S|}\right) + \Ocomp(|S'| + \gamma + |S''| + \nonterminals{G'}) \\
	\leq
& \frac{3}{2} |S'|H_0(S') + o(|S|\log \sigma) \\
	\leq
& \frac{3}{2} |S|H_k(S) + o(|S|\log \sigma) \enspace , & \text{by~\eqref{eq:s'entropy}}
\end{align*}
which ends the proof for the \cnfenc encoding.

We now move to the case of entropy coding.
Let $S_G$ be the concatenation of the right-hand sides of $(S', G_0)$.
Using Lemma~\ref{lem:entropy_coding_concatenation}
we estimate the entropy of $S_G$ as :
\begin{align}
\label{rpthm15:entrhk15}
|S_G| H_0(S_G)
&\leq
|S|H_k(S) +  \nonterminals{G_0} \log |S| +  o(|S| \log \sigma) \enspace .
\end{align}
On the other hand we can show that $|S_G|H_0(S_G)$ is large,
(note that we use the fact that $S'$ can be obtained from $S_G$ by removing/reordering letters) i.e.:
\begin{align}
\label{lem:repair_to_the_end_large_entropy_coding}
|S_G| H_0(S_G) &\geq \notag
|S'| H_0(S')  & \text{by Lemma~\ref{lem:entropy_removing_letters}}\\
&\geq \notag
\frac{|S'|}{2} \log |S| - \gamma &\text{from~\eqref{eq:RePairS'gamma}}\\
&= \notag
\frac{|S_G|}{2} \log |S| - \gamma - 2\nonterminals{G_0}\log |S| &
\text{as $|S_G|=|S'|+2\nonterminals{G_0}$}\\
&\geq
\frac{|S_G|}{2} \log |S| - \gamma'' \enspace,
\end{align}
where $\gamma'' = \gamma + 2\nonterminals{G_0} \log |S| = o(|S|\log \sigma)$.

Define $S'_G$ as the concatenation of the right-hand sides
of a full grammar $(S'', G')$, where $G'$
is a grammar at the end of the runtime of \RePair
and $S''$ is the working string returned by \RePair.
Again, we can look at the last iterations of \RePair like we would start
some compressor with input string $S_G$ and end up with $(S'', G' \setminus G_0)$,
as we can view the remaining iterations like they would do replacements
on $S_G$.
As $(S'', G' \setminus G_0)$ meets the conditions of a weakly non-redundant grammar
and $S_G$ has large entropy~\eqref{lem:repair_to_the_end_large_entropy_coding}
applying Lemma~\ref{lemma15replacemententropy} with $n=|S|$ to $(S'', G' \setminus G_0)$ gives us:
\begin{equation}
\label{rpthm15:entrgrammar}
|S'_G|H_0(S'_G) \leq \frac{3}{2}|S_G|H_0(S_G) + \Ocomp(|S_G| + \gamma'') \enspace .
\end{equation}
Combining~\eqref{rpthm15:entrgrammar} with~\eqref{rpthm15:entrhk15} gives us:
\begin{align*}
|S'_G|H_0(S'_G) \leq & \frac{3}{2}\left( |S|H_k(S) +  \nonterminals{G_0} \log |S| +  o(|S| \log \sigma) \right) + \Ocomp(|S_G| + \gamma'') \\
	\leq & \frac{3}{2}|S|H_k(S) + o(|S|\log\sigma) \enspace ,
\end{align*}
as $\nonterminals{G_0} = \left\lceil \frac{|S|}{\log^{1 + \epsilon} |S| } \right \rceil$
and $|S_G| = |S'| + 2\nonterminals{G_0} =  \Ocomp\left(\frac{|S|}{\log_\sigma |S|}\right)$.

The bound on the entropy coding comes from the fact that by Lemma~\ref{lem:huffman_entropy_coder} entropy coder uses $|S'_G|H_0(S'_G) + \Ocomp(|S'_G|)$ bits to encode the string $S'_G$,
thus the bound $\frac{3}{2}|S|H_k(S) + o(|S|\log\sigma)$ holds,
as $|S'_G| = \Ocomp\left(\frac{|S|}{\log_\sigma |S|}\right)$.
\end{proof}

\subsection{Proofs for Section~\ref{subsec:irreducible} (Irreducible grammars and their properties)}
\paragraph{Irreducible grammars --- lower bounds}

\begin{proof}[Proof of Lemma~\ref{lem:irreducibleDecompression}]
For an integer $i$ consider the grammar
which for each possible binary string $wa$ of length $2 < |wa| \leq i$
has a rule  $X_{wa} \rightarrow X_w a$
and $X_w \to w$ for $w$ of length $2$
and the starting rule $S$ contains each $X_w$ for $|w| = i$ twice,
in lexicographic order on indices.
For instance for $i = 3$:\\
\begin{align*}
S&\rightarrow X_{000} X_{000} X_{001} X_{001} \cdots X_{111} X_{111};\\
X_{000}&\rightarrow X_{00} 0 ; X_{001}\rightarrow X_{00} 1; \ldots ; \\
X_{00}&\rightarrow 00; X_{01} \rightarrow 01;\ldots.
\end{align*}
This grammar is irreducible, has size $\Theta(2^i)$
and generates a string $S_i$ of size $\Theta(i 2^i)$,
i.e.\ the grammar is of size $\Theta(|S_i|/\log_\sigma |S_i|)$.

We show that decompressing any set of nonterminals such that
$m$ nonterminals remain
yields a grammar with a starting rule of size at least $i2^{i+1} - 2^i \log m$.
After the decompression the starting rule $S'$ is of the following form:
each $X_w$ is either untouched, fully expanded to string $w$ or replaced with $X_{w'}B$,
where $B$ is a binary string satisfying $w=w'B$.
For the uniformity of presentation we will consider the last two cases together,
introducing dummy nonterminal $X_\epsilon \to \epsilon$.
Then, there are at most $m+1$ different nonterminals (including $X_\epsilon$)
on the right-hand sides of $G$.
Let $n_0, n_1, n_2, \ldots, n_m$ denote the number of occurrences of such nonterminals in
starting string $S'$;
in principle, it could be that $n_i = 0$, when some expanded nonterminal did not occur in $S'$.
Observe that
\begin{equation}
\label{eq:number_of_nonterminals}
\sum_{j=1}^m n_j = 2^{i+1} \enspace.
\end{equation}
At the same time, when a nonterminal occurs $n_j$ times,
then it appeared for expanding at least $n_j/2$ different nonterminals
and so it introduced at least $n_j \log (n_j/2)$ new letters,
as those $n_j/2$ nonterminals need to have a common prefix
(when $n_j = 0$ then we set $n_j \log(n_j/2) = 0$,
which is fine by the continuity of the function $x \log x$).
Thus we extended the starting rule by at least
\begin{equation}
\label{eq:number_of_new_letters}
\sum_{j=1}^m n_j \log(n_j/2)
\end{equation}
new letters.

The function $x \log (x/2)$ is convex and so (by Jensen inequality)
the value of \eqref{eq:number_of_new_letters}
is minimized, when all $n_j$ are equal, that is
\begin{align*}
\sum_{j=1}^m n_j \log(n_j/2)
	&\geq
\left(\sum_{j=1}^m n_j\right) \log \left(\frac{\sum_{j=1}^m n_j}{2m}\right) &\text{$x \log x$ is concave}\\
	&=
2^{i+1} \log \left(\frac{2^i}{m}\right) &\text{by~\eqref{eq:number_of_nonterminals}}\\
	&=
i 2^{i+1} - 2^i \log m\enspace ,
\end{align*}
as claimed.

Lastly, it is easy to see that if $m = \Ocomp(|S_i|^c) = \Ocomp((i2^i)^c)$ for some $0 < c < 1$
then for some constant $\alpha$
\begin{align*}
i 2^{i+1} - 2^i \log m
	&\geq
i2^{i+1} - 2^i \log \left(\alpha (i 2^i)^c \right)\\
	&=
i2^{i+1}(1 - c) - 2^i (c \log i + \log \alpha)
\end{align*}
which is linear (in $|S_i| = i 2^{i+1}$).
\end{proof}

\begin{proof}[Proof of Lemma~\ref{lem:20irreduciblelowerbound}]
We consider the family of grammars from Lemma~\ref{lem:irreducibleDecompression}
and strings generated by them.
For a parameter $i$ each such string $S_i$ have length $2^{i+1}\cdot i$,
its zeroth-order entropy is clearly $1$, thus $|S_i|H_0(S_i) = |S_i|$.
Now, in the entropy coding we take the concatenation of productions' right-hand sides $S_G$.
Each nonterminal $X_w$ occurs twice in $S_G$,
and the total number of nonterminals $X_w$ is $2^i + 2^{i-1} + \ldots 2^{2} = 2^{i+1} - 4$,
as for each binary word of length $j$, $2\leq j \leq i$ there is a nonterminal representing it.
Hence, we can lower bound the entropy of $S_G$:
\begin{align*}
|S_G|H_0(S_G) &\geq 2\cdot \left(2^{i+1} - 4\right)\log \frac{|S_G|}{2}\\ 
&\geq 2^{i+2} \log 2^{i-\Ocomp(1)} - \Ocomp(\log 2^{i-\Ocomp(1)}) \\
&\geq 2^{i+2} i - \Ocomp(2^{i+2}) \\
&= 2|S_i| - o(|S_i|\log \sigma) \enspace .
\end{align*}
As $|S_i|$ can be arbitrarily large, the claim holds.
\end{proof}

\paragraph{Irreducible grammars --- upper bounds}
We state some properties of irreducible grammars which will be used in
proof of Theorem~\ref{thm:irreducible_entropy_concatenation},
proof of Theorem~\ref{thm:irreducible_naive},
and to obtain bounds for \Greedy.

The following lemma states a well-known property of irreducible grammars,
which in facts holds even if only \IGref{2} is satisfied.
\begin{lemma}[{\cite[Lemma~4]{SmallestGrammar}}]
	\label{expansions_two_times}
	Let $(S', G)$ be a full grammar generating $S$ in which
	each nonterminal (except for the starting symbol)
	occurs at least twice on the right-hand side, 
	that is it satisfies condition~\IGref{2}.
	Then the sum of expansions of the right sides is at most $2|S|$.
\end{lemma}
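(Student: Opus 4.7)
The plan is to bound the quantity $T := \sum_{\alpha \in \grhs(S',G)} |\exp(\alpha)|$ by computing it in two different ways and equating. Observe that $T$ is exactly the ``sum of expansions of right sides'' from the statement, because for each production $X \to \grhs(X)$ we have $\sum_{\alpha \in \grhs(X)} |\exp(\alpha)| = |\exp(X)|$. First I would group the symbols of $\grhs(S',G) = S' \cdot \grhs(X_1) \cdots \grhs(X_m)$ by the production they belong to; combined with $\exp(S') = S$ and multiplicativity of $|\exp(\cdot)|$ over concatenation, this gives
\begin{equation*}
T \;=\; |S| + \Phi, \qquad \text{where } \Phi := \sum_{X \in \nont(G)} |\exp(X)|.
\end{equation*}

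Second, I would re-decompose the \emph{same} sum $T$ according to the type of each symbol rather than by production: each terminal position contributes $1$, while each occurrence of a non-start nonterminal $X$ contributes $|\exp(X)|$. Letting $o(X)$ denote the total number of occurrences of $X$ in $\grhs(S',G)$ and $t$ denote the number of terminal positions, this yields
\begin{equation*}
T \;=\; t + \sum_{X \in \nont(G)} o(X)\, |\exp(X)|.
\end{equation*}
Condition \IGref{2} asserts exactly that $o(X) \geq 2$ for every non-start $X$, so $T \geq t + 2\Phi \geq 2\Phi$. Equating with the first expression gives $|S| + \Phi \geq 2\Phi$, hence $\Phi \leq |S|$, and therefore $T \leq 2|S|$ as desired.

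The entire argument rests on a single double-counting observation; once both expressions for $T$ are written down, the conclusion is immediate, and no induction, grammar surgery, or appeal to \IGref{1}/\IGref{3} is needed. The only place that requires a moment of care is the bookkeeping: the starting string $S'$ appears once inside $\grhs(S',G)$ and contributes the summand $|S|$ in the first decomposition, while its individual symbols are still counted inside $t$ and the $o(X)$'s in the second — this discrepancy is precisely what makes the two expressions genuinely different and allows them to be combined to yield the bound.
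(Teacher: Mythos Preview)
Your proof is correct. The paper does not actually prove this lemma; it merely cites it as \cite[Lemma~4]{SmallestGrammar} and uses the result as a black box. Your double-counting argument---computing $T$ once as $|S|+\Phi$ by grouping per production and once as $t+\sum_X o(X)\,|\exp(X)|$ by grouping per symbol type, then invoking $o(X)\ge 2$---is exactly the standard proof from the cited reference, so there is nothing to compare.
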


First, we show that irreducible grammars are necessarily small, similar to what was shown in~\cite{kieffer2000grammar},
yet we will use more general lemma, which is useful in discussion of the~\Greedy algorithm.

\begin{lemma}
\label{lemma_grammar_size_irreducible}
Let $(S', G)$ be a full grammar satisfying conditions~\IGreftwo{1}{2},
let $\xi, \beta > 0$ be constants
and let $S$ be the string generated by grammar $(S', G)$.
If $\rhs{S', G} \geq \frac{ (2\xi + 10\beta) |S|}{\log_\sigma |S|} + 8|S|^\frac{2}{\beta} + 2$
then there is a digram occurring at least $\frac{\xi |S|^{1-\frac{2}{\beta}}}{\log_\sigma |S|} + 4$
times in the right-hand sides of~$(S',G)$.
\end{lemma}
\begin{corollary}
\label{cor:irreducible_is_small}
For any grammar full $(S',G)$ satisfying conditions \IGrefall
it holds that $\rhs{S',G} \leq \frac{ 42|S|}{\log_\sigma |S|} + 8\sqrt{|S|} + 2
= \Ocomp \left( \frac{|S|}{\log_\sigma |S|} \right)$,
where $S$ is a string generated by $(S', G)$.
\end{corollary}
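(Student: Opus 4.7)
The plan is to obtain the corollary as a direct contrapositive of Lemma~\ref{lemma_grammar_size_irreducible}, closed off using the remaining irreducibility condition \IGref{3}.

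Suppose, for contradiction, that an irreducible full grammar $(S,G)$ satisfies $\rhs{S,G} > \frac{64n}{\log_\sigma n}$. Conditions \IGref{1} and \IGref{2} hold by assumption, so Lemma~\ref{lemma_grammar_size_irreducible} yields a digram $ab$ that has at least $\sqrt{n}/\log_\sigma n$ occurrences (a~priori possibly overlapping) counted across $S$ and the right-hand sides of productions in $G$. A standard greedy left-to-right selection turns any set of $k$ possibly overlapping occurrences of a fixed digram inside a single string into a set of at least $\lceil k/2\rceil$ pairwise non-overlapping ones; applying this to each of $S$ and to each individual right-hand side and then summing yields at least $\lceil \sqrt{n}/(2\log_\sigma n)\rceil$ non-overlapping occurrences of $ab$ across $S$ together with $\grhs(G)$.

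For all sufficiently large $n$ this quantity is at least $2$, which directly contradicts condition \IGref{3}, forbidding any pair from appearing twice without overlap in $S$ and the right-hand sides of $G$. Hence the assumption fails for all large enough $n$ and we obtain $\rhs{S,G} \leq 64n/\log_\sigma n$; the finitely many small-$n$ instances (those for which $\sqrt{n}/\log_\sigma n < 2$) are absorbed into the hidden constant in $\Ocomp(n/\log_\sigma n)$, or alternatively handled by slightly enlarging the explicit constant $64$.

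The only mildly delicate point is the overlap-to-non-overlap reduction for self-overlapping digrams such as $aa$, but the $\lceil k/2\rceil$ bound is standard, and a factor-of-two loss is negligible against $\sqrt{n}/\log_\sigma n$; I therefore expect no serious obstacle beyond correctly routing the three irreducibility conditions through the lemma.
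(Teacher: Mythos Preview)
Your argument is correct and follows exactly the route the paper intends: the corollary is the contrapositive of Lemma~\ref{lemma_grammar_size_irreducible} together with \IGref{3}, and you state this clearly.

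One small simplification: your overlap-to-non-overlap reduction is unnecessary. If you look at the last lines of the proof of Lemma~\ref{lemma_grammar_size_irreducible}, the count $\frac{\sqrt{n}}{\log_\sigma n}$ is already for \emph{pairwise disjoint} occurrences (the proof first finds $\frac{2\sqrt{n}}{\log_\sigma n}$ possibly overlapping occurrences and then halves to account for the self-overlapping case $a=b$). So the lemma hands you non-overlapping occurrences directly, and \IGref{3} yields the contradiction without further work. Your handling of the finitely many small-$n$ cases is fine and matches how the paper treats such boundary issues elsewhere.
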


\begin{proof}[Proof of Lemma~\ref{lemma_grammar_size_irreducible}]
The proof is similar to the proof of Theorem~\ref{thm:RePairStopped}.
Consider symbols in strings in $\grhs{(S', G)}$,
we distinguish between \emph{long} and \emph{short} symbols,
depending on the length of their expansion:
a symbol $X$ is long if it is a nonterminal,
and $|\exp(X)| \geq \frac{\log_\sigma |S|}{\beta}-1$,
otherwise it is short.
By $c_l$ denote number of occurrences of long symbols on the right-hand side of 
full grammar $(S', G)$,
and by $c_s$ number of short occurrences,
let also $c = \rhs{S',G}$, then $c = c_l + c_s$.
From Lemma~\ref{expansions_two_times} the sum of expansions' lengths
is at most $2|S|$ and so
	\begin{equation*}
c_l \cdot \left(\frac{\log_\sigma |S|}{\beta} - 1\right)
\leq 2|S| \quad \implies \quad 
c_l
\leq
\frac{2\beta|S|}{\log_\sigma |S| - \beta}  \enspace .
\end{equation*}
Assume $\beta \leq \frac{1}{5}\log_\sigma |S|$,
as otherwise $c > 2|S|$ thus the Lemma trivially holds.
Observe that on the right-hand sides there are exactly
$c-1-\nonterminals{G}$ digrams.
As each production is of size at least $2$ (thus $2\nonterminals{G} \leq c$)
and any long symbol can partake in $2$ digrams,
the number of digrams which consists of two short symbols is at least:
\begin{align}
c-1-\nonterminals{G} - 2\cdot c_l
&\notag\geq c - 1 - \frac{c}{2} - 2\cdot c_l 
	&\text{as $2\nonterminals{G} \leq c$} \\
&\notag= \frac{c}{2} -1 - 2\cdot c_l \\
&\notag\geq
\frac{1}{2}\left(\frac{(2\xi+10\beta)\cdot |S|}{\log_\sigma |S|} + 8|S|^\frac{2}{\beta} + 2 \right)- 1 - \frac{4\beta \cdot |S|}{\log_\sigma|S| - \beta}\\
&\notag\geq
\frac{(\xi+5\beta)\cdot |S|}{\log_\sigma |S|} + 4|S|^\frac{2}{\beta} - 
\frac{4\beta \cdot |S|}{\log_\sigma|S| - \frac{1}{5}\log_\sigma |S| }
	&\text{as $\beta \leq \frac{1}{5}\log_\sigma |S|$}\\
&\notag\geq
\frac{(\xi+5\beta)\cdot |S|}{\log_\sigma |S|} + 4|S|^\frac{2}{\beta} - 
\frac{5\beta \cdot |S|}{\log_\sigma|S| }\\
&\notag\geq
\frac{\xi |S|}{\log_\sigma |S|} + 4|S|^\frac{2}{\beta} \enspace .
\end{align}
On the other hand, by \IGref{1} distinct nonterminals 
corresponds to distinct substrings of $S$,
thus the number of different short symbols is at most:
\begin{equation*}
\sum_{ i = 1}^{\left\lfloor \frac{\log_\sigma |S|}{\beta} - 1\right\rfloor }  \sigma^i
< \sigma^\frac{\log_\sigma |S|}{\beta} = |S|^\frac{1}{\beta} \enspace,
\end{equation*}
thus the number of different digrams which consists of two short symbols is 
$|S|^\frac{1}{\beta} \cdot |S|^\frac{1}{\beta} = |S|^\frac{2}{\beta}$.
Then there must exist a digram occurring at least
\begin{equation*}
\frac{\frac{\xi |S|}{\log_\sigma |S|} + 4|S|^\frac{2}{\beta}}{|S|^\frac{2}{\beta}}
\geq \frac{\xi |S|^{1-\frac{2}{\beta}}}{\log_\sigma|S|} + 4
\end{equation*}
times.
\end{proof}

\begin{proof}[Proof of Theorem~\ref{thm:irreducible_entropy_concatenation}]
Let $(S', G)$ be an irreducible full grammar generating $S$
and $S_G$ be a concatenation of strings in $\grhs{(S', G)}$.

As in the proof of Lemma~\ref{lem:entropy_coding_concatenation}
from $(S', G)$ we will construct a parsing of $S$ by expanding nonterminals:
set $S''$ to be the starting string $S'$
and while possible, take $X$ that occurs in $S''$
and was not chosen before then replace a single occurrence
of $X$ in $S''$ with $\grhs(X)$.
Since every nonterminal is used in the production of $S$,
we end up with $S''$ in which every nonterminal
was expanded once.

Let $S_{\mathcal{N}}$ be a string in which every nonterminal of $G$ occurs once.
Observe that $S''S_{\mathcal{N}}$ has the same count of every symbol
as $\grhs(S',G)$:
each replacement performed on $S''$ removes a single occurrence
of on nonterminal and each nonterminal is processed once.
Thus it is enough to estimate $|S''S_{\mathcal{N}}|H_0(S''S_{\mathcal{N}}) = |S_G|H_0(S_G)$,
as the order of symbols does not affect zeroth-order entropy.

We first estimate $|S''|H_0(S'')$.
To this end observe that $S''$ induces a parsing of $S$
and $|S''| < |S''S_{\mathcal{N}}| = \rhs{S',G}$ 
and as $(S',G)$ is irreducible,
by Corollary~\ref{cor:irreducible_is_small} it is small,
i.e.\ $\rhs{S',G} = \Ocomp\left(\frac{|S|}{\log_\sigma |S|}\right)$.
Hence also $|S''| = \Ocomp\left(\frac{|S|}{\log_\sigma |S|}\right)$ and by Theorem~\ref{thm:main_estimation} we have:
\begin{equation}
\label{eq:irreducible_entropy_concatenation_S''_entropy}
|S''|H_0(S'') \leq |S|H_k(S) + o(|S|\log \sigma) \enspace .
\end{equation}

Each nonterminal is present in $S''$:
by \IGref{2} each nonterminal had at least two occurrences
in the right-hand sides of $(S', G)$,
and only one occurrence was removed in the creation of $S''$.
Now, consider the string $S''S''$,
by the above reasoning we can reorder characters of $S''S''$
and remove some of them to obtain the string $S'' S_{\mathcal{N}}$.
Thus, as by Lemma~\ref{lem:entropy_removing_letters}
removing and reordering letters from string $w$ can only decrease $|w|H_0(w)$ we have:
\begin{align*}
|S_G|H_0(S_G)
&=
(|S''S_{\mathcal{N}}|) H_0(S'' S_{\mathcal{N}}) &\text{same symbol count}\\
&\leq
(|S''S''|)H_0(S''S'') &\text{greater symbol count}\\
&=
2 |S''| H_0(S'') \\
&\leq
2|S|H_k(S) + o(|S|\log\sigma) &\text{by \eqref{eq:irreducible_entropy_concatenation_S''_entropy}} \enspace,
\end{align*}
and the same bound holds for entropy coding,
which by Lemma~\ref{lem:huffman_entropy_coder} consumes $|S_G|H_0(S_G) + \Ocomp(|S_G|)$ bits,
as $|S_G| = \frac{|S|}{\log_\sigma |S|} = \Ocomp(|S|\log \sigma)$.
\end{proof}

\begin{proof}[Proof of Theorem~\ref{thm:irreducible_naive}]
Let $(S', G)$ be an irreducible full grammar generating $S$
and $S_G$ be a concatenation of strings in $\grhs{(S', G)}$.
Since $(S', G)$ is irreducible by Corollary~\ref{cor:irreducible_is_small}
$|S_G| = \Ocomp\left(\frac{|S|}{\log_\sigma |S|}\right)$.
Moreover, by \IGref{3} no digram occurs twice (without overlaps)
in the right-hand sides of $(S', G)$.
Therefore there exist $ \frac{1}{3} |S_G|$
non-overlapping pairs in $S_G$ that occur exactly once:
we can pair letters in each production, except the last letter if production's length is odd.
Let $Y_{S_G}$ be a parsing of $S_G$ into those pair
and single symbols.
Then
\begin{align*}
|Y_{S_G}|H_0(Y_{S_G})
&\geq 
\frac{1}{3} |S_G| \log \frac{1}{3} |S_G|\\
&=
\frac{1}{3} |S_G| \log |S_G| - \Ocomp(|S_G|)\\
&=
\frac{1}{3} |S_G| \log |S_G| - o(|S| \log \sigma)
\enspace .
\intertext{	On the other hand,
	applying Theorem~\ref{thm:main_estimation}
	for $k=0$
	and using the estimation on $|L|H_0(L) = \Ocomp(|S_G|)$ from
	Lemma~\ref{lem:lengths_entropy}
	(see the Appendix Section~\ref{sec:appendix_c})
	we get:
}	|S_G|H_0(S_G)
&\geq
|Y_{S_G}|H_0(Y_{S_G}) - o(|S|\log \sigma)
\intertext{and thus}
|S_G|H_0(S_G)
&\geq
\frac{1}{3} |S_G| \log |S_G| - o(|S|\log \sigma) \enspace.
\intertext{Now, from Theorem~\ref{thm:irreducible_entropy_concatenation}:}
2|S| H_k(S) + o(|S| \log \sigma )
&\geq
|S_G|H_0(S_G)\\
&\geq
\frac{1}{3} |S_G| \log |S_G| - o(|S|\log \sigma)	
\intertext{and so}
|S_G| \log |S_G| &\leq 6|S|H_k(S) + o(|S|\log \sigma) \enspace .
\end{align*}
The naive coding of $S_G$ uses at most
$|S_G| \log |S_G| + |S_G| = |S_G| \log |S_G| + o(|S| \log \sigma)$ bits,
note that $S_G$ includes all letters from the original alphabet.
\end{proof}

\subsection{Proofs for Section~\ref{sec:greedy} (\Greedy) }
We state properties of \Greedy that are similar to those of \RePair.

\begin{lemma}[cf.~Lemma~\ref{lem:Repairfrequencedoesnotdecrease}]
	\label{greedy_freq_does_not_increase}
	Frequency of the most frequent pair in the working string and grammar right-hand sides
	does not increase during the execution of \Greedy.
\end{lemma}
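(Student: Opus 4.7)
The plan is to show that a single iteration of \Greedy cannot introduce any pair with higher frequency than the current maximum; iterating this over all rounds then gives the lemma, mirroring the pattern of Lemma~\ref{lem:Repairfrequencedoesnotdecrease}. In one iteration \Greedy picks a substring $w = w_1 \cdots w_\ell$ with $\ell \geq 2$ together with a set of $k \geq 2$ pairwise non-overlapping occurrences of $w$ in $\grhs(S',G)$, replaces each chosen occurrence by a fresh nonterminal $X$, and appends the rule $X \to w$. Let $f$ denote the maximum pair frequency in $\grhs(S',G)$ before this step; I aim to show that after the step every pair still has frequency at most $f$.

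First I would observe that a pair whose two symbols are both different from $X$ can only become less frequent: its occurrences at positions untouched by the replacement remain unchanged, its occurrences straddling or lying strictly inside a replaced $w$-block are destroyed, and its occurrences inside the new rule $X \to w$ appear once for each position in $w$ where the pair occurs, whereas previously each of the $k$ replaced copies of $w$ contained exactly that same set of internal occurrences. The net change in the count of such a pair is therefore non-positive, so it still has frequency at most $f$.

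The new pairs that involve $X$ fall into three families, namely $(c, X)$, $(X, c')$, and $(X, X)$, and each is bounded by $f$ via a direct injection into pre-existing pair-occurrences. Every occurrence of $(c, X)$ in the new state comes from a position where, in the previous state, $c$ was immediately followed by a replaced copy of $w$; mapping it to the occurrence of $(c, w_1)$ located at that same position gives an injection, since the replaced copies of $w$ are pairwise non-overlapping. Thus the new frequency of $(c, X)$ is at most the old frequency of $(c, w_1)$, which is at most $f$; the cases $(X, c')$ and $(X, X)$ are handled identically using the pre-existing pairs $(w_\ell, c')$ and $(w_\ell, w_1)$, respectively, the latter occurring at the join of two adjacent replaced copies of $w$.

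The one delicate point is making the bookkeeping precise when $w$ has internal repetitions, so that a single replaced occurrence of $w$ simultaneously destroys many internal pair-occurrences, destroys two boundary pair-occurrences, and creates two new pair-occurrences involving $X$. Handling this cleanly amounts to identifying each pair-occurrence with a canonical position, for example that of its first letter, and then checking that the sets of positions used in the deletions and in the injections above are pairwise disjoint, so that the local counts genuinely combine into the claimed global bound. Once this verification is done, the inequality $f' \leq f$ is immediate, and the lemma follows by induction on the iteration count.
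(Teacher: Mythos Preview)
Your proposal is correct and follows essentially the same line as the paper's proof: pairs not containing the fresh nonterminal $X$ can only lose occurrences, and each new pair $(c,X)$, $(X,c')$, $(X,X)$ injects into old occurrences of $(c,w_1)$, $(w_\ell,c')$, $(w_\ell,w_1)$ respectively, so no pair exceeds the previous maximum $f$. The paper's write-up is terser (it just names the witness pair $w_\ell B$ and waves at the symmetric cases), while you spell out all three $X$-cases and the non-$X$ case explicitly; your third-paragraph worry about disjointness is unnecessary, since each new pair is bounded \emph{separately} by a single old pair's count and no cross-pair double-counting argument is needed.
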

\begin{proof}
	Assume that after some iteration which introduced
	symbol $X \leftarrow w$
	frequency of some pair $AB$ increased.
	Then the only possibility is that either $A = X$ or $B=X$,
	as otherwise the frequency of $AB$ cannot increase.
	By symmetry let us assume that $A = X$,
	for the moment assume also that $A \neq B$.
	Let $A_k$ be the last symbol of $w$;
	then $A_k B$ occurred at least as many times as $AB$
	in the working string and grammar before replacement.
	The case with $A = B$ is shown in the same way;
	the case with $X = B$ is symmetric.
\end{proof}

\begin{lemma}[cf.~Lemma~\ref{lem:most_frequent_and_number_of_nonterminals}]
	\label{greedy_grammar_size_pair}
	Let $z$ be a frequency of the most frequent pair in the working string and grammar right-hand sides at some point in execution of \Greedy.
	Then at this point the number of nonterminals in the grammar is at most $\nonterminals{G} \leq \frac{2|S|}{z-4}$.
\end{lemma}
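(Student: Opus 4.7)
The plan is to mirror the argument for \RePair (Lemma~\ref{lem:most_frequent_and_number_of_nonterminals}), but with a careful accounting of how much each \Greedy iteration shrinks $\rhs{S', G}$. First I would show the key per-iteration bound: if at some iteration the most frequent pair in $(S', G)$ has frequency at least $z$, then \Greedy must reduce $\rhs{S', G}$ by at least $z - 2$. This is because \Greedy chooses the substring maximizing the size drop; considering the specific option of replacing a most-frequent pair (so $|w| = 2$, $f \geq z$), the resulting change in $\rhs{S', G}$ from replacing $f$ occurrences of $w$ with a fresh nonterminal $X$ and adding the rule $X \to w$ is $-f|w| + f + |w| = -(f-1)(|w|-1) + 1$, i.e.\ a decrease of $(f-1)(|w|-1) - 1 \geq z - 2$. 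So the optimum chosen by \Greedy decreases $\rhs{S', G}$ by at least this amount.

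Next I would combine this per-iteration bound with Lemma~\ref{greedy_freq_does_not_increase}, which says the frequency of the most frequent pair is non-increasing through \Greedy's execution. Hence, at every earlier iteration the frequency of the most frequent pair was at least $z$, so every earlier iteration also decreased $\rhs{S', G}$ by at least $z - 2$.

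Finally, noting that $\rhs{S, \emptyset} = n$ initially and $\rhs{S', G} \geq 0$ throughout, after $k$ iterations we have $0 \leq n - k(z-2)$, and since each iteration introduces exactly one new nonterminal we have $\nonterminals{G} = k \leq \frac{n}{z-2}$, as claimed.

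The only mild subtlety I expect is verifying the ``at least $z-2$'' bound cleanly: one has to be sure to count symbols on both sides (the $f$ old occurrences replaced by $f$ copies of $X$, plus the new production of length $|w|$) consistently with the paper's definition of $\rhs{S, G}$ as including both the starting string and all grammar right-hand sides. Once that bookkeeping is fixed, the lemma follows immediately from the monotonicity of the pair frequency.
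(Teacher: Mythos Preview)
Your proposal is correct and follows essentially the same argument as the paper: compute the size decrease $(f_w-1)(|w|-1)-1$ from a single replacement, specialize to the most frequent pair to get a per-iteration drop of at least $z-2$ (since \Greedy maximizes the drop), invoke Lemma~\ref{greedy_freq_does_not_increase} to propagate this bound to all earlier iterations, and divide the initial size $n$ by $z-2$. The bookkeeping you flag as the ``mild subtlety'' is exactly what the paper does, and your accounting $-f|w|+f+|w|$ is correct.
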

\begin{proof}
	Replacing $f_w$ occurrences of a substring $w$ of length $|w|$
	decreases the $\rhs{S',G}$ by
	\begin{equation*}
	\underbrace{f_w(|w|-1)}_{\text{$w$ is replaced with one symbol}} - \underbrace{|w|}_{\text{rule of length $|w|$ added}} = 
	(f_w -1)(|w| - 1) - 1 \enspace .
	\end{equation*}
	In particular, replacing $f$ occurrences of a pair of symbols
	shortens the grammar by $f-2$ symbols.
	Now, by Lemma~\ref{greedy_freq_does_not_increase}
	in each previous iteration, there was a pair with frequency
	at least $z$.
	As \Greedy replaces the string which shortens
	the working string and right-hand sides of rules
	by the maximal possible value,
	in each previous iteration $|S'| + \rhs{G}$
	was decreased by at least $z/2 - 2 = (z-4)/2$
	(note that occurrences of pairs may overlap)
	so the maximal number of previous iterations is $\frac{2|S|}{z-4}$
	and this is also the bound on the number of added nonterminals.
\end{proof}

\begin{lemma}[\cite{SmallestGrammar}]
\label{lem:greedy_properties}
Greedy produces irreducible grammar.
At each iteration of \Greedy the 
conditions \IGreftwo{1}{2} are satisfied for the grammar $(S, G)$,
i.e.\ no two different nonterminals have the same expansions
and each nonterminal is used a least twice.
\end{lemma}
\begin{proof}
Both claims were proved in~\cite{SmallestGrammar},
the first one is explicitly stated,
the second claim follows from \Greedy being
a global algorithm (see~\cite[Section D]{SmallestGrammar}).
\end{proof}

\begin{proof}[Proof of Theorem~\ref{GreedyStopped}]
Let $(S', G)$ be a full grammar generating $S$ at some point in the execution of \Greedy.
As \Greedy produces irreducible grammar
and at each iterations conditions \IGreftwo{1}{2} are satisfied
by Lemma~\ref{lem:greedy_properties},
we can apply Lemma~\ref{lemma_grammar_size_irreducible}:
if $\rhs{S', G} \geq \frac{ (2\xi + 10\beta) |S|}{\log_\sigma |S|} + 8|S|^\frac{2}{\beta} + 2$
then there exists a digram occurring at least
$\frac{\xi |S|^{1-\frac{2}{\beta}}}{\log_\sigma |S|} + 4$ times.
Lemma~\ref{greedy_grammar_size_pair} applied at this point in the execution
gives the bound on the number of nonterminals:
\begin{align*}
\nonterminals{G}
&\leq
\frac{2|S|}{ \left(\frac{\xi |S|^{1-\frac{2}{\beta}}}{\log_\sigma |S|} + 4\right) - 4}\\
&\leq
\frac{2}{\xi}|S|^\frac{2}{\beta}\log_\sigma |S| \enspace .
\end{align*}
As some digram occurs $\frac{\xi |S|^{1-\frac{2}{\beta}}}{\log_\sigma |S|} + 4 > 2$ times,
then there must be the next iteration, and thus there must exist iteration when
$\rhs{S', G} < \frac{ (2\xi + 10\beta) |S|}{\log_\sigma |S|} + 8|S|^\frac{2}{\beta} + 2$
for the first time.
The size of the grammar at this moment is at most
$\frac{2}{\xi}|S|^\frac{2}{\beta}\log_\sigma |S| + 1$.

As at this moment $\rhs{S', G} = \Ocomp\left( \frac{|S|}{\log_\sigma |S|} \right)$
and $\nonterminals{G}\log |S| = o(|S|)$ for $\beta > 2$,
by Lemma~\ref{lem:entropy_coding_concatenation}
the entropy coding uses at most $|S|H_k(S) + o(|S|\log\sigma)$ bits.
\end{proof}

\begin{proof}[Proof of Theorem~\ref{thm:greedy_naive}]
Let $(S'', G')$ be a full grammar produced by \Greedy at the end of the execution
and let $S'_G$ be the concatenation of the right-hand sides of $(S'', G')$.

Similarly as in the proof of Theorem~\ref{thm:RePairToTheEnd},
we will lower bound the entropy of the grammar $(S', G)$,
which is a grammar at a certain iteration,
and apply the results for weakly non-redundant grammars,
see Section~\ref{par:weakly_non_red_grammars}.

Fix some $\epsilon > 0$ and consider the iteration in which
$\nonterminals{G} =
\left\lceil
\frac{|S|}{\log^{1 + \epsilon} |S|}
\right\rceil$;
if there is no such an iteration then consider the grammar produced at the end.
Let $(S', G)$ be a full grammar at this iteration,
and let $S_G$ be a concatenation of the right-hand sides of $(S', G)$.

By Lemma~\ref{greedy_grammar_size_pair} 
if the size of the grammar is at least $2\sqrt{|S|}\log_\sigma |S| + 1$,
then every pair occurs less than $\sqrt{|S|}{\log_\sigma|S|} + 4$ times.
Moreover,  as grammar produced by \Greedy at each iteration satisfies~\IGreftwo{1}{2},
by Lemma~\ref{lemma_grammar_size_irreducible} with $\xi = 1, \beta = 4$
if every pair occurs
less than $\sqrt{|S|}{\log_\sigma|S|} + 4$ times then
$\rhs{S', G} < \frac{ 42 |S|}{\log_\sigma |S|} + 8\sqrt{|S|} + 2$.
Thus if the grammar is of size at least $2\sqrt{|S|}\log_\sigma |S| + 1$
then $|S_G| < \frac{ 42 |S|}{\log_\sigma |S|} + 8\sqrt{|S|} + 2$.
As
$2\sqrt{|S|} \cdot \log_\sigma |S| + 1
= o\left(\left\lceil \frac{|S|}{\log^{1 + \epsilon} |S| } \right \rceil \right) $
it follows that
$|S_G| = \Ocomp\left(\frac{|S|}{\log_\sigma |S|}\right)$.
Moreover, by Lemma~\ref{lem:entropy_coding_concatenation} we have that:
\begin{align}
\label{eq:greedy_sg_upper_bound}
|S_G|H_0(S_G) &\leq |S|H_k(S) + \nonterminals{G} \log |S| + o(|S|\log \sigma) \notag \\
&\leq |S|H_k(S) + o(|S|\log \sigma)
\end{align}

We assume that $|S_G| > \frac{|S|}{\log^{1 + \epsilon} |S|}$,
otherwise, as
$\rhs{S', G}$ never increases, it holds that $|S'_G| \leq |S_G|$,
and then $|S'_G|H_0(S'_G) \leq |S'_G| \log |S'_G|  \leq |S_G| \log |S_G|= o(|S|)$
thus both the naive and entropy coding takes $o(S)$ bits.

Now, we have that either $\nonterminals{G} =
\left\lceil
\frac{|S|}{\log^{1 + \epsilon} |S|} \right\rceil$,
or
Greedy finished the execution with smaller grammar, i.e.
$\nonterminals{G} < \left\lceil \frac{|S|}{\log^{1 + \epsilon} |S|} \right\rceil$.
If $\nonterminals{G} = \left\lceil \frac{|S|}{\log^{1 + \epsilon} |S|} \right\rceil$
then by Lemma~\ref{greedy_grammar_size_pair} no digram occurs more
than $\frac{2|S|}{\nonterminals{G}} + 4 \leq 2\log^{1+\epsilon} |S|  + 4$ times
on the right-hand sides of $(S', G)$,
and if \Greedy finished the execution with smaller grammar
we have that no digram occurs more than once (without overlaps) on the right-hand sides of $(S', G)$.
Moreover, we can find at least $\frac{1}{2}(\rhs{S', G} - \nonterminals{G}-1)$ digrams
on the right-hand side of $(S', G)$, as we can pair symbols naively in each right-hand side;
we subtract $\nonterminals{G}$ factor because rules and starting string can have odd length.
Thus $S_G$ contains $\frac{1}{2}(|S_G| - \nonterminals{G}-1)$ digrams which occur at most
$2\log^{1+\epsilon} |S|  + 4$ times each.
Let $Y_{S_G}^P$ be a parsing of $S_G$ into phrases of length $1$ and $2$ where
the phrases of length $2$ correspond to the digrams mentioned above,
and let $L_P = \Lengths(Y_{S_G}^P)$. 
By applying Theorem~\ref{thm:main_estimation} with $k=0$ we get that:
\begin{align}
\label{eq:GreedyParsingEntropy}
|S_G|H_0(S_G) + |L_P|H_0(L_P) \geq |Y_{S_G}^P|H_0(Y_{S_G}^P) \enspace,
\end{align}
note that as $L_P \in \{1, 2\}^*$, it holds that
$|L_P|H_0(L_P) \leq |L_P|\log |\{1, 2\}| \leq |S_G|$.

On the other hand, as each phrase in $Y_{S_G}^P$ occurs at most 
$2\log^{1+\epsilon} |S| + 4$ times we can lower bound the entropy $|Y_S^P|H_0(Y_S^P)$, 
by lower bounding each summand corresponding to the phrase of length $2$ in
the definition of $H_0$
by $\log \frac{|Y_{S_G}^P|}{2\log^{1+\epsilon} |S|  + 4 }$, i.e\
\begin{align}
\label{eq:GreedyYSPandentropy}
|Y_{S_G}^P|H_0(Y_{S_G}^P)
&\geq \notag
\frac{1}{2}(|S_G| - \nonterminals{G}-1) \log\left(
\frac{|Y_{S_G}^P|}{2\log^{1+\epsilon} |S| + 4}\right)\\
&\geq \notag
\frac{1}{2}(|S_G| - \nonterminals{G}-1) \log\left(
\frac{ \frac{1}{2}(|S_G|-1) }{2\log^{1+\epsilon} |S| + 4}\right)\\
&\geq \notag
\frac{|S_G|}{2}\log\left( \frac{ |S_G|-1 }{4\log^{1+\epsilon} |S| + 8}\right)
- \frac{\nonterminals{G} + 1}{2}\log\left( \frac{ |S_G|-1 }{4\log^{1+\epsilon} |S| + 8}\right)\\
&\geq \notag
\frac{|S_G|}{2}\log\left( |S_G|-1\right)
- \frac{|S_G|}{2} \log (4\log^{1+\epsilon} |S| + 8)
- \gamma_1\\
\intertext{where $\gamma_1 = \frac{\nonterminals{G} + 1}{2}\log\left( \frac{ |S_G|-1 }{4\log^{1+\epsilon} |S| + 8}\right)$,
observe that as $\nonterminals{G} = \left\lceil \frac{|S|}{\log^{1 + \epsilon} |S|} \right\rceil$
and $|S_G|\leq |S|$ we have that $\gamma_1 = o(S)$ }
&\geq \notag
\frac{|S_G|}{2}\log\left( |S_G|-1\right)
- \gamma_2
- \gamma_1\\
\intertext{where $\gamma_2 = \frac{|S_G|}{2} \log (4\log^{1+\epsilon} |S| + 8)$,
observe that as $|S_G| = \Ocomp\left(\frac{|S|}{\log_\sigma |S|}\right)$
we have $\gamma_2 = o(|S|\log \sigma)$}
&\geq \notag
\frac{|S_G|}{2}\log |S_G|
- \frac{|S_G|}{2} \log \left( \frac{|S_G|}{|S_G|-1} \right)
- \gamma_2
- \gamma_1\\
&=
\frac{|S_G|}{2}\log |S_G| - \gamma \enspace ,
\end{align}
where $\gamma = \frac{|S_G|}{2} \log \left( \frac{|S_G|}{|S_G|-1} \right) + \gamma_2 + \gamma_1$.
Observe that as $\Ocomp\left(x\log\frac{x}{x-1}\right) = \Ocomp(1)$
we have that $\gamma = o(|S|\log \sigma)$.

Combining~\eqref{eq:GreedyParsingEntropy} and~\eqref{eq:GreedyYSPandentropy}
gives us lower bound on the entropy of $S_G$:
\begin{align}
\label{eq:greedy_sg_lower_bound}
|S_G|H_0(S_G) + |L_P|H_0(L_P) \geq \frac{|S_G|}{2}\log |S_G| - \gamma \notag\\
|S_G|H_0(S_G) \geq \frac{|S_G|}{2}\log |S_G| - \gamma' \enspace ,
\end{align}
where $\gamma'= \gamma + |L_P|H_0(L_P) = o(|S|\log \sigma)$.

We recall that $|S'_G| = \rhs{S'', G'} \leq \rhs{S', G} = |S_G|$
(as the size of the full grammar does not increase during the execution,
see Lemma~\ref{greedy_freq_does_not_increase}),
moreover in fully naive encoding
each symbol of the $(S'', G')$ is encoded
using at most $\lceil \log |S_G'| \rceil$ bits.
Thus, the fully naive encoding consumes at most:
\begin{align*}
|S'_G|\log|S'_G| + |S'_G|
&\leq |S_G|\log|S_G| + |S_G|\\	
&\leq 2|S_G|H_0(S_G) + 2\gamma' + |S_G| & \text{by~\eqref{eq:greedy_sg_lower_bound}} \\
&\leq 2|S|H_k(S) + o(|S|\log \sigma) \enspace, & \text{by~\eqref{eq:greedy_sg_upper_bound}}
\end{align*}
which finishes the proof for the case of fully naive encoding.

In the case of entropy coding, we make the same observation
as in the proof of Theorem~\ref{thm:RePairToTheEnd}:
we can look at the last iterations as if the algorithm had started
with string $S_G$ and would have finished with grammar $(S'', G'\setminus G)$,
note that concatenation of the right-hand sides of $(S'', G'\setminus G)$ gives $S'_G$.
As $(S'', G'\setminus G)$ meets the conditions of a weakly non-redundant grammar
and the entropy of $S_G$ is large~\eqref{eq:greedy_sg_lower_bound}
we can apply Lemma~\ref{lemma15replacemententropy} with $n=|S_G|$:
\begin{align*}
|S'_G|H_0(S'_G) &\leq \frac{3}{2}|S_G|H_0(S_G) + \Ocomp(|S_G| + \gamma')\\
&\leq \frac{3}{2} |S|H_k(S) + o(|S|\log \sigma) \enspace, &\text{by~\eqref{eq:greedy_sg_upper_bound}}
\end{align*}
thus by Lemma~\ref{lem:huffman_entropy_coder} the claim holds for the entropy coding.
\end{proof}

\section{Additional material for Section~\ref{sec:entropy_bound_for}}
\label{sec:appendix_c}
The idea of assigning values to phrases (see Definition~\ref{def:phrasecost})
was used before, for instance by Kosaraju and Manzini in 
estimations of entropy of \lzss and \lzso~\cite{KosarajuManzini99},
yet their definition depends on $k$ symbols
preceding the phrase.
This idea was adapted from
methods used to estimate the entropy of the source model~\cite{wyner1994sliding},
see also~\cite[Sec.~13.5]{ElementfofInformationTheory2006}.
A similar idea was used
in the construction of compressed text representations~\cite{GonzalezNStatistical}.
This work used a constructive argument:
it calculated the bit strings for each phrase
using arithmetic coding and context modeller.
Later 
it was observed that those two can be replaced with
Huffman coding~\cite{FerraginaV07SimpStat}.
Still, both of these representations were based on the assumption
that text is parsed into short (i.e\ $\alpha \log_\sigma n, \alpha < 1$) phrases of equal length
and used specific compression methods in the proof.

	Observe that the definition of phrase probability and parsing cost
	are also valid for $k=0$, as we assumed that
	$w[i\ldots j] =\epsilon$ when $i > j$, and $|S|_\epsilon = |S|$.

\begin{proof}[Proof Lemma~\ref{lem:cost_and_kentropy}]
	The lemma follows from the definition of $C_k(Y_S)$ and $H_k$:
	in $|S|H_k(S)$
	each letter $a$ occurring in context $v$ in $S$
	contributes $\log \prob(a |v)$
	 to the sum.
	Now observe that $C_k(Y_S)$ is almost the same as $|S|H_k(S)$,
	but the first $k$ letters of each phrase
	contribute $\log \sigma$
	instead of $\log \prob(a | v)$.
\end{proof}

\begin{proof}[Proof lemma~\ref{lem:mean_entr_cost}]
	Consider $l$ parsings $Y_S^0,\ldots,  Y_S^{l-1}$ of $S$
	where in $Y_S^i$ the first phrase has $i$ letters
	and the other phrases are of length $l$,
	except maybe the last phrase;
	to streamline the argument, we add a leading empty phrase to $Y_S^0$.
	Denote $Y_S^i = y_{i,0}, y_{i,1}, \ldots$.
	We estimate the sum $\sum_{i=0}^{l-1} C(Y_S^i)$.
	The costs of each of the first phrases is upper-bounded
	by $\log |S|$,
	as for any phrase $y$ the phrase cost $C(y)$ is at most $\log |S|$:
	%
	\begin{align*}
	C(y)
	&=
	- \log \prod _{j=1}^{|y|}  \prob(y[j] \ | \ y[1\ldots j])\\
	&\leq
	- \log \prod _{j=1}^{|y|} \frac{|S|_{y[1\ldots j]}}{|S|_{y[1\ldots j-1]}}\\
	\intertext{Note that by by Definition~\ref{def:k_order_entropy} some factors
	may be $\frac{|S|_{wa}}{|S|_w - 1}$, we bound them by $\frac{|S|_{wa}}{|S|_w}$,
	as $-\log \frac{|S|_{wa}}{|S|_w - 1} \leq -\log \frac{|S|_{wa}}{|S|_w}$  }
	&=
	-\log\left(\frac{|S|_y}{|S|_\epsilon}\right)\\
	&\leq
	\log(|S|/1) \enspace .
	\end{align*}
	Then $\sum_{i=0}^{l-1} C(Y_S^i)$ without the costs
	of all these phrases is:
	\begin{equation}
	\label{eq:sum_of_costs_sum_of_entropies}
	-\sum_{i=0}^{l-1}
	\sum_{p=1}^{|Y_S^i|}
	\sum_{j=1}^{|y_{i,p}|}\log
	\prob(y_{i,p}[j] \ | \ y_{i,p}[1\ldots j-1])
	\end{equation}
	and we claim that this is exactly $|S|\sum_{i=0}^{l-1}H_i(S)$.
	Together with the estimation of the cost of the first phrases
	this yields the claim,
	as
	\begin{equation*}
	\sum_{i=0}^{l-1} C(Y_S^i) \leq l \log |S| + |S|\sum_{i=0}^{l-1}H_i(S)
	\end{equation*}
	and so one of the $l$ parsings has a cost that is at most
	the right-hand side divided by $l$.
	
	To see that~\eqref{eq:sum_of_costs_sum_of_entropies}
	is indeed the sum of entropies
	observe for each position $m$ of the word
	we count $\log$ of the probability of this letter occurring in
	preceding $i= 0, 1, \ldots \min(l-1, m-1)$-letter context exactly once:
	this is clear for $m \geq l$, as the consecutive parsings are offsetted
	by one position;
	for $m < l$ observe that in $Y_0^S, Y_1^S, \ldots Y_{m-1}^S$
	the letter at position $m$ we count $\log$ of
	probability of this letter occurring in
	preceding $m-1, m-2, \ldots, 0$ letter context,
	while in $Y^S_m, Y^S_{m+1}, \ldots$ we include it in
	the first phrase, so it is not counted in~\eqref{eq:sum_of_costs_sum_of_entropies}.
\end{proof}

\begin{proof}[proof of Theorem~\ref{thmHC}]
	We prove only the first inequality,
	the proof of the second one is similar.
	
	Let $Y$ denote the set of different phrases of $Y_S$.
	For each $y\in Y$ define
	%
	\begin{equation*}
	p(y) = \frac{|L|_{|y|}}{|L|} \cdot \mathcal{P}(y) =  \frac{|L|_{|y|}}{|L|} \cdot 2^{-C(y)} \enspace .
	\end{equation*}
	We want to show that $\sum_{y \in Y} p(y) \leq 1$, that is, it satisfies
	the conditions of Lemma~\ref{theoremP}.
	
	First, we prove that  for each $l$ it holds that
	\begin{equation*}
	\sum_{y \in \Gamma^l} \mathcal{P}(y) \leq 1 \enspace .
	\end{equation*}
	This claim is similar to \cite[Lem.~A.1]{KosarajuManzini99},
	we prove it by induction on $l$.
	For $l=1$ we have
	\begin{align*}
	\sum_{s \in \Gamma} \mathcal P(s)
	&=
	\sum_{s \in \Gamma} 
	\prob(s \ | \ \epsilon)\\
	&= \sum_{s \in \Gamma} 
	\frac{|S|_s}{|S|}\\
	&=
	1 \enspace .
	\end{align*}
	For $l > 1$ we group elements in sum by their $l-1$ letter prefixes:
	\begin{align*}
	\sum_{y \in \Gamma^l} \mathcal{P}(y)
	&=
	\sum_{ y'\in \Gamma^{l-1}}
	\sum_{s \in \Gamma} \mathcal{P}(y's)\\
	&=
	\sum_{y'\in \Gamma^{l-1}}
	\sum_{s \in \Gamma} \mathcal{P}(y') \cdot 
	\prob(s \ | y') \\
	&=
	\sum_{y'\in \Gamma^{l-1}}
	\mathcal{P}(y') \cdot \sum_{s \in \Gamma}
	\prob(s \ | y') \\
	&\leq
	\sum_{y'\in \Gamma^{l-1}}
	\mathcal{P}(y') \cdot 1\\
	&\leq 1 \enspace ,
	\end{align*}
	where the last (and only) inequality follows from the induction hypothesis.
	Then	
	\begin{align*}
	\sum_{y \in Y} p(y)
	&=
	\sum_{y \in Y} \frac{|L|_{|y|}}{|Y_S|} \cdot \mathcal{P}(y)\\
	&=
	\sum_{l=1}^{|S|}
	\frac{|L|_l}{|Y_S|} \cdot \sum_{\substack{y: \: y \in Y\\|y| = l }}   \mathcal{P}(y)\\
	&\leq
	\sum_{l=1}^{|S|}
	\frac{|L|_l}{|Y_S|} \cdot 1\\
	&=1 \enspace .
	\end{align*}
	Thus $p$ satisfies the assumption of Lemma~\ref{theoremP}
	and we can apply it on $Y_S$ and $p$:
	\begin{align*}
	|Y_S| H_0(Y_S)
	&\leq
	- \sum_{i=1}^{|Y_S|} \log p(y_i) \\
	&=
	-\sum_{i=1}^{|Y_S|} \log \left(\frac{|L|_{|y_i|}}{|L|} \cdot \mathcal{P}(y_i)\right)\\
	&=
	- \sum_{i=1}^{|Y_S|} \log \frac{|L|_{|y_i|}}{|L|}
	- \sum_{i=1}^{|Y_S|} \log \mathcal{P}(y_i)\\
	&\leq
	|L|H_0(L)  +  \sum_{i=1}^{|Y_S|} C(y_i)\\
	&=
	|L|H_0(L)  +  C(Y_S) \enspace . \qedhere
	\end{align*}
\end{proof}

We now estimate the entropy of lengths $|L|H_0(L)$,
in particular in case of small parsings,
i.e.\ when $|Y_S| = o(|S|)$.
Those estimations include standard results on the entropy of lengths%
~\cite[Lemma~13.5.4]{ElementfofInformationTheory2006}
and some simple calculations.
(Note that a weaker estimation with stronger assumption was used implicitly in the work of Kosaraju and Manzini~\cite[Lemma A.3]{KosarajuManzini99}).

\begin{lemma}[Entropy of lengths]
	\label{lem:lengths_entropy}
	Let $S$ be a string, $Y_S$ its parsing and
	$L = \Lengths(Y_S)$. Then:
	\begin{equation*}
	|L|H_0(L) 
	\leq |L| \log \frac{|S|}{|L|} +  |L|(1 + \log \mathrm{e}) \enspace .
	\end{equation*}
	In particular, if $|Y_S| = \Ocomp\left( \frac{|S|}{\log_\sigma |S|} \right)$ then
	\begin{align*}
	|L| H_0(L) &= o(|S|\log \sigma) \enspace ,
	\intertext{and for any value of $|Y_S|$:}
	|L| H_0(L) &= \Ocomp(|S|) \enspace .
	\end{align*}
\end{lemma}

\begin{proof}
	Introduce random variable $U$ such that $Pr[U=l] = \frac{|L|_l}{|L|}$.
	Then $|L|H_0(L) = |L| H(U)$ and $E[U] = \frac{|S|}{|L|}$,
	where here $H$ is the entropy function for random variables
	and $E$ is the expected value.
	It is known~\cite[Lemma 13.5.4]{ElementfofInformationTheory2006} that:
	\begin{equation*}
	H(U) \leq  E[U+1] \log E[U+1] - E[U] \log E[U] \enspace .
	\end{equation*}
	Translating those results back to the setting
	of empirical entropy we obtain :
	\begin{align*}
	H_0(L)
	&\leq
	\left(1 + \frac{|S|}{|L|} \right)\log\left(1 + \frac{|S|}{|L|}\right) - \frac{|S|}{|L|} \log \frac{|S|}{|L|} \\
	&=
	\log \left( 1 +\frac{|S|}{|L|} \right) +
	\frac{|S|}{|L|} \log \left( \frac{|S|+|L|}{|L|} \cdot \frac{|L|}{|S|} \right) \\
	&\leq
	\log \left(2 \cdot \frac{|S|}{|L|}\right) +
	\frac{|S|}{|L|} \log \left( 1 + \frac{|L|}{|S|} \right)
	\\
	&\leq
	1 + \log \frac{|S|}{|L|} +
	\log\left( \left(1 + \frac{1}{|S|/|L|} \right)^{|S|/|L|}\right)\\
	&<
	1 + \log \frac{|S|}{|L|} + 	\log \mathrm e\enspace.
	\end{align*}
	Multiplying by $|L|$ yields the desired inequality.
	
	Moving to the second claim:
	assume $|Y_S| = \Ocomp\left( \frac{|S|}{\log_\sigma |S|} \right)$,
	thus $|L|=|Y_S| = \Ocomp\left( \frac{|S|}{\log_\sigma |S|} \right)$.
	Then:
	\begin{align*}
	|L|H_0(L) &< |L|\log \frac{\mathrm e |S|}{|L|} + |L|\\
			 &= \Ocomp\left( \frac{|S|}{\log_\sigma |S| } \log
			 \frac{|S|}{|S|/\log_\sigma |S|} \right) + |L|
			 & \text{as $x\log \frac{\mathrm e c}{x}$ is increasing in $x$ for $x \leq c$}
			  \\
			 &= \Ocomp\left( \frac{|S|\log \log_\sigma |S|}{\log_\sigma |S|} \right) + |L|\\
			 &= o(|S|\log \sigma)
	\end{align*}

	Concerning the last claim, when $|L|$ is arbitrary (but at most $|S|$),
	by definition $|L|(1+\mathrm{e}) = \Ocomp(|S|)$
	and the function $f(x) = x \log(c/x)$ is maximized for
	$x = c/\mathrm e$, for any $c$;
	for which it has value $c \frac{\log \mathrm e}{\mathrm e} \in \Ocomp(c)$,
	this is easily shown by computing the derivative.
	Thus we have:
	\begin{align*}
	|L|H_0(L) &< |L|\log \frac{|S|}{|L|} + |L|(1+\log \mathrm e) \\
			  &\leq |S|\frac{\log \mathrm e}{\mathrm e }  + |L|(1+\log \mathrm e)\\
			  &= \Ocomp(|S|) \enspace,
	\end{align*}
	so the lemma holds.
\end{proof}

\begin{proof}[Proofs of Theorem~\ref{thm:main_estimation} and Theorem~\ref{thm:mean_entropy}]
	The first inequality of Theorem~\ref{thm:main_estimation}
	follow from Theorem~\ref{thmHC} and Lemma~\ref{lem:cost_and_kentropy}:
	\begin{align*}
	|Y_S|H_0(S) &\leq C_k(Y_S) + |L|H_0(L)  & \text{by Theorem~\ref{thmHC}}\\
				&\leq |S|H_k(S) + |Y_S|k\log \sigma +  |L|H_0(L) \enspace .
				 &\text{by Lemma~\ref{lem:cost_and_kentropy}}
	\end{align*}
	The second inequality of Theorem~\ref{thm:main_estimation} comes directly form bounds in Lemma~\ref{lem:lengths_entropy}
	and the assumption that $k= o(\log_\sigma |S|)$.
	The bound on the encoding of string $S'$ follows
	directly from Lemma~\ref{lem:huffman_entropy_coder}.

	Similarly, Theorem~\ref{thm:mean_entropy} follows from
	Theorem~\ref{thmHC} and Lemma~\ref{lem:mean_entr_cost}:
	\begin{align*}
		|Y_S|H_0(Y_S) &\leq C(Y_S) + |L|H_0(L) & \text{by Theorem~\ref{thmHC}} \\
					  &\leq \frac{|S|}{l}\sum_{i=0}^{l-1} H_i(S) + \log |S| +  |L|H_0(L)
					  	& \text{by Lemma~\ref{lem:mean_entr_cost}}
	\end{align*}
	In this case, we can bound $|L|H_0(L)$ by $\Ocomp(\log |S|)$
	as parsing from Lemma~\ref{lem:mean_entr_cost}
	consist of phrases of the same length, except for the first and the last one,
	which can be shorter:
	\begin{equation*}
	|L|H_0(L) \leq |L|\log \frac{|L|}{|L|-2} + 2\log |L| = \Ocomp(\log |S|) \qedhere
	\end{equation*}
\end{proof}

\end{document}